\newtheorem{theorem}{Theorem}[section]
\newtheorem{lemma}[theorem]{Lemma}
\newtheorem{definition}[theorem]{Definition}
\newcommand{\eps}{\varepsilon}
\renewcommand{\epsilon}{\varepsilon}
\newcommand{\Rbb}{\mathbb{R}}
\newcommand{\R}{\mathbb{R}}
\newcommand{\A}{{\cal A}}
\newcommand{\G}{{\cal G}}
\renewcommand{\d}{{\mathsf{D}}}
\newcommand{\todo}[2]{}
\begin{document}
\title{Approximate Nearest Neighbor Search in High Dimensions}
\author{Alexandr Andoni \and Piotr Indyk \and Ilya Razenshteyn}
\date{}
\maketitle
\begin{abstract}
The nearest neighbor  problem is defined as follows: Given a set $P$ of $n$ points in some metric space $(X,\d)$, build a data structure that, given any  point $q$, returns a point in $P$ that is closest to $q$ (its ``nearest neighbor'' in $P$).  The data structure  stores additional information about the set $P$, which is then used to find the nearest neighbor without computing all distances between~$q$ and $P$. The problem has a wide range of applications in machine learning, computer vision, databases and other fields.

To reduce the time needed to find nearest neighbors  and the amount of memory used by the data structure, one can formulate the {\em approximate} nearest neighbor problem, where the the goal is to return any point $p' \in P$ such that the distance from $q$ to $p'$ is at most $c \cdot \min_{p \in P} \d(q,p)$,  for some  $c\ge 1$. Over the last two decades, many efficient solutions to this problem were developed. In this article we survey these developments, as well as their connections to  questions in geometric functional analysis and combinatorial geometry.

\end{abstract}
\section{Introduction}

The {\em nearest neighbor}  problem is defined as follows: Given a set $P$ of $n$ points in a metric space defined over a set $X$ with distance function $\d$, build a data structure\footnote{See Section \ref{ss:model} for a discussion about the computational model.} that, given any ``query'' point $q \in X$, returns its ``nearest neighbor''  $\arg \min_{p \in P} \d(q,p)$. A particularly interesting and well-studied case is that of nearest neighbor in geometric spaces, where $X = \R^d$ and the metric~$\d$ is induced by some norm. The problem has a wide range of applications in machine learning, computer vision, databases and other fields, see~\cite{shakhnarovich2006nearest, andoni2008near} for an overview.

 A simple solution to this problem would  store the set $P$ in memory, and then,  given $q$,  compute all distances $\d(q,p)$ for $p \in P$ and select the point  $p$ with the minimum distance. Its disadvantage  is the computational cost: computing all $n$ distances  requires  at least $n$ operations. Since in many applications $n$ can be as large as $10^9$ (see e.g.,~\cite{sundaram2013streaming}),  it was necessary to develop faster methods that find the nearest neighbors without explicitly computing all distances from $q$.  Those methods compute and store additional information about the set $P$, which is then used to find nearest neighbors more efficiently.  To illustrate this idea, consider another simple solution for the case where $X=\{0,1\}^d$. In this case, one could precompute and store in memory the answers to all $2^d$ queries $q \in X$;  given $q$, one could then return its nearest neighbor by performing only a single memory lookup. Unfortunately, this approach requires memory of size $2^d$, which again is inefficient ($d$ is at least $10^3$ or higher in many applications).

The two aforementioned solutions can be viewed as extreme points in a
tradeoff between the time
to answer a query (``query time'') and the amount of memory used (``space'')\footnote{There are other important
  data structure parameters, such as the time needed to construct
  it. For the sake of simplicity, we will mostly focus on query time
  and space.}.   The study of this tradeoff dates back to the work of
Minsky and Papert (\cite{minsky1969perceptrons}, p. 222), and has
become one of the key topics in the field of {\em computational
  geometry}~\cite{preparata1985introduction}. During the 1970s and
1980s many efficient solutions have been discovered for the case when
$(X,\d)=(\R^d, \ell_2)$ and $d$ is a constant independent of $n$.  For
example, for $d=2$, one can construct a data structure using $O(n)$
space with $O(\log n)$ query time~\cite{lipton1980applications}. Unfortunately, as the dimension
$d$ increases, those data structures become less and less
efficient. Specifically, it is known how construct data structures with
$O(d^{O(1)} \log n)$ query time, but using $n^{O(d)}$ space
(\cite{meiser1993point}, building
on~\cite{clarkson1988randomized}).\footnote{This exponential
  dependence on the dimension is due to the fact that those data
  structures compute and store the {\em Voronoi decomposition} of $P$,
  i.e., the decomposition of $\R^d$ into cells such that all points in
  each cell have the same nearest neighbor in $P$.  The combinatorial
  complexity of this decomposition could be as large as
  $n^{\Omega(d)}$ \cite{caratheodory1911variabilitatsbereich}.  } Furthermore, there is
evidence
   that  data structures with query times  of the form $n^{1-\alpha} d^{O(1)}$ for some constant $\alpha>0$ might be difficult to construct efficiently.\footnote{If such a data structure could be constructed in
  polynomial time $n^{O(1)} d^{O(1)}$, then the Strong Exponential
  Time Hypothesis~\cite{virgi} would be
  false. This fact essentially follows from~\cite{williams2005new}, see the discussion after Theorem 1 in~\cite{ahle2016complexity}.}

The search for efficient solutions to the nearest neighbor problem has led to the question whether better space/query time bounds could be obtained if the data structure was allowed to report {\em approximate} answers. In the {\em $c$-approximate nearest neighbor} problem, the data structure can report any point $p' \in P$ within distance $c \cdot \min_{p \in P} \d(q,p)$ from $q$; the parameter $c \ge 1$ is called ``approximation factor''.  The work of Arya, Mount and Bern~\cite{arya1993approximate,bern1993approximate} showed that allowing $c>1$  indeed leads to better data structures, although their solutions still retained exponential dependencies on $d$ in the query time or space bounds~\cite{arya1993approximate}  or required that the approximation factor $c$ is polynomial in the dimension $d$~\cite{bern1993approximate}. These bounds have been substantially improved over the next few years, see e.g., \cite{clarkson1994algorithm,chan1998approximate,arya1998optimal,kleinberg1997two} and the references therein.

In this article we survey the ``second wave'' of approximate nearest neighbor data structures, whose query time and space bounds are polynomial in the dimension $d$. \footnote{Due to the lack of space, we will not cover several important related topics, such as data structures for point-sets with low intrinsic dimension~\cite{clarkson2006nearest}, approximate furthest neighbor, approximate nearest line search~\cite{mahabadi2014approximate} and other variants of the problem.} At a high level, these data structures are obtained in two steps. In the first step,  the approximate {\em nearest} neighbor problem is reduced to its ``decision version'', termed approximate {\em near} neighbor (see e.g.~\cite{har2012approximate}). The second step involves constructing a data structure for the latter problem. In this survey we focus mostly on the second step.

The approximate near neighbor problem is  parameterized by an approximation factor $c>1$ as well as a ``scale parameter'' $r>0$, and defined as follows.

\begin{definition}[$(c,r)$-Approximate Near Neighbor]
Given a set $P$ of $n$ points in a metric space $(X,\d)$, build a data structure ${\cal S}$ that, given any
query point $q \in X$ such that the metric ball $B_{\d}(q,r)=\{p \in X:
\d(p,q)\le r\}$ contains a point in $P$,  ${\cal S}$ returns any point in
$B_\d(q,cr) \cap P$.
\end{definition}

Note that the definition does not specify the behavior of the data structure if the ball $B_\d(q,r)$ does not contain any point in $P$. We omit the index $\d$ when it is clear from the context.

The above definition applies to algorithms that are {\em deterministic}, i.e., do not use random bits. However, most of the approximate near neighbor algorithms in the literature are {\em randomized}, i.e., generate and use random bits  while constructing the data structure. In this case, the data structure~${\cal S}$ is a random variable, sampled from some distribution. This leads to the following generalization.

\begin{definition}[$(c,r,\delta)$-Approximate Near Neighbor]
\label{d:crdann}
Given a set $P$ of $n$ points in a metric space $(X,\d)$, build a data structure ${\cal S}$ that, given any
query point $q \in X$ such that  $B(q,r) \cap P \neq \emptyset$,
\[  \Pr_{\cal S}[ {\cal S}  \mbox{\ returns any point in\ } B(q,cr) \cap P] \ge 1-\delta \]
\end{definition}

The probability of failure $\delta$ of the data structure can be reduced by independently repeating the process several  times, i.e., creating  several data structures.  Therefore,  in the rest of the survey we will set $\delta$ to an arbitrary constant, say, $1/3$. We will use $(c,r)$-ANN to denote $(c,r,1/3)$-Approximate Near Neighbor.

\subsection{Computational model}
\label{ss:model}
For the purpose of this survey, a data structure of size $M$ is an array $A[1 \ldots M]$ of numbers (``the memory''), together with an associated algorithm that, given a point $q$, returns a point in $P$ as specified by the problem. The entries $A[i]$ of $A$ are called ``memory cells''. Due to lack of space, we will not formally define other details of the computational model, in particular what an algorithm  is, how to measure its running time, what is the range of the array elements $A[i]$, etc. There are several ways of defining these notions, and the material in this survey is relatively robust to the variations in the definitions.  We note, however, that one way to formalize these notions is to restrict all numbers, including point coordinates, memory entries, etc, to rational numbers of the form $a/b$, where  $a  \in \{-n^{O(1)} \ldots n^{O(1)}\}$ and $b=n^{O(1)}$,  and to define query time as the maximum number of memory cells accessed to answer any query $q$.

For an overview of these topics and formal definitions, the reader is referred to~\cite{miltersen1999cell}.  For
a~discussion specifically geared towards mathematical audience,  see \cite{fefferman2009fitting}.

\section{Data-independent approach}
\label{s:di}

The first approach to the approximate near neighbor problem has been
via data-independent data structures. These are data structures where
the memory cells accessed by the query algorithm do not depend on the
data set $P$, but only on $q$ and (for randomized data structures) the
random bits used to construct the data structure.  In this section, we
describe two methods for constructing such data structures, based on
{\em oblivious dimension-reduction}, and on {\em randomized space
  partitions}. These methods give ANN data structures for the $\ell_1$
and $\ell_2$ spaces in particular.

\subsection{ANN via dimension reduction}

As described in the introduction, there exist ANN data
structures with space and query time at most exponential in the
dimension $d$. Since exponential space/time bounds are unaffordable for large $d$, a natural approach is
to perform a {\em dimension reduction} beforehand, and then solve the
problem in the lower, reduced dimension. The main ingredient of such
an approach is a map $f:\R^d\to \R^k$ that preserves
distances up to a $c=1+\eps$ factor, where $k=O(\log n)$. Then a space bound
exponential in $k$ becomes polynomial in $n$.

Such dimension-reducing maps $f$ indeed exist for the $\ell_2$ norm if we allow
randomization, as first shown in the influential paper by Johnson and Lindenstrauss:

\begin{lemma}[\cite{johnson1984extensions}]
\label{lem:JL}
Fix dimension $d\ge 1$ and a ``target'' dimension $k<d$. Let $A$ be the
projection of $\R^d$ to  its $k$-dimensional subspace selected uniformly at random (with respect to the Haar measure), and define $f:\R^d\to \R^k$ as $f(x)=\tfrac{\sqrt{d}}{\sqrt{k}}Ax$. Then,
there is a universal constant $C>0$, such that for any $\eps\in
(0,1/2)$, and any $x,y\in \R^d$, we have that
$$
\Pr_A\left[\tfrac{\|f(x)-f(y)\|}{\|x-y\|}\in (1-\eps,
  1+\eps)\right]\ge 1-e^{-C\eps^2k}.
$$
\end{lemma}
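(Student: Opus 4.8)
The plan is to reduce the statement to a concentration bound for the squared norm of a random projection, and then invoke standard Gaussian concentration. First I would observe that by linearity and scale-invariance we may assume $\|x-y\| = 1$; write $u = x - y$, a fixed unit vector, so the goal becomes showing $\|f(u)\|^2 = \tfrac{d}{k}\|Au\|^2$ concentrates around $1$. The key distributional fact is that projecting a fixed unit vector onto a uniformly random $k$-dimensional subspace yields the same distribution (for $\|Au\|^2$) as taking a standard Gaussian vector $g \sim N(0, I_d)$, looking at the squared norm of its first $k$ coordinates, and dividing by $\|g\|^2$; equivalently, $\|Au\|^2 \stackrel{d}{=} \sum_{i=1}^k g_i^2 \big/ \sum_{i=1}^d g_i^2$. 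So $\tfrac{d}{k}\|Au\|^2$ is (up to the $d/k$ normalization) a ratio of a $\chi^2_k$ to a $\chi^2_d$.

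The next step is to control this ratio. I would handle numerator and denominator separately: show $\tfrac{1}{k}\sum_{i=1}^k g_i^2 \in (1 - \eps/3,\, 1 + \eps/3)$ except with probability $e^{-c \eps^2 k}$, and $\tfrac{1}{d}\sum_{i=1}^d g_i^2 \in (1 - \eps/3,\, 1+\eps/3)$ except with probability $e^{-c\eps^2 d} \le e^{-c\eps^2 k}$, and then combine via a union bound, checking that the two events together force the ratio into $((1-\eps)^2,(1+\eps)^2) \subseteq$ (after adjusting constants) the event that $\|f(u)\|/\|u\| \in (1-\eps, 1+\eps)$. (One can also bound the ratio directly in one shot, but splitting is cleanest.) The $\chi^2$ tail bound itself I would get from a standard moment-generating-function / Chernoff computation: for $Z = \sum_{i=1}^m g_i^2$, use $\mathbb{E}[e^{\lambda(Z - m)}]$ and optimize $\lambda$, yielding $\Pr[|Z - m| > \eps m] \le 2 e^{-c \eps^2 m}$ for $\eps \in (0, 1/2)$ with an absolute constant $c$; alternatively cite Bernstein's inequality for sub-exponential variables. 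Tracking the constants so that the final bound reads $e^{-C\eps^2 k}$ with a single universal $C$ (absorbing the factor $2$ and the constants from the two events) is the only slightly fiddly bookkeeping.

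The main obstacle — really the only conceptual point, since everything else is routine — is justifying the distributional identity between the Haar-random projection and the Gaussian model. The clean way is rotational invariance: if $A$ is the coordinate projection onto the first $k$ coordinates composed with a Haar-random orthogonal matrix $Q \in O(d)$, then $\|Au\|^2 = \|(Qu)_{1:k}\|^2$, and $Qu$ is uniform on the unit sphere $S^{d-1}$ for any fixed unit $u$; and a uniform point on $S^{d-1}$ equals $g/\|g\|$ for $g \sim N(0,I_d)$, again by rotational invariance of the Gaussian. This gives $\|Au\|^2 \stackrel{d}{=} \|g_{1:k}\|^2 / \|g\|^2$ exactly. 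Everything after that is the two-sided $\chi^2$ concentration and a union bound, with the $\tfrac{\sqrt d}{\sqrt k}$ factor precisely cancelling the $d$ and $k$ normalizations.
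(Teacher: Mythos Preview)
Your proposal is correct and follows the standard route to the Johnson--Lindenstrauss lemma: reduce to a unit vector, identify the law of $\|Au\|^2$ under a Haar-random projection with the Beta-type quantity $\|g_{1:k}\|^2/\|g\|^2$ via rotational invariance of the Gaussian, and then apply two-sided $\chi^2$ concentration to numerator and denominator. The bookkeeping (combining events, absorbing the factor~$2$, adjusting~$\eps$ by constant factors) is indeed routine.

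There is nothing to compare against, however: the paper does not prove this lemma. It is stated with a citation to \cite{johnson1984extensions} and used as a black box in the proof sketch of Theorem~\ref{t:im}. So your argument is not being measured against an alternative approach in the text; it simply supplies a proof the survey chose to omit.
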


We can now apply this lemma, with  $k=O\left(\tfrac{\log
  n}{\eps^2}\right)$, to a set of points $P$ to show that the map~$f$ has a
$(1+\eps)$ distortion on $P$, with probability at least $2/3$. Most
importantly the map $f$ is ``oblivious'', i.e., it does not depend on
$P$.

We now show how to use Lemma~\ref{lem:JL} to design a
$(1+O(\eps), r)$-ANN data structure with the following guarantees.
\begin{theorem}[\cite{indyk1998approximate, har2012approximate}]
\label{t:im}
Fix $\eps\in(0,1/2)$ and dimension $d\ge 1$. There is a $(1+O(\eps), r)$-ANN data
structure over $(\R^d, \ell_2)$ achieving $Q=O(d\cdot \tfrac{\log
  n}{\eps^2})$ query time, and $S=n^{O(\log(1/\eps)/\eps^2)}+O(d(n+k))$ space.
  The time needed to build the data structure is $O(S+ndk)$.
\end{theorem}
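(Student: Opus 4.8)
The plan is to combine the Johnson–Lindenstrauss lemma with the trivial "store the entire Voronoi decomposition" data structure in the reduced-dimensional space. First I would apply Lemma~\ref{lem:JL} with target dimension $k = C' \log(1/\eps)\eps^{-2}\log n$ for a suitable constant $C'$. A direct union bound over all $\binom{n}{2}$ pairs of data points would only give distortion control among points of $P$, which is not enough: the query point $q$ is not known in advance. The standard fix, which I would carry out, is to observe that it suffices for $f$ to (i) not expand any distance between two points of $P$ by more than $1+\eps$, and (ii) preserve, up to $1+\eps$, the distance from any possible query to its near neighbors — and the latter can be reduced to a statement about $P$ together with a net. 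Concretely, after rescaling so that $r$ is fixed, I would impose an $\eps r$-net argument or, more simply, invoke the known reduction (as in~\cite{har2012approximate,indyk1998approximate}) showing that if $f$ preserves all pairwise distances within $P$ and the data structure in the target space is itself a $(1+O(\eps))$-approximate near neighbor structure with an appropriately shrunk radius, then composing $q \mapsto f(q)$ with the low-dimensional structure yields a valid $(1+O(\eps),r)$-ANN. The point is that a query $q$ with a point of $P$ within distance $r$ maps to $f(q)$ with a point of $f(P)$ within distance roughly $(1+\eps)r$, and any point of $f(P)$ within distance $(1+\eps)cr/(1+\eps)$ of $f(q)$ pulls back to a point of $P$ within distance $(1+O(\eps))cr$ of $q$; the $1-e^{-C\eps^2 k}$ failure probability per pair, with our choice of $k$, makes the total failure probability at most $1/3$.

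Second, for the low-dimensional structure I would use the exact near neighbor data structure obtained from the Voronoi-type decomposition: in $\R^k$ with $\ell_2$, one can build a data structure of size $k^{O(1)} n^{O(k)}$ — actually it suffices to take an $\eps$-approximate decomposition of size $n^{O(k)}$ — answering $(1+\eps,r')$-near-neighbor queries in time $k^{O(1)}\log n$. Plugging $k = O(\log(1/\eps)/\eps^2 \cdot \log n)$ gives space $n^{O(k/\log n)} = n^{O(\log(1/\eps)/\eps^2)}$, which is the dominant term $n^{O(\log(1/\eps)/\eps^2)}$ in the claimed bound $S$; the additive $O(d(n+k))$ accounts for storing the original point set $P$ and the $d\times k$ projection matrix, both needed to translate the reported low-dimensional point back to the corresponding point of $P$ and to map the query down. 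The query time $Q = O(d \cdot \frac{\log n}{\eps^2})$ comes from the cost of computing $f(q) = \frac{\sqrt d}{\sqrt k} A q$, namely $O(dk) = O(d\log(1/\eps)/\eps^2 \cdot \log n)$ — here I would absorb the $\log(1/\eps)$ factor or state the bound as $O(dk)$; the subsequent low-dimensional query costs only $k^{O(1)}\log n$, which is lower order. The preprocessing time $O(S + ndk)$ is the time to compute all $n$ images $f(p)$ (cost $O(ndk)$) plus the time to build the low-dimensional structure (cost $O(S)$).

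The main obstacle, and the step requiring the most care, is the reduction in the first paragraph: making precise that oblivious dimension reduction applied only to the $n$ data points — not to the infinitely many potential queries — still yields a correct ANN structure. This requires the observation that correctness only needs the map to behave well on the pair $(q, p^\ast)$ where $p^\ast$ is a true near neighbor, together with the fact that the low-dimensional data structure can itself be made robust (tune its internal approximation to $1+\eps$ and its radius to $(1-\eps)r \le \|f(q)-f(p^\ast)\| \le (1+\eps)r$). One clean way I would present this is: condition on the (probability $\ge 2/3$) event that $f$ has distortion $1\pm\eps$ on $P$; then for a query $q$ with $\d(q,p^\ast)\le r$, either directly argue using the Kirszbraun-type extension that $f$ extends to a map on $P\cup\{q\}$ with the same distortion, or — more elementarily — note that the JL guarantee can be applied to the single additional point $q$ at query time is \emph{not} possible (the structure is already built), so instead one shows that \emph{any} $1\pm\eps$-embedding of $P$ suffices provided the near-neighbor structure in $\R^k$ answers queries for the \emph{enlarged} radius $(1+\eps)r$ and with approximation $c' = \frac{1+\eps}{1-\eps}c$; unwinding, $c' = (1+O(\eps))c$. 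Everything else — the net/packing bounds, the size of the Voronoi decomposition, the arithmetic combining the $\eps$'s — is routine and I would not belabor it.
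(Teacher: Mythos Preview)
Your proposal has two genuine gaps.

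First, the extended discussion of how to handle the query point $q$ reflects a misreading of Definition~\ref{d:crdann}: the success probability there is taken over the randomness of the data structure ${\cal S}$, for each \emph{fixed} query $q$. Since the Johnson--Lindenstrauss map $f$ is oblivious, one simply applies Lemma~\ref{lem:JL} to all $\binom{n+1}{2}$ pairs in $P\cup\{q\}$; with $k=O(\eps^{-2}\log n)$ this succeeds with probability at least $5/6$ for that fixed $q$. No net argument, Kirszbraun extension, or enlarged-radius trick is needed---this is exactly what the paper does in two lines. Your remark that ``the JL guarantee can be applied to the single additional point $q$ at query time is not possible (the structure is already built)'' confuses the order of quantifiers: the structure is built using a random $f$, and the guarantee is that for any fixed $q$, with high probability over $f$, all relevant distances in $P\cup\{q\}$ are preserved.

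Second, and more seriously, your low-dimensional data structure does not yield the claimed space bound. A Voronoi/Meiser point-location structure in $\Rbb^k$ has size $n^{\Theta(k)}$; with your choice $k=\Theta(\eps^{-2}\log(1/\eps)\log n)$ this is $n^{\Theta(\eps^{-2}\log(1/\eps)\log n)}$, which is super-polynomial. The step ``$n^{O(k)}$ becomes $n^{O(k/\log n)}$'' is just an arithmetic slip. The paper instead takes $k=O(\eps^{-2}\log n)$ (no $\log(1/\eps)$ factor) and, crucially, uses a much simpler low-dimensional structure: discretize $\Rbb^k$ into cubes of side $\eps r/\sqrt{k}$ and store, in a dictionary, the set of cubes meeting some ball $B(f(p),r)$ for $p\in P$. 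A volume bound shows there are at most $n\cdot(1/\eps)^{O(k)}$ such cubes, and $(1/\eps)^{O(k)}=n^{O(\log(1/\eps)/\eps^2)}$, matching the theorem. This also explains why the query time is $O(dk)=O(d\eps^{-2}\log n)$ with no stray $\log(1/\eps)$ factor to ``absorb''.
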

\begin{proof}[Proof sketch]
First, assume there is a $(1+\eps, r)$-ANN data structure $\A$ for the
$k$-dimensional $\ell_2$ space, achieving query time $Q(n,k)$ and
space bounded by $S(n,k)$. For $k=O(\tfrac{\log
  n}{\eps^2})$, we consider the map $f$ from Lemma~\ref{lem:JL}. For the
dataset $P$, we compute $f(P)$ and preprocess this set using $\A$
(with the scale parameter $r(1+\eps)$). Then, for a query point $q\in
\R^d$, we query the data structure $\A$ on $f(q)$. This algorithm
works for a fixed dataset $P$ and query $q$ with 5/6 probability,
by applying Lemma \ref{lem:JL} to the points in the set $P\cup
\{q\}$.
  The map $f$ preserves all distances between $P$ and
$q$ up to a factor of  $1+\eps$.

We now construct $\A$ with space $S(n,k)=n\cdot
(1/\eps)^{O(k)}$ and $Q(n,k)=O(k)$, which yields the stated bound for
$k=O(\tfrac{\log n}{\eps^2})$. Given the scale parameter $r$, we
discretize the space $\R^k$ into cubes of sidelength $\eps
r/\sqrt{k}$, and consider the set $S$ of cubes that intersect any ball
$B(p',r)$ where $p'\in f(P)$. Using standard estimates on the volume
of $\ell_2$ balls, one can prove that $|S|\le n\cdot
(1/\eps)^{O(k)}$. The data structure then stores the set $S$ in a
dictionary data structure.\footnote{In the dictionary problem, we are
  given a set $S$ of elements from a discrete universe $U$, and we
  need to answer queries of the form ``given $x$, is $x\in S$?''. This
  is a classic data structure problem and has many solutions. One
  concrete solution is via a hashing~\cite{CLRS}, which achieves space of
  $O(|S|)$ words, each of $O(\log |U|)$ bits, and query time
  of $O(1)$ in expectation.} For a query $f(q)$, we just compute the
cube that contains $f(q)$, and check whether it is contained in set $S$ using the
dictionary data structure. We note that there is an additional
$1+\eps$ factor loss from discretization since the diameter of a cube
is $\eps r$.
\end{proof}

A similar approach was  introduced
\cite{kushilevitz2000efficient} in the context of the Hamming space $\{0,1\}^d$.
An important difference is that that there is no analog
 of Lemma~\ref{lem:JL} for the Hamming space
\cite{brinkman2005impossibility}.\footnote{In fact, it has been shown that spaces for which analogs of Lemma~\ref{lem:JL} hold are ``almost'' Hilbert spaces~\cite{johnson2009johnson}.}
Therefore,
\cite{kushilevitz2000efficient} introduce a weaker notion of randomized
dimension reduction, which works only for a {\em fixed scale} $r$.

\begin{lemma}[\cite{kushilevitz2000efficient}]
\label{lem:KORdr}
Fix the error parameter $\eps\in(0,1/2)$, dimension $d\ge 1$, and scale $r\in
[1,d]$. For any $k\ge 1$, there exists a randomized map
$f:\{0,1\}^d\to \{0,1\}^k$ and an absolute constant $C> 0$,
satisfying the following for any fixed $x,y\in \{0,1\}^d$:
\begin{itemize}
\item
if $\|x-y\|_1\le r$, then $\Pr_f[\|f(x)-f(y)\|_1\le k/2]\ge
1-e^{-C\eps^2k}$;
\item
if $\|x-y\|_1\ge (1+\eps)r$, then $\Pr_f[\|f(x)-f(y)\|_1>(1+\epsilon/2)\cdot k/2]\ge
1-e^{-C\eps^2k}$.
\end{itemize}
\end{lemma}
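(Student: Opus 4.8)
The plan is to let $f$ be a random $\mathbb{F}_2$-linear ``sketch'': draw $k$ independent subsets $S_1,\dots,S_k\subseteq[d]$, each obtained by including every coordinate independently with probability $p$ (a parameter of order $1/r$ to be chosen below), and set $f(x)_i=\bigoplus_{j\in S_i}x_j$. The reason for linearity is that $f(x)_i\oplus f(y)_i=\bigoplus_{j\in S_i}(x\oplus y)_j$, so $\|f(x)-f(y)\|_1=\sum_{i=1}^k\mathbb{1}[f(x)_i\neq f(y)_i]$ is a sum of $k$ i.i.d.\ indicators whose common success probability depends on $x,y$ only through $\Delta:=\|x-y\|_1$. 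Thus the whole lemma reduces to: (i) compute this per-coordinate probability as a function of $\Delta$; (ii) tune $p$ so that its values at $\Delta=r$ and $\Delta=(1+\eps)r$ are well separated; (iii) finish with a Chernoff bound over the $k$ independent coordinates.

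For (i): if $z:=x\oplus y$ has $\Delta$ ones, then $f(x)_i\neq f(y)_i$ exactly when $|S_i\cap\mathrm{supp}(z)|$ is odd, and $|S_i\cap\mathrm{supp}(z)|\sim\mathrm{Bin}(\Delta,p)$; the elementary identity $\Pr[\mathrm{Bin}(\Delta,p)\text{ odd}]=\tfrac12\big(1-(1-2p)^{\Delta}\big)$ gives $\Pr[f(x)_i\neq f(y)_i]=\phi(\Delta)$ with $\phi(\Delta):=\tfrac12\big(1-(1-2p)^{\Delta}\big)$, a strictly increasing function taking values in $[0,\tfrac12)$. For (ii) I would choose $p$ so that $(1-2p)^r$ equals a fixed constant in $(0,1)$ (concretely $p=\Theta(1/r)$, e.g.\ $p\approx\tfrac1{4r}$), which makes $\phi(r)$ a constant bounded away from $\tfrac12$; writing $(1-2p)^{(1+\eps)r}=(1-2p)^r\cdot\exp\!\big(\eps r\ln(1-2p)\big)$ and expanding the exponential then yields $\phi((1+\eps)r)-\phi(r)=\Theta(\eps)$. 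Establishing this $\Theta(\eps)$ gap, with honest constants, is the technical core of the argument.

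For (iii), fix $x,y$. If $\Delta\le r$ then $\|f(x)-f(y)\|_1$ is stochastically dominated by $\mathrm{Bin}(k,\phi(r))$, and if $\Delta\ge(1+\eps)r$ it stochastically dominates $\mathrm{Bin}(k,\phi((1+\eps)r))$; since the two means differ by $\Theta(\eps k)$, the additive Chernoff (Hoeffding) bound keeps each on its own side of the midpoint threshold $t:=\tfrac12\big(\phi(r)+\phi((1+\eps)r)\big)k$ except with probability $e^{-\Omega(\eps^2k)}$, which supplies the absolute constant $C$. The genuinely delicate point — and the one I expect to be the main obstacle — is the normalization in the statement: the threshold produced by this construction is a \emph{specific} constant fraction of $k$ (around $\phi(r)k$) rather than literally $k/2$, and the multiplicative gap between the far and near thresholds is $1+\Theta(\eps)$ rather than exactly $1+\eps/2$. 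Rescaling $\eps$ by an absolute constant and relabelling the threshold absorbs this discrepancy without affecting anything downstream (the output is a $(1+\Theta(\eps))$-approximate near-neighbor instance at a single scale in $\{0,1\}^k$). If one insists on the crossing sitting exactly at $k/2$, the parity hash can instead be replaced by a random-hyperplane (``angular'') hash on the $\pm1$-encoding, whose per-coordinate collision pattern equals $\tfrac1\pi\arccos(1-2\Delta/d)$ and crosses $\tfrac12$ with nonvanishing slope precisely at $\Delta=d/2$; but then one must additionally route a general $r\in[1,d]$ into the regime $r\asymp d$, and making that preliminary rescaling preserve an $\Omega(\eps)$ gap is exactly where the real difficulty lies.
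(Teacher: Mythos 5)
Your construction is exactly the one the paper gives for this lemma: a random $GF(2)$ projection $f(x)=Ax$ with i.i.d.\ Bernoulli($p$) entries, where $p=\Theta(1/r)$ depends only on $r$, analyzed through the parity-disagreement probability $\tfrac12\bigl(1-(1-2p)^{\Delta}\bigr)$ and a Chernoff bound over the $k$ independent output coordinates — essentially identical to the paper's (very brief) sketch. Your worry about hitting the literal thresholds $k/2$ and $(1+\eps/2)\cdot k/2$ is fine to resolve by rescaling constants exactly as you say, since the survey states the lemma only up to such normalization and uses it only as a single-scale gap statement; the closing digression about replacing the parity hash by a hyperplane hash is unnecessary.
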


The map $f$ can be constructed via a random projection over $GF(2)$.
That is,  take $f(x)=Ax$, where $A$ is a $k\times d$ matrix for $k=O(\log(n)/\epsilon^2)$, with each
entry being $1$ with some fixed probability~$p$, and zero
otherwise. The probability $p$ depends solely on $r$. The rest of the
algorithm proceeds as before, with the exception that the ``base''
data structure $\A$ is particularly simple: just store the answer for
any dimension-reduced query point $f(q)\in \{0,1\}^k$. Since there
are only $2^k=n^{O(1/\eps^2)}$ such possible queries, and computing $f(q)$ takes $O(dk)$ time,  we get the
following result.

\begin{theorem}[\cite{kushilevitz2000efficient}]
\label{t:KOR}
Fix $\eps\in(0,1/2)$ and dimension $d\ge 1$. There is a $(1+O(\eps), r)$-ANN data
structure over $(\{0,1\}^d, \ell_1)$ using
$n^{O(1/\eps^2)}  + O(d(n+k))$ space and $O(d\cdot \tfrac{\log
  n}{\eps^2})$ query time.
\end{theorem}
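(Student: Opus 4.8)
The plan is to follow the recipe sketched just before the statement: compose the Hamming dimension reduction of Lemma~\ref{lem:KORdr} with a brute-force lookup table in the reduced cube $\{0,1\}^k$. Fix the target failure probability $\delta=1/3$. First I would assume without loss of generality that $r\in[1,d]$: in $\{0,1\}^d$ all $\ell_1$ distances are integers in $[0,d]$, so if $r<1$ the condition $B(q,r)\cap P\neq\emptyset$ means $q\in P$ and any exact-match structure works, while if $r\ge d$ every point of $P$ is a valid answer. With $r\in[1,d]$ fixed, instantiate Lemma~\ref{lem:KORdr} with $k=\Theta(\log(n)/\eps^2)$ large enough that $e^{-C\eps^2k}\le \delta/(n+1)$; this yields a random map $f:\{0,1\}^d\to\{0,1\}^k$, realized as $f(x)=Ax$ over $GF(2)$ for a random $k\times d$ matrix $A$ whose entry density depends only on $r$.

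For preprocessing, sample $A$ once (storing it in $O(dk)$ space), compute $f(p)$ for every $p\in P$, and build the base structure $\A$: for each of the $2^k=n^{O(1/\eps^2)}$ strings $z\in\{0,1\}^k$, check whether some $p\in P$ has $\|f(p)-z\|_1\le k/2$ and, if so, store a pointer to one such witness $p$. This is an array of $n^{O(1/\eps^2)}$ entries each of $O(\log n)$ bits, plus $O(dn)$ words to keep $P$ itself, for a total of $n^{O(1/\eps^2)}+O(d(n+k))$ space. On a query $q$ one computes $z=f(q)$ in $O(dk)=O(d\log(n)/\eps^2)$ time and returns the witness stored at $z$; the lookup itself is $O(1)$, so query time is $O(d\log(n)/\eps^2)$.

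For correctness, fix a query $q$ with $B(q,r)\cap P\neq\emptyset$ and pick $p^\star\in P$ with $\|q-p^\star\|_1\le r$. Over the randomness of $f$ consider the events: (i) $\|f(q)-f(p^\star)\|_1\le k/2$; and (ii) every $p\in P$ with $\|q-p\|_1\ge(1+\eps)r$ satisfies $\|f(q)-f(p)\|_1>(1+\eps/2)\cdot k/2> k/2$. The two bullets of Lemma~\ref{lem:KORdr} give each of the at most $n+1$ constituent events probability $\ge 1-e^{-C\eps^2k}$, so by a union bound (i) and (ii) both hold with probability $\ge 1-(n+1)e^{-C\eps^2k}\ge 1-\delta$. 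On this event the table slot $z=f(q)$ is nonempty (witnessed by $p^\star$), so the query returns some $p$ with $\|f(q)-f(p)\|_1\le k/2$; by (ii) this $p$ is not a far point, i.e.\ $\|q-p\|_1<(1+\eps)r$, which is exactly the $(1+\eps,r)$-ANN guarantee, hence also a $(1+O(\eps),r)$-ANN.

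The only real subtlety is the union-bound bookkeeping: one must control failure over the single ``near'' pair $(q,p^\star)$ and all ``far'' pairs simultaneously, and observe that it is harmless that the bucket $z=f(q)$ may collect many images from $f(P)$ — storing one arbitrary witness per bucket suffices precisely because event (ii) excludes \emph{every} far witness at once. Everything else (the boundary cases for $r$, the $O(dk)$ cost of applying $A$ over $GF(2)$, and the fact that the ``dictionary'' here is just a plain array since the universe $\{0,1\}^k$ is small enough to index directly) is routine.
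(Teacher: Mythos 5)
Your proposal is correct and follows essentially the same route as the paper: apply the fixed-scale $GF(2)$ dimension reduction of Lemma~\ref{lem:KORdr} with $k=\Theta(\log n/\eps^2)$, and precompute, for each of the $2^k=n^{O(1/\eps^2)}$ possible reduced queries $z\in\{0,1\}^k$, one stored witness $p\in P$ with $\|f(p)-z\|_1\le k/2$, answering a query by computing $f(q)$ in $O(dk)$ time and doing one lookup. The details you add (the union bound over the near pair and all far pairs, the observation that any witness in the bucket is automatically non-far on the good event, and the boundary cases $r<1$, $r\ge d$) are exactly the bookkeeping the paper leaves implicit.
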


As a final remark, we note we cannot obtain improved space bounds by
improving the dimension reduction lemmas \ref{lem:JL} and
\ref{lem:KORdr}. Indeed the above lemmas are tight as proven in
\cite{jayram2013optimal}. There was however work on improving the {\em
  run-time complexity} for computing a dimension reduction map,
improving over the na\"ive bound of $O(dk)$; see \cite{AC-fastJL, DKS10,
  AL13, kw11-rip, npw14-rip, kn14-sparseJL}.

\subsection{ANN via space partitions: Locality-Sensitive Hashing}
\label{ss:lsh}
While dimension reduction yields ANN  data structure with polynomial space, this is
not enough in applications, where one desires space as close as
possible to linear in $n$. This consideration led to the following,
alternative approach, which yields smaller space bounds, albeit at the
cost of increasing the query time to something of the form $n^\rho$
where $\rho\in(0,1)$.

The new approach is based on randomized space partitions, and
specifically on Locality-Sensitive Hashing, introduced in
\cite{indyk1998approximate}.

\begin{definition}[Locality-Sensitive Hashing (LSH)]
Fix a metric space $(X, \d)$, scale $r>0$,  approximation
$c>1$ and a set $U$. Then a distribution $\cal H$ over maps $h:X\to U$ is called
$(r,cr, p_1,p_2)$-sensitive if the following holds for any $x,y\in X$:
\begin{itemize}
\item
if $\d(x,y)\le r$, then $\Pr_{h}[h(x)=h(y)]\ge p_1$;
\item
if $\d(x,y)> cr$, then $\Pr_{h}[h(x)=h(y)]\le p_2$.
\end{itemize}
The distribution $\cal H$ is called an LSH family, and has {\em
  quality} $\rho=\rho({\cal H})=\tfrac{\log 1/p_1}{\log 1/p_2}$.
\end{definition}
In what follows we require an LSH family to have $p_1>p_2$, which implies $\rho<1$.
Note that LSH mappings are also oblivious: the distribution $\cal H$
does not depend on the point-set $P$ or the query $q$.

Using LSH, \cite{indyk1998approximate} show how to obtain
the following ANN data structure.

\begin{theorem}[\cite{indyk1998approximate}]
\label{thm:LSHds}
Fix a metric ${\cal M}=(X, \d)$, a scale $r>0$, and approximation
factor $c>1$. Suppose the metric admits a $(r,cr,p_1,p_2)$-sensitive
LSH family ${\cal H}$, where the map $h(\cdot)$ can be stored in
$\sigma$ space, and, for given $x$, can be computed in $\tau$ time;
similarly, assume that computing distance $\d(x,y)$ takes $O(\tau)$
time. Let $\rho=\rho({\cal H})=\tfrac{\log 1/p_1}{\log 1/p_2}$. Then
there exists a $(c,r)$-ANN data structure over ${\cal M}$ achieving
query time $Q=O(n^\rho\cdot \tau\tfrac{\log_{1/p_2} n}{p_1})$ and
space $S=O(n^{1+\rho}\cdot \tfrac{1}{p_1}+n^\rho\tfrac{1}{p_1}\cdot
\sigma\cdot \log_{1/p_2} n)$ (in addition to storing the original
dataset $P$).  The time needed to build this data structure is $O(S\cdot \tau)$.
\end{theorem}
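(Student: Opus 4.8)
The plan is to follow the classical Indyk--Motwani construction: amplify the given LSH family by concatenation, build several independent hash tables, and answer a query by probing the corresponding buckets while capping the total number of points inspected. Concretely, fix $k=\lceil \log_{1/p_2} n\rceil$ and, for a repetition parameter $L$ to be chosen, sample $L$ independent functions $g_1,\dots,g_L$, where each $g_j=(h_{j,1},\dots,h_{j,k})$ is a tuple of $k$ independent samples from ${\cal H}$, so that $g_j(x)=g_j(y)$ iff all $k$ coordinates agree. For a far pair ($\d(x,y)>cr$) this gives collision probability at most $p_2^k\le 1/n$, while for a near pair ($\d(x,y)\le r$) it is at least $p_1^k\ge p_1\cdot n^{-\rho}$, by the choice of $k$ together with the definition of $\rho$. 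For each $j$, store the nonempty buckets $\{p\in P:g_j(p)=v\}$ in a dictionary keyed by $v$ (as in the dictionary footnote), so that a bucket is retrieved in $O(1)$ expected time; this costs $O(n)$ words plus $O(k\sigma)$ for the description of $g_j$ per table. Taking $L=\Theta(n^\rho/p_1)$ then yields the claimed space $S=O\!\left(n^{1+\rho}/p_1 + n^\rho\sigma\log_{1/p_2}n/p_1\right)$, and building it amounts to evaluating $g_j(p)$ for all $n$ points and all $L$ tables — $O(nL)$ hash evaluations of cost $O(k\tau)$ each — plus the dictionary insertions, which is $O(S\tau)$.

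Next I would specify the query algorithm: given $q$, iterate over $j=1,\dots,L$, retrieve the bucket of $g_j(q)$ in table $j$, and for each point $p$ encountered compute $\d(q,p)$, returning $p$ immediately if $\d(q,p)\le cr$; maintain a global counter of inspected points and abort (declaring failure) once more than $3L$ points have been inspected in total. The query cost is then $O(L)$ bucket lookups, each preceded by an $O(k\tau)$ evaluation of $g_j(q)$, plus at most $3L$ distance computations of cost $O(\tau)$, for a total of $O(Lk\tau)=O\!\left(n^\rho\tau\,\tfrac{\log_{1/p_2}n}{p_1}\right)$, matching the stated bound.

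For correctness, suppose $p^\ast\in B(q,r)\cap P$, and consider two bad events. Event (i): $p^\ast$ fails to collide with $q$ in every table; since it collides in a single table with probability at least $p_1 n^{-\rho}$, this has probability at most $(1-p_1 n^{-\rho})^L\le e^{-\Omega(1)}$, driven below $1/6$ by the constant in $L$. Event (ii): the number of collisions between $q$ and points at distance $>cr$, summed over all $L$ tables, exceeds $3L$; since each such collision occurs with probability at most $1/n$ and there are fewer than $n$ far points, the expectation is at most $L$, so Markov bounds this by $1/3$ (pushed below $1/6$ by enlarging the cutoff and rescaling). When neither event occurs, the run either inspects fewer than the cutoff many points and hence inspects $p^\ast$ (which lies in some probed bucket), thereby returning a point within distance $cr$; or it reaches the cutoff, in which case among the inspected points at least one must be at distance $\le cr$, because at most $3L$ far points are available to fill the budget. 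A union bound gives failure probability below $1/3$, as required; aborting early is harmless precisely because we only need success probability $2/3$.

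The main obstacle — essentially the only non-routine point — is calibrating the interaction of the two failure modes: $k$ must be large enough that spurious far collisions stay bounded in expectation, yet small enough that a genuine near neighbor survives in a single table with probability $n^{-\rho}$, and $L$ must then boost that survival probability to a constant without exceeding the $n^{1+\rho}$ space target. The cutoff argument is what couples these constraints, and it must be phrased carefully, including the observation that an early abort costs nothing. The remaining estimates — the identity $p_1^k\ge p_1 n^{-\rho}$, the Markov step, and the geometric bound on event (i) — are routine.
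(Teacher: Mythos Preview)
Your proof is correct and follows essentially the same approach as the paper's sketch: tensor the LSH family $k=\lceil\log_{1/p_2}n\rceil$ times, build $L=\Theta(n^\rho/p_1)$ independent hash tables, and probe the matching buckets at query time. The only difference is that the paper bounds the \emph{expected} number of candidates directly, whereas you add the $3L$-cutoff and the two-bad-events analysis to convert this into a worst-case query-time bound with controlled failure probability---a standard refinement the paper's sketch omits.
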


While we describe some concrete LSH families later on, for now, one
can think of the parameters~$\tau, \sigma$ as being proportional to the
dimension of the space (although this is not always the case).

The overall idea of the algorithm is to use an LSH family as a
pre-filter for the dataset $P$. In particular, for a random partition
$h$ from the family ${\cal H}$, the query point $q$ will likely
collide with its near neighbor (with probability at least $p_1$), but
with few points at a distance $\ge cr$, in expectation at most $p_2\cdot
n$ of them. Below we show how an extension of this idea yields
Theorem~\ref{thm:LSHds}.

\begin{proof}[Proof sketch]
Given an LSH family $\cal H$, we can build a new, derived LSH family via
a certain tensoring operation.  In particular, for an integer $k\ge 1$,
consider a new distribution $\G_k$ over maps $g:X\to U$, where
$g(\cdot)$ is obtained by picking $k$  i.i.d. functions $h_1,\ldots h_k$ chosen from $\cal{H}$ and setting $g(x)=(h_1(x),h_2(x),\ldots h_k(x))$ Then, if $\cal H$ is $(r,cr,p_1,p_2)$-sensitive, $\G_k$ is
$(r,cr,p_1^k,p_2^k)$-sensitive. Note that the parameter $\rho$ of the
hash family does not change, i.e.,  $\rho({\G_k})=\rho({\cal H})$.

The entire ANN data structure is now composed of $L$ dictionary data
structures (e.g., hash tables discussed in the previous section), where $L,k\ge 1$ are
parameters to fix later. The $i$-th hash table is constructed as
follows. Pick a map $g_i$ uniformly at random from $\G_k$, and store the set $g_i(P)$ in the dictionary structure.
At the query time,  we iterate over $i=1 \ldots L$.
For a~given~$i$, we compute $g_i(q)$, and use the dictionary structure to obtain the set of
``candidate'' points $Q_i = \{ p \in P: g_i(p)=g_i(q) \}$.
For each candidate point we compute the distance from $q$ to that point.
The process is stopped when all $Q_i$'s are processed, or when a point within distance $cr$ to $q$ is found, whichever happens first.

To analyze the success probability, we note that the dictionary
structure $i$ succeeds if $p^*\in Q_i$, where $p^*$ is the assumed
point at distance at most $r$ from $q$. This happens with probability
at least~$p_1^k$. Thus, we can take $L=O(1/p_1^k)$ such dictionary
structures, and thus guarantee success with a constant probability.

The expected query time is $O(L(k\tau+Ln\cdot p_2^k\cdot \tau))$, which includes
both the computation of the maps $g_1(q),\ldots g_L(q)$ and the of
distances to the candidates in sets $Q_1,\ldots Q_L$. We can now derive the
value of $k$ that minimizes the above, obtaining $k=\lceil\log_{1/p_2}
n\rceil\le \log_{1/p_2} n+1$, and hence $L=O(n^\rho/p_1)$.

Finally, note that the space usage is $O(Ln)$ for the
dictionary structures, plus $O(Lk\sigma)$ for the description of the
maps.
\end{proof}

\subsection{Space partitions: LSH constructions}

Theorem~\ref{thm:LSHds} assumes the existence of an LSH family $\cal
H$ with a parameter $\rho<1$. In what follows we show a few examples
of such families.

\begin{enumerate}
\item {\em Hamming space $\{0,1\}^d$, with $\rho=1/c$.}
  The
  distribution $\cal H$ is simply a projection on a random coordinate
  $i$: ${\cal H}=\{h_i: h_i(x)=x_i, i=1,\ldots d\}$. This family is
  $(r,cr, 1-r/d,1-cr/d)$-sensitive, and hence $\rho\le 1/c$  \cite{indyk1998approximate}.

This LSH scheme is near-optimal for the Hamming space, as described in
Section \ref{sec:lshLB}. We also note that, since $\ell_2$ embeds
isometrically into $\ell_1$ (see Section~\ref{s:extensions}), this result extends to $\ell_2$
as well.

\item {\em Euclidean space $(\R^d, \ell_2)$, with $\rho<1/c$.}  In
  \cite{datar2004locality}, the authors introduced an LSH family which slightly improves over the above construction. It is
  based on random projections in~$\ell_2$. In particular, define a
  random map $h(x)$ as $h(x)=\lfloor\tfrac{\langle
    x,g\rangle}{wr}+b\rfloor$, where $g$ is a random $d$-dimensional
  Gaussian vector, $b\in [0,1]$, and $w>0$ is a fixed parameter. It
  can be shown that, for any fixed $c>1$, there exists $w>0$ such that
  $\rho<1/c$.

\item {\em Euclidean space $(\R^d, \ell_2)$, with $\rho\to 1/c^2$.}
  In \cite{andoni2006near}, the authors showed an LSH family with a~much better $\rho$, which later turned out to be optimal (see
  Section~\ref{sec:lshLB}). At its core, the main idea is to partition
  the space into Euclidean balls.\footnote{In contrast, the above LSH family can
  be seen as partitioning the space into cubes: when considering the
  $k$-tensored family $\G={\cal H}^k$, the resulting map $g\in \G$ is
  equivalent to performing a random dimension reduction (by
  multiplying by a random $k\times d$ Gaussian matrix), followed by
  discretization of the space into cubes.}
  It proceeds in two steps: 1)
  perform a~random dimension reduction $A$ to dimension $t$ (a
  parameter), and 2) partition $\R^t$ into balls. Since it is
  impossible to partition the space $\R^t$ into balls
  precisely\footnote{This is also termed {\em tessellation} of the space.}
  when $t\ge 2$, instead one performs ``ball carving''. The basic idea
  is to consider a sequence of randomly-centered balls $B_1,B_2,
  \ldots $, each of radius $wr$ for some parameter $w>1$, and define
  the map $h(x)$, for $x\in \R^d$, to be the index $i$ of the first
  ball $B_i$ containing the point $Ax$. Since we want to cover an
  infinite space with balls of finite volume, the above procedure needs to be modified slightly to terminate in finite time.
  The modified procedure runs in time $T=t^{O(t)}$.

  Overall, optimizing for $w$, one can obtain $\rho=1/c^2+\tfrac{O(\log
    t)}{\sqrt{t}}$ which tends to $1/c^2$ as $t \to \infty$. The time to hash is
  $\tau=O(Tt+dt)$, where $T$ depends exponentially on the parameter
  $t$, i.e., $T=t^{\Theta(t)}$. For the ANN data structure, the optimal choice is $t=O(\log
  n)^{2/3}$, resulting in $\rho=1/c^2+\tfrac{O(\log \log n)}{(\log
    n)^{1/3}}$.
\end{enumerate}

The $\ell_2$ LSH families can be extended to other $\ell_p$'s. For
$p<1$, \cite{datar2004locality} showed one can use method 2 as
described above, but using $p$-stable distributions instead of
Gaussians.  See Section \ref{s:extensions} for other extensions for $p>1$.

We remark that there is a number of other widely used LSH families, including
min-hash \cite{broder1997resemblance, broder1997syntactic} and
simhash \cite{charikar2002similarity}, which apply to different notions of similarity between points.
See~\cite{andoni2008near} for an overview.

\subsection{Space partitions: impossibility results}
\label{sec:lshLB}

It is natural to explore the limits of LSH families and ask what is
the best $\rho$ one can obtain for a given metric space as a function
of the approximation $c>1$. In \cite{motwani2007lower, o2014optimal},
it was proven that the LSH families
\cite{indyk1998approximate,andoni2006near} from the previous section
are near-optimal: for the Hamming space, we must have $\rho\ge
1/c-o(1)$, and for the Euclidean space, $\rho\ge 1/c^2-o(1)$.  Below
is the formal statement from \cite{o2014optimal}.

\begin{theorem}
Fix dimension $d\ge 1$ and approximation $c\ge 1$. Let $\cal H$ be a
$(r,cr,p_1,p_2)$-sensitive LSH family over the Hamming space, and
suppose $p_2\ge 2^{-o(d)}$. Then $\rho\ge 1/c-o_d(1)$.
\end{theorem}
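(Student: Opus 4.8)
The plan is to prove the LSH lower bound $\rho \ge 1/c - o_d(1)$ for the Hamming cube by a direct ``isoperimetric'' argument: pick a random pair $(x,y)$ at Hamming distance roughly $r$ (but crucially \emph{not} exactly $r$ — instead sample $y$ from $x$ by flipping each coordinate independently with probability $\approx r/d$, so that the pair behaves like a noisy copy), and relate the collision probability $\alpha(\rho) := \Pr_{h,x,y}[h(x)=h(y)]$ as a function of the noise rate to a classical hypercontractivity / small-set-expansion estimate. The key point is that for \emph{any} fixed function $h \colon \{0,1\}^d \to U$, the quantity $\Pr[h(x)=h(y)]$ under $\rho$-correlated inputs is, by Fourier analysis, $\sum_u \widehat{\mathbf{1}_{h^{-1}(u)}}$-type expression that is governed by the Bonami--Beckner inequality; concretely, the noise-stability of a partition into parts of measure $\mu_u$ satisfies $\mathrm{Stab}_\rho(h) \le \sum_u \mu_u^{2/(1+\rho)}$ up to lower-order terms, and one optimizes this against the two LSH constraints.

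First I would set up the two regimes using the noise operator: for the ``close'' pair at correlation parameter chosen so that the expected Hamming distance is $r$, the LSH guarantee forces the average collision probability to be at least (essentially) $p_1$; for the ``far'' regime I would actually run the same pair through $t$ independent blocks / apply the tensoring viewpoint so that a pair that is $\rho'$-correlated in each block is at distance $\approx cr$ overall, and the guarantee forces average collision probability at most $p_2^{\,t}$-ish. Then I would invoke the standard two-function (or small-set expansion) hypercontractive bound to say: if a partition has all parts of measure at most $\beta$, its collision probability under correlation $\sigma$ is at most $\beta^{1-H(\sigma)}$ for an appropriate entropy-like exponent, which after taking logs and rearranging yields $\log(1/p_1)/\log(1/p_2) \ge 1/c - o(1)$. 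The hypothesis $p_2 \ge 2^{-o(d)}$ is exactly what guarantees the parts are not so tiny that the asymptotics degrade — it ensures the dominant contribution comes from parts of non-negligible measure where the hypercontractive exponent is tight.

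The main obstacle, and the step I would spend the most care on, is making the reduction from ``a single hash function on $\{0,1\}^d$'' to ``a product situation where the correlation can be dialed between the $r$-scale and the $cr$-scale'' fully rigorous: one cannot literally tensor an \emph{arbitrary} LSH family and retain that it came from the cube, so instead I would argue directly about one function $h$ on $\{0,1\}^d$ using two \emph{different} noise rates $\rho_{\mathrm{near}}$ and $\rho_{\mathrm{far}}$ (corresponding to distances $r$ and $cr$) on the \emph{same} cube, and show that the ratio of the logarithms of the two noise stabilities is minimized by dictatorship-like (single-coordinate) functions — which are precisely the families of \cite{indyk1998approximate}. This is where one needs either a careful invocation of the generalized small-set expansion theorem of O'Donnell--Wright--Zhou / the ``$\rho$-versus-$\rho^c$'' comparison, or an explicit hypercontractive computation; the concentration of the Hamming distance of a noisy pair around its mean (so that ``expected distance $r$'' can be replaced by ``distance exactly $\le r$ with probability $1-o(1)$'', at the cost of the $o_d(1)$ term) is a routine Chernoff bound that I would relegate to a remark. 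Finally I would note that the same scheme, run with the Gaussian noise operator in place of the Boolean one, upgrades the bound to $\rho \ge 1/c^2 - o(1)$ for $\ell_2$, matching \cite{andoni2006near}.
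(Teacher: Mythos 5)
The survey you are working from states this theorem without proof, quoting it from the cited work of O'Donnell, Wu and Zhou, so the comparison is against that proof. Your setup is the right one — sample $y$ as a noisy copy of $x$ at noise rates tuned to the scales $r$ and $cr$, and pass between ``expected distance'' and ``exact distance'' by Chernoff — but the core of your argument has a genuine gap. The small-set-expansion/hypercontractive bound you invoke, of the shape $\mathrm{Stab}_\rho(h) \le \sum_u \mu_u^{2/(1+\rho)}$, controls collision probability in terms of the \emph{measures} of the parts $h^{-1}(u)$, and the LSH definition places no constraint whatsoever on those measures; so ``optimizing against the two LSH constraints'' does not close, and your key claim — that the ratio of log-noise-stabilities at the two noise rates is minimized by dictatorship-like functions — is precisely the statement that needs a proof, for which you offer no mechanism. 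Indeed, expansion-style arguments of this kind are exactly what the earlier lower bound of Motwani, Naor and Panigrahy used, and they lose a constant factor (yielding only $\rho \ge \Omega(1)/c$), whereas the tight $1/c$ bound is obtained by a much more elementary route: let $K(\lambda)=\Pr_{h,x,y}[h(x)=h(y)]$ for a $\lambda$-correlated pair; expanding each indicator $\mathbf{1}_{h^{-1}(u)}$ in the Fourier basis shows $K(\lambda)=\sum_k a_k\lambda^k$ with $a_k\ge 0$ and $\sum_k a_k = K(1) = \sum_u \mu_u = 1$, so Jensen (power-mean) gives $K(\lambda^c)\ge K(\lambda)^c$; since $1-2c\delta \approx (1-2\delta)^c$ for the relevant noise rates $\delta\approx r/d$, this yields $p_2 \gtrsim p_1^{c}$ up to the error terms, i.e.\ $\rho\ge 1/c-o_d(1)$, with no hypercontractivity and no extremality analysis at all.

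Two further points. Your reading of the hypothesis $p_2\ge 2^{-o(d)}$ is not correct: it has nothing to do with parts of non-negligible measure. It is needed because converting ``noise rate $\approx cr/d$'' into ``distance $> cr$'' produces \emph{additive} error terms bounded by binomial tail probabilities of the form $2^{-\Theta(\cdot)}$ (a noisy pair that happens to land at distance below $cr$ may collide with probability $1$), and these additive errors must be negligible compared to $p_2$, which may itself be exponentially small; so the Chernoff step you propose to ``relegate to a remark'' is exactly where the hypothesis is consumed, and it cannot be waved off. Finally, the $\ell_2$ consequence does not require rerunning the scheme with the Gaussian noise operator: as the survey itself notes, it follows immediately from the Hamming-space bound via the identity $\|x-y\|_1=\|x-y\|_2^2$ for binary vectors.
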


Note that the above theorem also immediately implies $\rho\ge
1/c^2-o(1)$ for the Euclidean space, by noting that  $\|x-y\|_1=\|x-y\|_2^2$ for binary vectors $x$ and $y$.

Finally, we remark that some condition on $p_2$ is necessary, as there
exists an LSH family with $p_2=0$, $p_1=2^{-O(d)}$ and hence
$\rho=0$. To obtain the latter, one can use the ``ball carving''
family of \cite{andoni2006near}, where the balls have radius
$wr=cr/2$. Note however that such a family results in query time that
is at least exponential in $d$, which LSH algorithms are precisely
designed to circumvent.

\section{(More) Deterministic algorithms}

A drawback of data structures described in the previous section  is that they allow ``false negatives'': with a controllable but non-zero probability,  the data structure can report nothing even if the ball $B(q,r)$ is non-empty. Although most of the data structures described in the literature have this property, it is possible to design algorithms with stronger guarantees,  including  deterministic ones. 

The first step in this direction was an observation (already made in~\cite{kushilevitz2000efficient}) that for a finite metric $(X,D)$ supported by  $(c,r)$-ANN data structures, it is possible to construct a data structure that  provides accurate answers to {\em all} queries $q \in X$. This is because one can construct and use $O(\log |X|)$ independent data structures, reducing the probability of failure to $\frac{1}{3 |X|}$. By taking a union bound over all $q \in X$, the constructed data structure works, with probability at least $2/3$, for all queries $X$.  Note that the space and query time  bounds of the new data structure are $O(\log |X|)$ times larger than the respective bounds for 
$(c,r)$-ANN .   Unfortunately, the algorithm for constructing such data structures has still a non-zero failure probability, and no deterministic polynomial-time algorithm for this task is known. 

The first deterministic polynomial-time algorithm for constructing a data structure that works for all queries $q \in X$ appeared in~\cite{indyk2000dimensionality}. It was developed for $d$-dimensional Hamming spaces, and solved a $(c,r)$-ANN  with an approximation factor $c=3+\epsilon$ for any $\epsilon>0$.
The data structure had $d (1/\eps)^{O(1)}$ query time and used $d n^{(1/\eps)^{O(1)}}$ space. 
It relied on two components. 
The first component, ``densification'',  was a deterministic analog of the mapping in Lemma~\ref{lem:KORdr}, which was shown to hold with $k=(d/\epsilon)^{O(1)}$. Retrospectively, the mapping can be viewed as being induced by an adjacency matrix of an {\em unbalanced expander}~\cite{guruswami2009unbalanced}.

\begin{definition} [Expander]
\em An $(r,\alpha)$-{\em unbalanced expander} is a bipartite simple graph $G=(U,V,E)$, $|U|=d, |V|=k$, with
left degree $\Delta$ such that for any $X \subset U$ with $|X| \leq r$, the
set of neighbors $N(X)$ of $X$ has size $|N(X)| \geq (1-\alpha) \Delta |X|$.
\end{definition}

Given such a graph $G$, one can construct a mapping $f=f_G: \{0,1\}^d \to \Sigma^k$ for some finite alphabet $\Sigma$ by letting $f(x)_j$ to be the concatenation of all symbols $x_i$  such that $(i,j) \in E$.
Let $H(x,y)$ be the Hamming distance between $x$ and $y$, i.e., the number of coordinates on which $x$ and $y$ differ. We have that:
\begin{itemize}
\item $H(f(x),f(y))   \le \Delta H(x,y)$, since each difference between $a$ and $b$ contributes to at most $\Delta$ differences between $f(x)$ and $f(y)$. In particular $H(f(x),f(y))   \le \Delta  r (1-\epsilon)$ if $H(x,y) \le  r (1-\epsilon)$.
\item if $H(x,y) \ge r$, then $H(f(x),f(y)) \ge (1-\alpha) \Delta r$ (from the expansion property).
\end{itemize}

Thus, setting $\alpha=\epsilon/2$ yields guarantees analogous to Lemma~\ref{lem:KORdr}, but using a deterministic mapping, and with coordinates of $f(x)$ in $\Sigma$, not $\{0,1\}$. To map into binary vectors, we further replace each symbol $f(x)_j$ by $C(f(x)_j)$, where $C: \Sigma \to \{0,1\}^s$ is an {\em error-correcting code}, i.e., having the property that for any distinct $a,b \in \Sigma$ we have $H(C(a),C(b)) \in [s(1/2-\epsilon), s(1/2+\epsilon)]$. We then use off-the-shelf constructions of expanders~\cite{guruswami2009unbalanced} and codes~\cite{guruswamiessential} to obtain the desired mapping $g=C \circ f: \{0,1\}^d \to  \{0,1\}^{ks}$.

The second component partitions the coordinates of points $g(x)$  into blocks $S_1 \ldots S_t$ of size $\log(n)/\epsilon^{O(1)}$ such that an analog of Lemma \ref{lem:KORdr} holds for all projections  $g(x)_{S_l}$ and $g(y)_{S_l}$ where $x, y \in P$, $l=1 \ldots t$. Such a partitioning can be shown to exist using the probabilistic method, and can be computed deterministically in time polynomial in $n$ via the method of conditional probabilities. Unfortunately, this property does not extend to the case where one of the points (say, $x$) is a query point from $X-P$.  Nevertheless, by averaging, there must be at least one block $S_l$ such that $H(g(x)_{S_l}, g(y)_{S_l} ) \le H(g(x),g(y))/t$, where $y$ is the nearest neighbor of $x$ in $P$. It can be then shown that an approximate near neighbor of $g(x)_{S_l}$ in $\{g(y)_{S_l}: y \in P\}$ is an approximate nearest neighbor of $x$ in $P$. Finding the nearest neighbor in the space restricted to a single block $S_l$  can be solved via exhaustive storage using $n^{1/\epsilon^{O(1)}}$ space, as in Theorem~\ref{t:KOR}.
 
 Perhaps surprisingly, the above construction  is the only known example of a polynomial-size deterministic approximate near neighbor data structure with a constant approximation factor. However, more progress has been shown for an ``intermediary'' problem, where the data structure avoids false negatives by reporting a special symbol $\perp$.
 
 \begin{definition}[$(c,r,\delta)$-Approximate Near Neighbor Without False Negatives (ANNWFN)]
\label{d:crdann}
Given a set $P$ of $n$ points in a metric space $(X,D)$, build a data structure ${\cal S}$ that, given any
query point $q \in X$ such that  $B(q,r) \cap P \neq \emptyset$, ${\cal S}$ returns an element of $(B(q,cr) \cap P) \cup \{\bot\}$, and  $\Pr_{\cal S} [ {\cal S}  \mbox{\ returns\ }  \bot] \le \delta$.
\end{definition}

A  $(1+\epsilon,r,\delta)$-ANNWFN data structure with bounds similar to those in Theorem~\ref{t:KOR} was given in \cite{indyk2000dimensionality}. It used densification and random block partitioning as described above. However, thanks to randomization, block partitioning  could be assumed to hold even for the query point with high probability. 

Obtaining ``no false negatives'' analogs of Theorem~\ref{t:im} turned out to be more difficult. The first such data structure was presented in~\cite{pagh2016locality}, for the Hamming space, achieving query time of the form (roughly) $dn^{1.38/c}$.  Building on that work, very recently, Ahle~\cite{ahle2017optimal} improved the bound to (roughly) $dn^{1/c}$, achieving the optimal runtime exponent.

 In addition to variants of densification and  random block partitioning, the latter algorithm uses a generalization of the space partitioning method from Section~\ref{ss:lsh}, called {\em locality sensitive filtering}.  Such objects can be constructed deterministically in time and space roughly exponential in the dimension. Unfortunately, random block partitioning leads to blocks whose length is larger than $\log n$ by at least a (large) constant, which results in large (although polynomial) time and space bounds. To overcome this difficulty, \cite{ahle2017optimal} shows how to combine filters constructed for dimension $d$ to obtain a filter for dimension $2d$.   This is achieved by using {\em splitters}~\cite{naor1995splitters}, which can be viewed as families of partitions  of $\{1 \ldots 2d\}$ into pairs of sets $(S_1, \overline{S_1}),  (S_2, \overline{S_2}), \ldots$  of size $d$, such that for any $x,y$, there is a pair  $(S_l,\overline{S}_l)$ for which  $H(x_{S_l}, y_{S_l}) = H(x_{\overline{S}_l}, y_{\overline{S}_l}) \pm 1$.   The construction multiplies the space bound by a factor quadratic in $d$, which makes it possible to apply it a small but super-constant number of times to construct filters for (slightly) super-logarithmic dimension. 

\section{Data-dependent approach}
\label{s:dd}

In the earlier sections, we considered ANN data structures that are
based on random and deterministic space partitions. The unifying
feature of all of the above approaches is that the partitions used are
independent of the dataset. This ``data-independence'' leads to
certain barriers: for instance, the \emph{best possible} LSH exponent
is $\rho \ge 1 / c-o(1)$ for the $\ell_1$ distance and $\rho \ge 1 /
c^2 - o(1)$ for $\ell_2$ (see Section~\ref{sec:lshLB}). In this
section, we show how to improve upon the above results significantly
if one allows partitions to depend on the dataset.

This line of study has been developed in a sequence of recent results~\cite{andoni2014beyond,andoni2015optimal,andoni2017optimal}. However,
even before these works, the data-dependent approach had been very popular in practice (see, e.g., surveys~\cite{wang2014hashing,wang2016learning}). Indeed, real-world datasets often have some implicit or explicit structure, thus it pays off to tailor space partitions to a dataset at hand. However, the  theoretical results from~\cite{andoni2014beyond,andoni2015optimal,andoni2017optimal} improve upon data-independent partitions for \emph{arbitrary} datasets. Thus, one must show that \emph{any} set of $n$ points has some structure that makes the ANN problem easier.

\subsection{The result}

In~\cite{andoni2015optimal} (improving upon~\cite{andoni2014beyond}), the following result has been shown.

\begin{theorem}
\label{t:optimal}
  For every $c > 1$, there exists a data structure for $(c, r)$-ANN over $(\Rbb^d, \ell_2)$ with space $n^{1 + \rho} + O(nd)$ and query time $n^{\rho} + dn^{o(1)}$,
  where
  $$
  \rho \leq \frac{1}{2c^2 - 1} + o(1).
  $$
\end{theorem}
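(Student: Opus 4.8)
The plan is to combine a near-optimal \emph{spherical} LSH family with a \emph{data-dependent} preprocessing step that reshapes an arbitrary instance into the one worst case that the spherical family handles optimally. The guiding principle is that the data-independent exponent $1/c^2$ is forced only by ``random-like'' instances --- roughly $n$ points spread out on a sphere with the query planted near one of them --- so if we can always reduce to (a mild generalization of) that instance, we can hope to do better. Concretely, I would first set up a hashing scheme on the unit sphere $S^{d-1}$ (``spherical cap carving'': e.g.\ send a point to the first of a sequence of random spherical caps that captures it, or to the nearest among many i.i.d.\ Gaussian directions), controlled by a granularity parameter $t\to\infty$, with the property that two points at Euclidean distance $s\in(0,2)$ collide with probability $\exp\!\big(-(1+o(1))\,\tfrac{s^2}{4-s^2}\,t\big)$. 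For a near distance $\alpha$ and a far distance $\beta$ this gives quality $\rho = \tfrac{\alpha^2(4-\beta^2)}{\beta^2(4-\alpha^2)}+o(1)$. The key calculation is that if the far points are essentially \emph{orthogonal} to the query, i.e.\ $\beta=\sqrt2$, while the near neighbor sits at $\alpha=\sqrt2/c$ (so that $c\alpha=\sqrt2=\beta$, consistently with $c$ being the ratio), this evaluates to $\tfrac{(2/c^2)\cdot 2}{2\cdot(4-2/c^2)} = \tfrac{1}{2c^2-1}$. So Theorem~\ref{t:optimal} reduces to the statement that every dataset can be transformed, with only $o(1)$ loss in the exponent, into a configuration in which all of $P\cup\{q\}$ lies on a sphere of radius $\approx cr/\sqrt2$ with the ``far'' points (distance $>cr$) essentially orthogonal to the query.

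The transformation I would build is a decision tree over sub-instances, each carrying a point set $S$ and an enclosing ball of radius $R$. A preprocessing pass applies Johnson--Lindenstrauss (Lemma~\ref{lem:JL}) to bring $d$ down to $n^{o(1)}$, and a standard bucketing makes the aspect ratio polynomial in $n$, so the recursion is finite; this also accounts for the $O(nd)$ and $dn^{o(1)}$ terms. At a node $(S,R)$: if $R$ has dropped below a threshold of order $(c-1)r$, every point of $S$ is an acceptable answer whenever the query genuinely has a near neighbor in $S$, and we stop. Otherwise we perform \emph{cluster removal}: as long as some ball of radius $(1-\eta)R$ (with $\eta=\eta(n)\to 0$ slowly) contains more than an $\eps$-fraction of the remaining points (for a subpolynomial $\eps$), we split that cluster off as a child --- which then recurses with a strictly smaller radius --- and delete its points from $S$. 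When no such dense ball survives, the residual $S$ is \emph{pseudo-random}: after recentering and pushing the points onto the enclosing sphere of radius $R(1+o(1))$ (a $(1+o(1))$-distortion of distances), it has no dense region at the relevant scale, so from any query's viewpoint all but a negligible fraction of the points behave as if orthogonal. On this residual set we apply one level of the spherical family with the tuned granularity $t$, create one child per bucket, and recurse into each child (whose enclosing radius now contracts). The final data structure is this tree, with enough independent repetitions at each spherical level to boost success to a constant, exactly as in the proof of Theorem~\ref{thm:LSHds}.

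The analysis splits into an easy half and a hard half. The easy half is the local collision estimate at a pseudo-random node of radius $R'\gtrsim cr/\sqrt2$: the true near neighbor collides with the query with probability $\ge\exp(-(1+o(1))\tfrac{r^2}{4R'^2-r^2}t)$, while the expected number of truly-far collisions is at most (the few points the dense cap could hold, themselves handled by recursion) plus $n\cdot\exp(-(1+o(1))\tfrac{(cr)^2}{4R'^2-(cr)^2}t)$; balancing and using $4R'^2\ge 2c^2r^2$ yields a per-level exponent $\le\tfrac{1}{2c^2-1}$, with equality in the extreme $R'=cr/\sqrt2$. \textbf{The main obstacle is the hard half: the global, amortized accounting over the tree.} One has to show that along any root-to-leaf path the number of peeling steps is $O(\eta^{-1}\log n)$ (the radius contracts geometrically) and the number of genuine hashing levels is small; that the per-level slacks --- the $(1+o(1))$ distance distortions, the $o(1)$ slack intrinsic to the spherical family, and the $\eta$ and $\eps$ slacks in the pseudo-randomness condition --- accumulate only additively to $o(1)$ in the exponent rather than compounding; and that the branching of the tree, together with recursing on every peeled cluster, still leaves the query cost at $n^{\rho}+dn^{o(1)}$ and the space at $n^{1+\rho}+O(nd)$. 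Choosing $\eta,\eps,t$, and the base-case threshold as the right functions of $n$ so that all slacks vanish at a compatible rate while the recursion stays finite is the delicate core of the argument.
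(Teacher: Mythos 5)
Your architecture is the same as the paper's (Sections 4.2--4.4): reduce to the sphere, use a Gaussian cap-carving family whose collision exponent at distance $s$ on a radius-$R'$ sphere is $(1+o(1))\tfrac{s^2}{4R'^2-s^2}\,t$, repeatedly peel dense low-diameter clusters and recurse on them, and hash the pseudo-random residual, recursing into the buckets. Even your peeling rule (no ball of radius $(1-\eta)R$ containing an $\eps$-fraction of the points) is essentially equivalent to the paper's removal of caps of radius $(\sqrt{2}-\Theta(\sqrt{\eta}))R$, since the set of sphere points within distance $(\sqrt{2}-\Theta(\sqrt{\eta}))R$ of a query is a cap that fits inside a ball of radius $(1-\eta)R$. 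The genuine gap is in what you call the easy half, which is exactly the step where the improvement from $1/c^2$ to $1/(2c^2-1)$ must come from. You bound the far collisions by $n\exp\bigl(-(1+o(1))\tfrac{(cr)^2}{4R'^2-(cr)^2}t\bigr)$, i.e., you charge every far point at the minimum far distance $cr$, and you never use the peeling invariant in the collision accounting (your parenthetical claims the residual points near the query are ``handled by recursion'', but they are not --- the recursion handles the peeled clusters; those at most $\eps n$ residual points are extra candidates and must be controlled by choosing $\eps\ll p_2$, the paper's $\tau\ll p_2$, not merely ``subpolynomial''). With your accounting the per-level exponent is $\frac{4R'^2-c^2r^2}{c^2(4R'^2-r^2)}$, which is \emph{increasing} in $R'$: it equals $\tfrac{1}{2c^2-1}$ only at $R'=cr/\sqrt{2}$ and tends to $1/c^2$ as $R'$ grows. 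So your claimed implication ``$4R'^2\ge 2c^2r^2\Rightarrow\rho\le\tfrac{1}{2c^2-1}$'' is backwards, and in the generic hard regime (after the sphere reduction $r=o(1)$, so $R'\gg cr$) your analysis only recovers the data-independent bound $1/c^2$, not the theorem.

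The correct argument (the paper's) splits the far points using the invariant: at a residual node, all but at most $\eps n\ (\ll p_2 n)$ of them lie at distance at least $(\sqrt{2}-o(1))R'$ from the query, so the denominator $\log(1/p_2)$ is taken at the orthogonality scale, i.e.\ $(1-o(1))t$, while the numerator is $\log(1/p_1)=\tfrac{r^2}{4R'^2-r^2}t\le\tfrac{t}{2c^2-1}$ using $r/R'\le\sqrt{2}/c$; the complementary case $r/R'\ge\sqrt{2}/c-o(1)$ is handled by the data-independent family alone, and only there is your formula both valid and at most $\tfrac{1}{2c^2-1}$. In other words, the entire purpose of the peeling is to raise the effective far distance from $cr$ to roughly $\sqrt{2}R'$, and your write-up never cashes that in. Separately, the global accounting you defer (accumulation of the $o(1)$ slacks, the number of peeling and hashing levels, the tree size, and the constraint $\eps\ll p_2$) is where most of the remaining work lies; deferring it is understandable since the survey itself only sketches it, but the per-level bound as you stated it does not establish $\rho\le\tfrac{1}{2c^2-1}+o(1)$.
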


This is much better than the best LSH-based data structure, which has
$\rho = \frac{1}{c^2} + o(1)$. For instance, for $c = 2$, the above
theorem improves the query time from $n^{1/4 + o(1)}$ to $n^{1/7 +
  o(1)}$, while using less memory.

Next, we describe the new approach at a high level.

\subsection{Simplification of the problem}

Before describing new techniques, it will be convenient to introduce a few simplifications.
First,
we can assume that $d = \log^{1 + o(1)} n$, by applying Lemma~\ref{lem:JL}. Second, we can reduce the
general ANN problem over $(\Rbb^d, \ell_2)$ to the \emph{spherical} case: where dataset and queries lie on the unit sphere $S^{d-1} \subset \Rbb^d$~(see \cite{razenshteyn2017high}, pages 55--56).
Both the dimension reduction and the reduction to the spherical case incur a negligible loss in the approximation\footnote{Approximation $c$ reduces to approximation $c - o(1)$.}. After the reduction to the
spherical case, the distance to the near neighbor $r$ can be made to be any function of the number of points $n$ that
tends to zero as $n \to \infty$ (for example, $r = \frac{1}{\log \log n}$).

\subsection{Data-independent partitions for a sphere}

In light of the above discussion, we need to solve the $(c, r)$-ANN problem for $S^{d-1}$, where $d = \log^{1 + o(1)} n$ and $r = o(1)$.
Even though the final data structure is based on data-dependent partitions, we start with developing a \emph{data-independent} LSH scheme
for the unit sphere, which will be later used as a building block.

The LSH scheme is parametrized by a number $\eta > 0$. Consider a sequence of i.i.d.\ samples from a standard $d$-dimensional
Gaussian distribution $N(0, 1)^d$: $g_1, g_2, \ldots, g_t, \ldots \in \Rbb^d$. The hash function $h(x)$
of the point $x \in S^{d-1}$ is then defined as $\min_t \{t \geq 1 \mid \langle x, g_t\rangle \geq \eta\}$. This LSH family gives
the following exponent $\rho$ for distances $r$ and $cr$:
\begin{equation}
  \label{rho_ball_carving}
\rho = \frac{\log 1 / p_1}{\log 1 / p_2} = \frac{4 - c^2 r^2}{4 - r^2} \cdot \frac{1}{c^2} + \delta(r, c, \eta),
\end{equation}
where $\delta(r, c, \eta) > 0$ and $\delta(r, c, \eta) \to 0$ as $\eta \to \infty$. Thus, the larger the value of the threshold $\eta$ is, the more efficient the resulting
LSH scheme is. At the same time, $\eta$ affects the efficiency of hash functions.
Indeed, one can show that with very high probability $\max_{x \in S^{d-1}} h(x) \leq e^{(1 + o(1)) \eta^2 / 2} \cdot d^{O(1)}$,
 which bounds the hashing time as well as the number of Gaussian vectors to store.

Consider the expression~(\ref{rho_ball_carving})
for the exponent $\rho$ in more detail. If $r = o(1)$, then we obtain $\rho = \frac{1}{c^2} + o(1)$,
which matches the guarantee of the best data-independent LSH for $\ell_2$.
This is hardly surprising, since, as was mentioned above, the general ANN problem over $\ell_2$ can be reduced to the $(c, r)$-ANN problem over the sphere for $r = o(1)$.
If $r \approx 2/c$, then $\rho$ is close to zero, and, indeed, the $(c, 2 / c)$-ANN problem on the sphere
is trivial (any point can serve as an answer to any valid query).

Between these two extremes, there is a point $r \approx \frac{\sqrt{2}}{c}$ that is crucial for the subsequent discussion.
Since the distance between a pair of random points on $S^{d-1}$ is close to $\sqrt{2}$ with high probability, the problem where $r$ is \emph{slightly} smaller than  $\frac{\sqrt{2}}{c}$
has the following interpretation: if one is guaranteed to have a data point within distance $r$ from the query,
 find a data point that is a bit closer to the query than a typical point on the sphere.
For $r \approx \frac{\sqrt{2}}{c}$, the equation~(\ref{rho_ball_carving}) gives exponent $\rho \approx \frac{1}{2c^2 - 1}$, which
is significantly smaller than the bound $\frac{1}{c^2}$ one is getting for $r = o(1)$. Later, using a certain data-dependent partitioning procedure, we will be able to reduce the \emph{general} ANN problem on the sphere to this intermediate case of $r \approx \frac{\sqrt{2}}{c}$,
thus obtaining the ANN data structure with the exponent $\rho = \frac{1}{2c^2 - 1} + o(1)$. This significantly improves upon the best possible LSH for $\ell_2$ from Section~\ref{s:di}, which yields $\rho = \frac{1}{c^2} + o(1)$.

\subsection{Data-dependent partitions}

We now describe at a high level how to obtain a data structure with space $n^{1 + \rho}$ and query time $n^{\rho}$, where $\rho = \frac{1}{2c^2 - 1} + o(1)$, for the $(c, r)$-ANN problem on the sphere for general $r > 0$. If $r \geq \frac{\sqrt{2}}{c} - o(1)$, then we
can simply use the data-independent LSH described above. Now suppose $r$ is nontrivially smaller than $\frac{\sqrt{2}}{c}$.

We start with finding and removing \emph{dense low-diameter clusters}. More precisely, we repeatedly
find a point $u \in S^{d-1}$ such that $|P \cap B(u, \sqrt{2} - \eps)| \geq \tau n$, where $\eps, \tau = o(1)$, and set
$P := P \setminus B(u, \sqrt{2} - \eps)$. We stop when there are no more dense clusters remaining.
Then we proceed with clusters and the remainder separately.
Each cluster  is enclosed in a ball of radius $1 - \Omega(\eps^2)$ 
and processed recursively.
For the remainder, we sample one partition from the data-independent LSH family described above, apply it to the dataset, and
process each resulting part of the dataset recursively. During the query stage, we (recursively) query the data structure
for \emph{every} cluster (note that the number of clusters is at most $1 / \tau$), and for the remainder we query (again, recursively) a part of the partition, where the query belongs to.
Each step of the aforementioned procedure makes progress as follows. For clusters, we decrease the radius by a factor of $1 - \Omega(\eps^2)$.
It means that we come slightly closer to the ideal case of $r \approx \frac{\sqrt{2}}{c}$, and the instance corresponding to the cluster
becomes easier. For the remainder, we use the fact that there are at most $\tau n$ data points closer than $\sqrt{2} - \eps$ to the query.
Thus, when we apply the data-independent LSH, the expected number of data points in the same part as the query is at most $(\tau + p_2) n$,
where $p_2$ is the probability of collision under the LSH family for points at the distance $\sqrt{2} - \eps$. We set $\tau \ll p_2$,
thus the number of colliding data points is around $p_2 n$. At the same time, the probability of collision with the near neighbor
is at least $p_1$, where $p_1$ corresponds to the distance $r$. Since $r < \frac{\sqrt{2}}{c}$, we obtain an effective exponent
of at most $\frac{1}{2c^2 - 1} + o(1)$. Note that we need to keep extracting the clusters recursively to be able to apply the above reasoning about the remainder set
in each step.

One omission in the above high-level description is that the clusters are contained in smaller \emph{balls} rather than \emph{spheres}. This is
handled by partitioning balls into thin annuli and treating them as spheres (introducing negligible distortion).

\subsection{Time--space trade-off}

In~\cite{andoni2017optimal}, Theorem~\ref{t:optimal} has been extended to provide a smooth time--space trade-off for the ANN problem.
Namely, it allows to decrease the query time at a cost of increasing the space and vice versa.

\begin{theorem}
\label{t:tradeoff}
  For every $c > 1$ and every $\rho_s, \rho_q$ such that
  \begin{equation}
    \label{tradeoff_formula}
  c^2 \sqrt{\rho_q} + (c^2 - 1) \sqrt{\rho_s} \geq \sqrt{2c^2 - 1},
  \end{equation}
  there exists a data structure for $(c, r)$-ANN over $(\Rbb^d, \ell_2)$ with space $n^{1 + \rho_s + o(1)} + O(nd)$ and query time $n^{\rho_q + o(1)} + dn^{o(1)}$.
\end{theorem}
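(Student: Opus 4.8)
The plan is to prove Theorem~\ref{t:tradeoff} by generalizing the data-dependent construction behind Theorem~\ref{t:optimal}, replacing the single family of data-independent spherical LSH functions by a family of \emph{asymmetric} hash functions that treat queries and data points differently, thereby allowing an uneven split of the ``work'' between space and query time. First I would set up the same reductions as in the proof of Theorem~\ref{t:optimal}: apply Lemma~\ref{lem:JL} to reduce to $d = \log^{1+o(1)} n$, reduce to the spherical case on $S^{d-1}$, and reduce to the regime where $r = o(1)$. This part is identical to the symmetric case and incurs only a $c \to c - o(1)$ loss.

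Next I would introduce the asymmetric spherical filter family. Instead of hashing a point $x$ by $\min_t\{t : \langle x, g_t\rangle \geq \eta\}$ with a single threshold, one uses two thresholds $\eta_q$ (applied to the query) and $\eta_u$ (applied to data points), or equivalently a filtering scheme where a data point $u$ is placed in filter $t$ whenever $\langle u, g_t\rangle \geq \eta_u$ and a query $q$ probes filter $t$ whenever $\langle q, g_t\rangle \geq \eta_q$. A Gaussian tail computation (the bivariate-normal estimate underlying~(\ref{rho_ball_carving})) then gives: the probability a near neighbor at distance $r$ is retrieved, the probability a random point at distance $\approx \sqrt{2}$ collides, and the number of filters a data point / query lands in. Writing $\rho_q$ for the exponent governing the number of probed filters times the near-neighbor miss rate, and $\rho_s$ for the exponent governing the number of stored copies, one obtains in the limit $r = o(1)$ a relation of the form $c^2\sqrt{\rho_q} + (c^2-1)\sqrt{\rho_s} \geq \sqrt{2c^2-1}$ after optimizing over $\eta_q, \eta_u$; this is exactly~(\ref{tradeoff_formula}), and setting $\rho_q = \rho_s$ recovers $\rho = \frac{1}{2c^2-1}$.

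Then I would run the same data-dependent clustering recursion as in Theorem~\ref{t:optimal}: repeatedly extract dense low-diameter clusters $P \cap B(u, \sqrt 2 - \eps)$ with $|P \cap B(u,\sqrt2-\eps)| \geq \tau n$, recurse on each cluster (which sits in a ball of radius $1 - \Omega(\eps^2)$, handled via thin annuli), and on the remainder apply one level of the asymmetric spherical filter and recurse. The pruning guarantees that on the remainder the only surviving ``bad'' collisions are at distance $\approx \sqrt 2$, so the $r = o(1)$ analysis of the filter family applies with $r$ effectively pushed to the crucial value $\sqrt 2 / c$, yielding the claimed exponents $\rho_s, \rho_q$ subject to~(\ref{tradeoff_formula}). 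The bookkeeping — controlling the depth of the recursion, the $n^{o(1)}$ factors from the number of clusters, the annulus discretization, and the additive $O(nd)$ / $dn^{o(1)}$ terms — is routine but tedious.

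The main obstacle I expect is pinning down the asymmetric Gaussian tail estimate and showing that optimizing the two thresholds $\eta_q, \eta_u$ yields precisely the clean trade-off curve~(\ref{tradeoff_formula}) rather than some weaker region — i.e.\ proving both the achievability of every $(\rho_q, \rho_s)$ on the curve and that the error/space/time quantities genuinely decouple into the two exponents. A secondary subtlety is verifying that the data-dependent clustering step is compatible with asymmetry: the recursion must maintain the invariant that queries and data points are filtered with the \emph{same} pair of thresholds at every level while the cluster radii shrink, so that the ``effective distance'' argument still closes. Once those two points are handled, the rest follows the template of Theorem~\ref{t:optimal} essentially verbatim.
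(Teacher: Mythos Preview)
The paper does not actually prove Theorem~\ref{t:tradeoff}; it merely states the result and cites~\cite{andoni2017optimal} for the proof, so there is no ``paper's own proof'' to compare against. Your proposal is nonetheless an accurate sketch of the argument in the cited reference: the construction there indeed combines the data-dependent clustering recursion of Theorem~\ref{t:optimal} with an \emph{asymmetric} spherical locality-sensitive filter family using two separate Gaussian thresholds $\eta_q,\eta_u$ for queries and data points, and the trade-off curve~(\ref{tradeoff_formula}) arises from optimizing over these two thresholds via a bivariate Gaussian tail computation. Your identification of the two nontrivial points --- deriving the exact curve from the asymmetric tail bounds, and checking that the clustering recursion interacts correctly with the asymmetry --- is also on target; both are handled in~\cite{andoni2017optimal}, with the first being the genuinely new ingredient relative to Theorem~\ref{t:optimal}.
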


The bound~(\ref{tradeoff_formula}) interpolates between:
\begin{itemize}
\item The near-linear space regime: $\rho_s = 0$, $\rho_q = \frac{2}{c^2} - \frac{1}{c^4}$;
\item The ``balanced'' regime: $\rho_s = \rho_q = \frac{1}{2c^2 - 1}$, where it matches Theorem~\ref{t:optimal};
\item The fast queries regime: $\rho_s = \left(\frac{c^2}{c^2 - 1}\right)^2$, $\rho_q = 0$.
\end{itemize}

For example, for $c = 2$, one can obtain any of the following trade-offs: space $n^{1 + o(1)}$ and query time $n^{7/16 + o(1)}$,
space $n^{8/7 + o(1)}$ and query time $n^{1/7+o(1)}$, and space $n^{16/9+o(1)}$ and query time $n^{o(1)}$.

Theorem~\ref{t:tradeoff} significantly improves upon the previous ANN data structures in various regimes~\cite{indyk1998approximate, kushilevitz2000efficient, indyk2000high, panigrahy2006entropy, kapralov2015smooth}. For example, it improves the dependence on $\eps$ in Theorem~\ref{t:im}
from $O(\log(1 / \eps) / \eps^2)$ to $O(1 / \eps^2)$.

\subsection{Impossibility results}

Similarly to the data-independent case, it is natural to ask whether
exponent $\rho=\tfrac{1}{2c^2-1}+o(1)$ from Theorem~\ref{t:optimal} is
optimal for data-dependent space partitions. In~\cite{andoni16tight},
it was shown that the above $\rho$ is near-optimal in a properly
formalized framework of data-dependent space partitions. This
impossibility result can be seen as an extension of the results
discussed in Section~\ref{sec:lshLB}.

Specifically, \cite{andoni16tight} show that $\rho \geq \frac{1}{2c^2
  - 1}$, where $\rho = \frac{\log 1 / p_1}{\log 1 / p_2}$ for $p_1$
and $p_2$ being certain natural counterparts of the LSH collision
probabilities for the data-dependent case, even when we allow the
distribution on the partitions to depend on a dataset. This result
holds under two further conditions. First, as in
Section~\ref{sec:lshLB}, we need to assume that $p_2$ is not too
small.

The second condition is specific to the data-dependent case, necessary
to address another necessary aspect of the space partition. For any
dataset, where all the points are sufficiently well separated, we can
build an ``ideal'' space partition, with $\rho=0$, simply by
considering its Voronoi diagram. However, this is obviously not a
satisfactory space partition: it is algorithmically hard to compute
fast where in the partition a fixed query point $q$ falls to --- in
fact, it is precisely equivalent to the original nearest neighbor
problem! Hence, to be able to prove a meaningful lower bound on
$\rho$, we would need to restrict the space partitions to have low
run-time complexity (e.g., for a given point $q$, we can compute the
part where $q$ lies in, in time $n^{o(1)}$). This precise restriction
is well beyond reach of the current techniques (it would require
proving computational lower bounds). Instead, \cite{andoni16tight} use
a different, proxy restriction: they require that the
\emph{description complexity} of partitions is $n^{1 -
  \Omega(1)}$. The latter restriction is equivalent to saying that the
distribution of partitions (which may depend on the given dataset) is
supported on a fixed (universal) family of partitions of the size
$2^{n^{1 - \Omega(1)}}$. This restriction, for instance, rules out the
Voronoi diagram, since the latter has a description complexity of
$\Omega(n)$. Furthermore, the description complexity of a randomized
partition is a good proxy for the run-time complexity of a partition
because in all the known constructions of random space partitions with
a near-optimal $\rho$, the run-time complexity is at least the
description complexity, which makes the requirement meaningful.

Overall, under the above two conditions, \cite{andoni16tight} show
that $\rho\ge \tfrac{1}{2c^2-1}-o(1)$ for data-dependent random space
partitions, and hence Theorem~\ref{t:optimal} is essentially optimal in this
framework.

\subsection{ANN for $\ell_\infty$}
\label{sec_linfty}

In this subsection we will describe another type of data-dependent data structure, for the $\ell_{\infty}$ norm. Historically, this was the first example of a data-dependent partitioning procedure used for ANN over high-dimensional spaces.
\begin{theorem}[\cite{indyk2001approximate}]
  For every $0 < \eps < 1$, there exists a \emph{deterministic} data structure for $(c, 1)$-ANN for $(\Rbb^d, \ell_\infty)$ with
approximation $c = O\left(\frac{\log \log d}{\eps}\right)$, space $O(d n^{1 + \eps})$ and query time $O(d \log n)$.
\end{theorem}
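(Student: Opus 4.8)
The plan is to build a recursive, data-dependent partition of $\mathbb{R}^d$ under $\ell_\infty$ that exploits the coordinate structure of the norm. The key observation driving the whole argument is that $\ell_\infty$ distance is determined by a single "bad" coordinate: $\|x-q\|_\infty \le c$ iff $|x_i - q_i| \le c$ for all $i$. So if the query $q$ has a near neighbor $p$ with $\|q-p\|_\infty \le 1$, then $p$ agrees with $q$ to within $1$ on \emph{every} coordinate, whereas for the approximation guarantee we only need to certify that some returned point agrees to within $c$ on every coordinate. The data structure will be a decision tree where each node either (i) "resolves" one coordinate $i$ by splitting the current point set according to whether $x_i$ lies in a small number of intervals, or (ii) at a leaf, stores a single point to be returned directly.

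\textbf{First} I would set up the recursion on subsets $P' \subseteq P$ together with a coordinate budget. At a node handling $P'$, I look for a coordinate $i$ and a threshold such that the hyperplane-like slab $\{x : x_i \in [a, a+2c]\}$ (roughly) splits $P'$ into a part with $x_i < a$ and a part with $x_i > a$, each of size at most (say) a $(1-\eps/d)$ fraction, i.e.\ the split is reasonably balanced; the middle slab of width $O(c)$ is handled by recursing on both children (a point in the slab is "close enough" on coordinate $i$ to either side). Because any point in the slab is within $c$ of both sub-instances on this coordinate, a query falling near the boundary can be sent to just one child without losing correctness, provided we choose $c$ large enough to absorb the slab width. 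The leaves are reached once $|P'|=1$ or once we have descended through enough coordinates; the approximation factor $c = O(\log\log d / \eps)$ arises from balancing how much "slack" each of the $\approx \log\log d$ levels of coordinate-resolution costs.

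\textbf{The core combinatorial step}, and the main obstacle, is proving that a balanced coordinate split always exists with the claimed parameters — or, when no single coordinate gives a balanced split, that the point set is already "trivial" in the sense that one stored point $c$-covers a large fraction of $P'$, so that fraction can be pruned to a leaf. This is exactly the data-dependent case analysis: either there is a coordinate along which $P'$ is "spread out" (giving a good split and reducing $|P'|$ by a constant factor, hence depth $O(\log n)$ worth of such nodes along any root–leaf path), or $P'$ is "clustered" in $\ell_\infty$, in which case we exploit the clustering directly. Controlling the interaction — how the approximation slack accumulates over the $O(\log n)$ splitting steps while keeping $c$ only $O(\log\log d/\eps)$ — is the delicate part, and is where the $\log\log d$ (rather than $\log d$ or $\log n$) enters: one shows that only about $\log\log d$ of the splits along any path can be "expensive" in approximation, the rest being essentially free.

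\textbf{Finally} I would bound the resources. The tree has depth $O(\log n)$ in terms of size-reducing splits, so query time is $O(d\log n)$ (reading one coordinate of $q$ per node, plus $O(d)$ at the leaf to verify the answer). For space, the branching caused by the width-$O(c)$ overlap slabs is what inflates $n$ to $n^{1+\eps}$: each point can be duplicated into both children of an overlap node, and a careful accounting (charging duplications to the $\eps/d$ imbalance slack, summed over $d$ coordinates) yields total size $O(d\,n^{1+\eps})$. The construction is deterministic because the search for a good split coordinate at each node is an exhaustive check over the $d$ coordinates and the $O(|P'|)$ candidate thresholds, with the existence of a good choice guaranteed by the combinatorial dichotomy above rather than by any probabilistic argument.
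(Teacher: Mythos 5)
Your proposal has the right skeleton --- a deterministic, data-dependent recursive partition that at each node either exploits an $\ell_\infty$-dense cluster or cuts along a single coordinate, with the middle slab replicated into both children to get $n^{1+\eps}$ space --- but there is a genuine gap at exactly the step you flag as ``the main obstacle.'' The paper's proof rests on a precise structural lemma: for any $n$-point $P$ and $0<\eps<1$, either (1) some $\ell_\infty$-ball of radius $O\bigl(\tfrac{\log\log d}{\eps}\bigr)$ contains $\Omega(n)$ points of $P$, or (2) there is a coordinate $i$ and threshold $u$ such that, writing $A=\{p: p_i<u-1\}$, $B=\{p: u-1\le p_i\le u+1\}$, $C=\{p: p_i>u+1\}$, one has $\bigl(\tfrac{|A|+|B|}{n}\bigr)^{1+\eps}+\bigl(\tfrac{|B|+|C|}{n}\bigr)^{1+\eps}\le 1$ together with $|A|/n,\,|C|/n\ge\Omega(1/d)$. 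You assert a dichotomy of this flavor but do not prove it, and your quantitative version (a $(1-\eps/d)$-balanced split, duplications ``charged to the $\eps/d$ imbalance slack'') is not the statement that makes the accounting work: the $O(dn^{1+\eps})$ space bound follows by induction precisely from the $(1+\eps)$-power inequality above, and the $O(\log\log d/\eps)$ approximation is the radius appearing in case (1) of that lemma. Establishing this trade-off between cluster radius and cut quality is the heart of the result, and it is missing from your argument.

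Moreover, your proposed mechanism for where $\log\log d$ comes from does not match how the proof actually controls the approximation, and as described it would not obviously work. In the paper the slab has width exactly $2$ (twice the scale $r=1$), so when the query recurses into $A\cup B$ (if $q_i\le u$) or $B\cup C$ (otherwise), its true near neighbor is guaranteed to lie in that child; \emph{no} approximation slack is incurred at cut nodes, so nothing accumulates over the many levels of the tree. All of the loss is paid once, in the cluster case: if $q\in B(x,R+1)$ for a stored dense ball $B(x,R)$ with $R=O(\log\log d/\eps)$, any point of that ball is returned, at distance $O(R)$. Your plan instead uses slabs of width $O(c)$ and argues that slack accrues per split but ``only about $\log\log d$ of the splits along any path can be expensive''; you give no argument for this, and without the structural lemma there is no reason the slack over an $\Omega(\log n)$-deep (indeed up to $O(d\log n)$-deep) path stays bounded by $O(\log\log d/\eps)$ rather than growing with the depth. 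To repair the proposal you would need to (a) fix the slab width to the scale $r$ so correctness is exact at cut nodes, and (b) prove the cluster-or-cut lemma with the stated replication bound, which is where both the $n^{1+\eps}$ space and the $\log\log d$ approximation actually come from.
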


The algorithm relies on the following structural lemma.

\begin{lemma}
  Let $P \subset \Rbb^d$ be a set of $n$ points and $0 < \eps < 1$. Then:
  \begin{enumerate}
  \item Either there exists an $\ell_\infty$-ball of radius $O\left(\frac{\log \log d}{\eps}\right)$
    that contains $\Omega(n)$ points from $P$, or
  \item There exists a ``good'' coordinate $i \in \{1, 2, \ldots, d\}$ and a threshold $u \in \Rbb$ such that
    for the sets $A = \{p \in P \mid p_i < u - 1\}$, $B = \{p \in P \mid u - 1 \leq p_i \leq u + 1\}$
    and $C = \{p \in P \mid p_i > u + 1\}$ one has:
    \begin{equation}
      \label{replication_bound}
    \left(\frac{|A| + |B|}{n}\right)^{1 + \eps} + \left(\frac{|B| + |C|}{n}\right)^{1 + \eps} \leq 1
    \end{equation}
    and $|A| / n, |C| / n \geq \Omega(1 / d)$.
  \end{enumerate}
\end{lemma}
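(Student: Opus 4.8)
The plan is to exhibit a dichotomy via a greedy "cut-finding" argument along a single coordinate. First I would set up the easy side: if \emph{some} $\ell_\infty$-ball of radius $R = O(\log\log d / \eps)$ contains $\Omega(n)$ points, we are done, so assume not. Under this assumption I want to show that there is a coordinate $i$ and a slab $[u-1,u+1]$ whose removal splits $P$ into two pieces $A, C$ that are both noticeably large ($\Omega(n/d)$ each) and such that the "replication bound"~(\ref{replication_bound}) holds. The key quantitative idea is that~(\ref{replication_bound}) is a statement about the sizes of $A\cup B$ and $B\cup C$ relative to $n$: writing $a = (|A|+|B|)/n$ and $b = (|B|+|C|)/n$, we have $a + b = 1 + |B|/n$, so the slab $B$ must be thin ($|B|/n$ small) for $a^{1+\eps} + b^{1+\eps} \le 1$ to have a chance; conversely, for thin enough $B$ and $a,b$ bounded away from $1$, convexity of $t \mapsto t^{1+\eps}$ gives the bound. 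So the real task is: find a coordinate and a threshold whose $1$-neighborhood (the slab of width $2$) contains only a tiny fraction of $P$, while still having constant-fraction mass strictly on each side.

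The main step is to produce such a "thin balanced cut." Here is how I would do it. Fix attention on coordinate $i$ and sort the points by their $i$-th coordinate. Consider a dyadic/greedy scan: look for a threshold $u$ such that the open slab $\{u-1 \le p_i \le u+1\}$ contains at most, say, an $n/d^{O(1)}$ fraction — more precisely few enough that~(\ref{replication_bound}) holds — with constant mass on both sides. If \emph{no} coordinate admits such a cut, then along every coordinate the mass is "spread out" in a controlled way: every window of width $2$ that has constant mass on both sides is heavy. Iterating this obstruction across all $d$ coordinates is what forces a low-diameter ball: one shows that the point set can be trapped, coordinate by coordinate, inside an interval whose length grows only like $O(\log\log d / \eps)$ before a constant fraction of the points is captured. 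The bookkeeping is a potential/peeling argument: each coordinate for which no good cut exists lets us shrink the "active interval" on that coordinate while retaining an $\Omega(n)$-fraction of points, and after processing all $d$ coordinates the surviving points lie in an $\ell_\infty$-ball of the claimed radius. The $\log\log d$ (rather than $\log d$) comes from the fact that we only need to retain $\Omega(n)$ points through $d$ rounds, so we can afford to lose a $(1 - c/\log d)$-type factor per round, and the interval length that suffices to capture the loss grows logarithmically in the \emph{number of rounds needed}, i.e.\ doubly logarithmically in $d$; tuning $\eps$ into the per-round loss is what produces the $1/\eps$.

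The hard part will be getting the exponents in~(\ref{replication_bound}) to come out with the $1+\eps$ power rather than something weaker, and simultaneously keeping $|A|/n, |C|/n \ge \Omega(1/d)$. The tension is that a \emph{very} balanced cut ($|A| \approx |C| \approx n/2$) makes $a^{1+\eps} + b^{1+\eps}$ close to $2 \cdot (1/2)^{1+\eps} < 1$ easily, so balance is our friend there; but a balanced cut is harder to guarantee while also keeping the slab $B$ thin. The resolution I would pursue is to not insist on near-perfect balance: it is enough that $\max(a,b) \le 1 - \Omega(1/d)$ and $|B|/n$ is smaller than roughly $\eps/d$, since then $a^{1+\eps} + b^{1+\eps} \le (a+b)\max(a,b)^{\eps} \le (1 + |B|/n)(1 - \Omega(1/d))^{\eps} \le 1$ for an appropriate constant, using $(1+x)(1-y)^\eps \le 1$ when $x \lesssim \eps y$. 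So the cut I look for is "one side has between $\Omega(n/d)$ and $(1-\Omega(1/d))n$ points, and the width-$2$ slab around the cut is $O(\eps n/d)$-light." Showing that the \emph{absence} of such a cut on all $d$ coordinates forces the heavy low-diameter ball is then the crux, and I expect it to require a careful amortized argument over a sorted order on each coordinate — essentially, if every light balanced cut is forbidden, the empirical distribution along each axis is "(1+\eps)-flat" on the relevant scale, and flatness on all axes simultaneously confines most of the mass to a box of side $O(\log\log d/\eps)$.
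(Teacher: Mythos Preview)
The paper does not actually prove this lemma; it is stated as a structural fact and attributed to \cite{indyk2001approximate}, so there is no in-paper argument to compare against. That said, your plan diverges from the argument in \cite{indyk2001approximate} at the crucial step, and the divergence is a genuine gap.

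Your analysis of the replication bound is essentially correct: writing $a=(|A|+|B|)/n$, $b=(|B|+|C|)/n$, one has $a+b=1+|B|/n$, and for the smaller side of order $1/d$ one indeed needs $|B|/n\lesssim \eps/d$. Where your plan goes wrong is the mechanism you propose for producing the $O(\log\log d/\eps)$ radius. You describe ``processing all $d$ coordinates as rounds'' and ``losing a $(1-c/\log d)$-type factor per round''; but $d$ rounds at that rate retain only an $e^{-\Theta(d/\log d)}$ fraction, not $\Omega(1)$, and no variant of this per-coordinate peeling yields a doubly-logarithmic radius. The $\log\log d$ does \emph{not} come from iterating over coordinates.

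The actual source of $\log\log d$ is a recursion \emph{within a single coordinate}. Fix coordinate $i$, let $m_i$ be its median, and place candidate thresholds at $u=m_i\pm(2j{+}1)$ for $j=0,1,2,\ldots$; write $\gamma_j$ for the fraction of points with $|p_i-m_i|\ge 2j$, so $\gamma_0\le 1$, $\gamma_1\le 1/2$. At level $j$ one has $|A|/n\ge 1/2$ automatically, and the replication bound reads $(1-\gamma_{j+1})^{1+\eps}+\gamma_j^{1+\eps}\le 1$. If this \emph{fails}, a one-line computation gives $\gamma_{j+1}<\gamma_j^{1+\eps}$. Iterating, $\gamma_j<(1/2)^{(1+\eps)^{j-1}}$, which drops below $1/(4d)$ after $j=O(\log\log d/\eps)$ steps. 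Thus for each coordinate, either some level $j\le O(\log\log d/\eps)$ gives a good cut (the replication bound holds and $\gamma_{j+1}\ge\Omega(1/d)$, yielding Case~2), or the tail beyond distance $O(\log\log d/\eps)$ from $m_i$ has mass $<1/(4d)$. If the latter happens on \emph{every} coordinate, a union bound over the $2d$ half-axes places at least $n/2$ points in the $\ell_\infty$-ball of radius $O(\log\log d/\eps)$ around the coordinate-wise median, giving Case~1. The doubly-exponential recursion $\gamma_{j+1}<\gamma_j^{1+\eps}$---driven precisely by the \emph{failure} of the $(1+\eps)$-power inequality---is the missing idea in your proposal.
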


Using this lemma, we can build the data structure for $(c, 1)$-ANN for $(\Rbb^d, \ell_\infty)$ recursively.   If there exists a ball $B(x, R)$ with $R = O\left(\frac{\log \log d}{\eps}\right)$
such that $|P \cap B(x, R)| \geq \Omega(n)$ (Case 1), then we store $x$ and $R$ and continue partitioning $P \setminus B(x, R)$ recursively.
If there exists a good coordinate $i \in \{1, 2, \ldots, d\}$ and a threshold $u \in \Rbb$ (Case 2), then we define sets $A$, $B$, $C$
as in the above lemma and partition $A \cup B$ and $B \cup C$ recursively.
We stop as soon as we reach a set that consists of $O(1)$ points.

The query procedure works as follows. Suppose there is a point in $P$ within distance $1$ from $q$ (``the near neighbor''). If we are in Case 1,
 we check if the query point $q$ lies in $B(x, R + 1)$. If it does, we return any data point from $B(x, R)$; f not, we query the
remainder recursively. On the other hand, if we are in Case 2, we query $A \cup B$ if $q_i \leq u$, and $B \cup C$ otherwise. In this case we recurse on the part which is guaranteed to contain a near neighbor.

Overall, we always return a point within distance $O\left(\frac{\log \log d}{\eps}\right)$, and it is straightforward to bound the query time
by bounding the depth of the tree. We obtain the space bound of $O(d n^{1 + \eps})$ by using the property~(\ref{replication_bound}) to bound
the number of times points that are replicated in the Case 2 nodes.

Surprisingly, the approximation $O(\log \log d)$ turns out to be
\emph{optimal} in certain restricted models of
computation~\cite{andoni2008hardness,kapralov2012nns}, including for
the approach from \cite{indyk2001approximate}.

\section{Closest pair}

A problem closely related to ANN is the closest pair problem, which
can be seen as an ``offline'' version of ANN. 
Here, we are given a set
$P$ of $n$ points, and we need to find a pair  $p,q\in P$ of distinct points that
minimize their distance. 

A trivial solution is to compute the distance
between all possible ${n \choose 2}$ pairs of points and take the one
that minimizes the distance. However this procedure has quadratic running time.
 As for the nearest neighbor problem, there is evidence
that for, say, $d$-dimensional $\ell_2$ space, the closest pair
problem cannot be solved in time $n^{2-\alpha}d^{O(1)}$ for any constant $\alpha>0$.

As with $c$-ANN, we focus on the approximate version of the
problem. Furthermore, we consider the {\em decision version}, where we
need to find a pair of points that are below a certain threshold
$r$. The formal definition (for the randomized variant) follows.

\begin{definition}[$(c,r)$-approximate close pair problem, or $(c,r)$-CP]
Given a set of points $P\subset X$ of size $n$, if there exist
distinct points $p^*,q^*\in X$ with $\d(p^*, q^*)\le r$,  find a
pair of distinct points $p,q \in P$ such that $\d(p,q)\le cr$, with probability at
least 2/3.
\end{definition}

The $(c,r)$-CP problem is closely related to the $(c,r)$-ANN problem
because we can solve the former using a data structure for the
latter. In particular, one can run the following procedure: 
partition $P$ into two sets $A$, $B$ randomly; build $(c,r)$-ANN
on the set $A$;  query every point $q\in B$. It is easy to see that
one such run succeeds in solving a $(c,r)$-approximate close pair with
probability at least $1/2\cdot 2/3$. Repeating the procedure 3 times
is enough to guarantee a success probability of $2/3$. If $(c,r)$-ANN
under the desired metric can be solved with query time $Q(n)$ and
preprocessing time $S(n)$, we obtain a solution for $(c,r)$-CP running
in time $O(S(n)+nQ(n))$. For example, applying the reduction from above for $(\R^d, \ell_p)$
space for $p\in\{1,2\}$, we immediately obtain an algorithm running in
$O(dn^{1+\rho})$ time, where $\rho=\tfrac{1}{2c^p-1}+o(1)$ (Section 
\ref{s:dd}).

Focusing on the case of $\ell_2$, and approximation $c=1+\eps$, the
above algorithm has runtime $O(n^{2-4\eps+O(\eps^2)}d)$. It turns out
that, for the $\ell_2$ norm,  one can obtain algorithms with a better dependance on $\eps$, for
small $\eps$. In particular, the line of work from
\cite{valiant2015finding, karppa2016faster, alman2016polynomial} led to
the following algorithm:

\begin{theorem}[\cite{alman2016polynomial}]
\label{thm:closestPair}
Fix dimension $d\ge 1$, $r>0$, and $\eps\in(0,1/2)$. Then, for any set of $n$
points in $\R^d$, one can solve the $(1+\eps,r)$-CP over $\ell_2$ in time
$O(n^{2-\Omega(\eps^{1/3}/\log(1/\eps))}+dn)$, with constant probability.
\end{theorem}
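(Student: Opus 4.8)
The plan is to reduce the approximate closest pair problem to fast matrix multiplication, following the ``polynomial method'' of Valiant and its refinements. First I would handle the easy structural reductions: by a standard bucketing/grid argument one may assume all $n$ points lie in a bounded region, and by applying the Johnson--Lindenstrauss lemma (Lemma~\ref{lem:JL}) one may assume $d = O(\log n / \eps^2)$, at the cost of an extra $(1+\eps)$ factor in the approximation and an additive $O(dn)$ term in the running time. After rescaling we may assume $r$ is fixed, say $r = 1$; the goal becomes distinguishing, for each pair, whether $\|p - q\|_2 \le 1$ or $\|p - q\|_2 > 1 + \eps$. Since $\langle p, q\rangle = \tfrac{1}{2}(\|p\|^2 + \|q\|^2 - \|p-q\|^2)$ and all norms are (after a further normalization step) essentially equal, detecting a close pair is equivalent to detecting a pair with inner product above a threshold --- i.e.\ a ``light bulb''-type problem.

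The core step is to design a low-degree polynomial $P(x_1, \ldots, x_d)$ (composed with coordinatewise preprocessing) that is large when $\langle p, q \rangle$ is above the ``close'' threshold and small otherwise, with a gap; the natural candidate is a suitably shifted and scaled Chebyshev polynomial of degree $s = \Theta(\eps^{-1/3}\log(1/\eps))$ or thereabouts, exploiting the fact that Chebyshev polynomials achieve the optimal degree-vs-amplification tradeoff on an interval. One then expands $P(\langle p, q\rangle)$ as a sum of monomials in the coordinates of $p$ times monomials in the coordinates of $q$; the number of such monomials is $m = \binom{d + s}{s}$, which for $d = O(\log n/\eps^2)$ and the stated $s$ is $n^{o(1)}$ (this is exactly where the exponent $\Omega(\eps^{1/3}/\log(1/\eps))$ in the theorem comes from). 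Lifting each point $p$ to the $m$-dimensional vector of its monomial evaluations reduces the problem to computing the product of an $n \times m$ matrix by an $m \times n$ matrix and scanning the result for a large entry. Using the fact that rectangular matrix multiplication with inner dimension $m = n^{o(1)}$ can be done in time $n^{2 + o(1)}$ --- in fact slightly subquadratically, $n^{2 - \Omega(1)}$, when $m \le n^{0.1}$ or so, by the Coppersmith-style rectangular multiplication bounds --- yields the claimed running time $O(n^{2 - \Omega(\eps^{1/3}/\log(1/\eps))} + dn)$.

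A technical point to be handled carefully is that the naive lift produces huge entries and the matrix product entries do not cleanly separate ``close'' from ``far'' when there are many moderately-far pairs contributing noise; the standard fix (due to Valiant, refined by Karppa--Kaski--Kohonen and Alman--Chan--Williams) is to first \emph{randomly aggregate} the points into $n/g$ groups of size $g$, summing the lifted vectors within each group, so that a single planted close pair stands out against the aggregated contribution of the far pairs, and then to recurse or brute-force within the $O(g^2)$ candidate pairs once a hot group-pair is identified. Choosing $g$ as a small polynomial in $n$ balances the cost of the matrix product against the cost of resolving candidates. I expect the main obstacle to be precisely this noise-control / amplification analysis: one must show that the Chebyshev-based polynomial, after group aggregation, produces a large entry \emph{only} at the location of a genuine close pair with constant probability, which requires a careful second-moment (variance) computation over the random grouping and the random JL projection, together with a matching choice of the polynomial degree so that the monomial count stays $n^{o(1)}$. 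The number-theoretic optimization that yields the $\eps^{1/3}$ exponent (rather than a worse power of $\eps$) comes from balancing the Chebyshev degree $s \sim \eps^{-1/2}$ needed for amplification against the tolerable $s \sim \log n / \log\log n$ forced by $\binom{d+s}{s} = n^{o(1)}$; getting the precise exponent is the delicate part.
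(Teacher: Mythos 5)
Your overall architecture (reduce to an inner-product/``light bulb'' problem, randomly aggregate points into groups, lift each aggregated vector through a polynomial so that the planted close pair stands out, and finish with Coppersmith-style rectangular matrix multiplication) is exactly the skeleton the paper sketches. But the quantitative heart of the claimed bound is wrong: you attribute the exponent $\Omega(\eps^{1/3}/\log(1/\eps))$ to a ``suitably shifted and scaled Chebyshev polynomial of degree $\Theta(\eps^{-1/3}\log(1/\eps))$.'' A degree-$s$ Chebyshev polynomial only amplifies the gap at $1+\eps$ by a factor $T_s(1+\eps)/T_s(1)\approx e^{\Theta(s\sqrt{\eps})}$, so to overwhelm the noise from aggregating $g$ points per group you need $s\gtrsim \log g/\sqrt{\eps}$; pushing this against the constraint that the lifted dimension (the monomial count) stays small yields a running time of $n^{2-\Omega(\sqrt{\eps})}$, which is the Valiant/Karppa et al.\ bound, not the bound in the theorem. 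With the degree $\eps^{-1/3}\log(1/\eps)$ you propose, the Chebyshev amplification is only $e^{O(\eps^{1/6}\log(1/\eps))}=1+o(1)$ for small $\eps$, far too weak to separate the close pair from the aggregated far pairs for any useful group size $g$. Your closing remark, which balances ``Chebyshev degree $s\sim\eps^{-1/2}$'' against $s\sim\log n/\log\log n$, is inconsistent with your earlier degree choice and in any case does not produce the exponent $\eps^{1/3}$.

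The missing ingredient is precisely what distinguishes Alman--Chan--Williams from the Chebyshev-based predecessors: \emph{randomized (probabilistic) polynomials}. They construct a distribution over polynomials of degree $O(\eps^{-1/3}\log s)$ that, with probability $1-1/s$ per evaluation point, is bounded by $1$ below the threshold and at least $s$ above the $(1+\eps)$-threshold; this beats the $\eps^{-1/2}$ degree barrier that any fixed polynomial (Chebyshev being optimal among them) must obey, and it is this degree saving, not a monomial-counting trade-off, that produces the $\eps^{1/3}$ in the exponent. Your proposal never invokes such an object, so as written it can only justify a bound of the form $n^{2-\Omega(\sqrt{\eps})}$, not $O\bigl(n^{2-\Omega(\eps^{1/3}/\log(1/\eps))}+dn\bigr)$. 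To repair the argument you would need to state and use the probabilistic-polynomial theorem (and then handle the per-pair failure probability $1/s$ in the second-moment analysis over the random grouping), rather than a deterministic Chebyshev lift.
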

Note that the running time bound in the above theorem is better than that obtained using  LSH data
structures, for small enough $\eps$.

The main new technical ingredient 
is the
     {\em fast matrix multiplication} algorithm. In particular,
     suppose we want to multiply two matrices of size $n\times m$ and
     $m\times n$. Doing so na\"ively takes time $O(n^2m)$. Starting
     with the work of \cite{strassen1969gaussian}, there has been
     substantial work to improve this run-time; see also
     \cite{williams2012multiplying}. Below we state the running time of a fast matrix multiplication algorithm
     due to \cite{coppersmith1982rapid}, which is most relevant for
     this section.

\begin{theorem}[\cite{coppersmith1982rapid}]
\label{thm:fmm}
Fix $n\ge 1$ and let $m\ge1$ be such that $m\le n^{0.172}$. One can
compute the product of two matrices of sizes $n\times m$ and $m\times
n$ in $O(n^2\log^2n)$ time.
\end{theorem}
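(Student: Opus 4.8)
The plan is to reduce the multiplication of an $n \times m$ matrix by an $m \times n$ matrix, with $m \le n^{0.172}$, to a small number of multiplications of \emph{square} matrices of a fixed modest size, and then invoke a good square matrix multiplication algorithm on each block. First I would recall the bilinear/tensor framework: a matrix multiplication algorithm for $\langle a, b, c \rangle$ (i.e.\ multiplying an $a \times b$ matrix by a $b \times c$ matrix) with rank $R$ yields, by tensoring, an algorithm for $\langle a^k, b^k, c^k \rangle$ with rank $R^k$, and the exponent of rectangular multiplication is governed by the quantity sometimes written $\alpha$ — the supremum of $\beta$ such that $\langle n, n^\beta, n \rangle$ can be done in $n^{2+o(1)}$ arithmetic operations. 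The content of the theorem is exactly the (classical) lower bound $\alpha \ge 0.172$ (in fact Coppersmith proves $\alpha > 0.17$; the stated $0.172$ is within the known range), together with the observation that at the square-ish end the cost is $O(n^2 \log^2 n)$ rather than merely $n^{2+o(1)}$.

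The key steps, in order: (1) Start from a specific base bilinear identity — Coppersmith's construction uses a carefully chosen small tensor (built, as in the Coppersmith--Winograd line of work, from a partial matrix multiplication identity with a controlled number of multiplications) that computes some $\langle a, b, c \rangle$ with $b$ noticeably larger than $1$ relative to $R$ being close to $ac$. (2) Tensor this identity with itself $k$ times to get an algorithm for $\langle a^k, b^k, c^k \rangle$ using $R^k$ multiplications; choosing $k = \Theta(\log n)$ makes $a^k, c^k \approx n$ and $b^k \approx n^{0.172}$. (3) Partition the input matrices into blocks of the resulting shape, so the $n \times m$ by $m \times n$ product becomes a block-matrix product where each block multiplication is an instance of the tensored identity; the total arithmetic cost is $R^k \cdot \mathrm{poly}(k)$ times the cost of combining, which one checks is $n^{2} \cdot \mathrm{poly}(\log n)$, and tightening the bookkeeping (the recursion depth is $O(\log n)$ and each level contributes an $O(\log n)$-type factor from the additions in the bilinear algorithm) gives the clean $O(n^2 \log^2 n)$ bound. (4) Verify the exponent arithmetic: the constraint is that $\log_a R < 2$ with $\log_a b = 0.172$ achievable, which is where the numerical value $0.172$ enters; this is a finite computation with the explicit base tensor.

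The main obstacle is step (1) together with the numerical verification in step (4): the whole theorem rests on exhibiting a concrete small bilinear identity whose parameters, after tensoring, beat the threshold $n^{0.172}$ while keeping the number of multiplications below $n^{2+o(1)}$. That is not a routine estimate — it requires the specific algebraic construction (a partial matrix product identity with few multiplications, analyzed via a generating-function / degeneration argument in the style of Coppersmith--Winograd) and a careful optimization of the exponents. Everything after that — the tensoring, the blocking, and pushing the $n^{2+o(1)}$ down to $n^2 \log^2 n$ by accounting for additions level-by-level — is standard bilinear-complexity bookkeeping. I would therefore present step (1) by citing~\cite{coppersmith1982rapid} for the base identity and its parameters, and give the tensoring/blocking argument (steps (2)--(4)) in detail, since that is where the stated running time is actually assembled.
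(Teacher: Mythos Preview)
The paper does not prove this theorem; it is quoted from \cite{coppersmith1982rapid} as a black box and used only as a tool in the closest-pair algorithm. There is therefore no ``paper's own proof'' to compare your proposal against.

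Your sketch is a reasonable high-level outline of how results of this type are obtained (a base bilinear identity, tensoring, and block decomposition), and it correctly identifies that the substantive content lives in exhibiting the base identity and verifying its numerical parameters. Since the survey treats the statement as an imported fact, the appropriate treatment here is simply to cite \cite{coppersmith1982rapid} rather than to reproduce the argument; your plan to do exactly that for step (1) is the right call, and in the context of this paper the remaining steps are unnecessary.
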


\subsection{Closest pair via matrix multiplication}

We now sketch the algorithm for the closest pair from
\cite{valiant2015finding}, which obtains
$O(n^{2-\Omega(\sqrt{\eps})}d)$ time. The algorithm is best described
in terms of {\em inner products}, as opposed to distances as
before. In particular, suppose we have a set of points $P\subset
{\mathcal S}^d$ of unit norm, where all pairs of points have inner
product in the range $[-\theta,\theta]$, except for one ``special''
pair that has inner product at least $c\theta$, for some scale
$\theta>0$ and approximation $c=1+\eps$. Now
the problem is to find this special pair---we term this problem
$(c,\theta)$-IP problem. We note that we can reduce 
$(1+\eps,r)$-CP over $\ell_2$ to  $(1+\Omega(\eps),1/2)$-IP
, by using the embedding of \cite{schoenberg1942positive}, or
Lemma~\ref{lem:KORdr} of \cite{kushilevitz2000efficient}.

A natural approach to the the IP problem is to multiply two $n\times
d$ matrices: if we consider the matrix $M$ where the rows are the
points of $P$, then $MM^t$ will have a large off-diagonal entry
precisely for the special pair of points. This approach however
requires at least $n^2$ computation time, since even the output of
$MM^t$ has size $n^2$. Nevertheless, an extension of this approach
gives a better run-time when $c$ is very large (and hence $\theta<1/c$
very small, i.e., all points except for the special pair are
near-orthogonal). In particular, partition randomly the vectors from
$P$ into $n/g$ groups $S_1,\ldots S_{n/g}$, each of size $O(g)$.
For each group $i$, we sum the vectors $S_i$ with random signs,
obtaining vectors $v_i=\sum_{p_j\in S_i} \chi_j p_j$, where $p_j$ are
the points in $P$ and $\chi_j$ are Rademacher random variables.  Now
the algorithm forms a matrix $M$ with $v_i$'s as rows, and computes
$MM^t$ using fast matrix multiplication (Theorem~\ref{thm:fmm}).  The
two special points are separated with probability $1-g/n$.
Conditioning on this event, without loss of generality, we can assume
that they are in group 1 and 2 respectively.  Then, it is easy to note
that $|(MM^t)_{12}|\approx \Theta(c\cdot \theta)$, whereas, for
$(i,j)\neq (1,2)$ and $i\neq j$, we have that $|(MM^t)_{ij}|\approx
O(g\cdot \theta)$ with constant probability. Hence, we can identify
the special pair in the product $MM^t$ as long as $c\gg g$, and yields
runtime $O(n^2/g^2)$, i.e., a $g^2\ll c^2$ speed-up over the na\"ive
algorithm (note that Theorem~\ref{thm:fmm} requires that $d<n^{0.172}$).

The above approach requires $c$ to be very large, and hence the
challenge is whether we can reduce the case of $c=1+\eps$ to the case
of large $c$. Indeed, one method is to use tensoring: for a fixed
parameter $k$ and any two vectors $x,y\in\R^d$, we consider
$x^{\otimes k}, y^{\otimes k}\in \R^{d^k}$, for which $\langle
x^{\otimes k},y^{\otimes k}\rangle=(\langle x,y\rangle)^k$. Thus
tensoring reduces the problem of $(1+\eps, 1/2)$-IP to
$((1+\eps)^k,2^{-k})$-IP, and hence we hope to use the above algorithm
for $c=(1+\eps)^k\approx e^{\eps k}$. If we use $t=\zeta\ln n$, for
small constant $\zeta$, we obtain $c=n^{\eps\zeta}$, and hence we
obtain a speed-up of $g^2\approx c^{2}=n^{2\eps\zeta}$. One caveat here is
that, after tensoring the vectors, we obtain vectors of dimension
$d^k$, which could be much larger than $n$---then even writing down
such vectors would take $\Omega(n^2)$ time. Yet, one can use a
dimension reduction method, like Lemma~\ref{lem:JL}, to reduce
dimension to $O(\tfrac{\log n}{\theta^k})=\tilde O(n^{\zeta \ln 2})$,
which is enough to preserve all inner products up to additive, say,
$0.1\cdot\theta^k$. There are further details (e.g., we cannot afford to
get high-dimensional vectors in the first place, even if we perform
dimension-reduction), see \cite{valiant2015finding, karppa2016faster}
for more details.

The above algorithm yields a speed-up of the order of $n^{O(\eps)}$,
i.e., comparable to the speed-up via the LSH methods. To obtain a
better speed-up, like in the Theorem~\ref{thm:closestPair}, one can
replace the tensoring transformation with a more efficient
one. Indeed, one can employ an {\em asymmetric} embedding $f,g:\R^d\to
\R^{m}$, with the property that for any unit-norm vectors $x,y$, we
have that $\langle f(x),g(y)\rangle=p(\langle x,y\rangle)$, where
$p(\cdot)$ is a polynomial of choice. In particular, we require a
polynomial $p(\cdot)$ that is small on the interval $[-\theta,
  \theta]$, as large as possible on $[(1+\eps)\theta, 1]$, and $p(1)$
is bounded. Note that the tensoring operation implements such an
embedding with $p(a)=a^k$ and where $f(x)=g(x)=x^{\otimes
  k}$. However, there are more efficient polynomials: in fact, the
optimal such polynomial is the Chebyshev polynomial. For example, for
the degree-$k$ Chebyshev polynomial $T_k(\cdot)$, we have that
$T_k(1+\eps)/T_k(1)\approx e^{\sqrt{\eps}k}$, which is in contrast to the
above polynomial $p(a)=a^k$, for which $p(1+\eps)/p(1)\approx e^{\eps
  k}$.

Using the Chebyshev polynomials, one can obtain a runtime of
$n^{2-\Omega(\sqrt{\eps})}$ for the IP and hence CP problem. To obtain
the improved result from Theorem~\ref{thm:closestPair},
\cite{alman2016polynomial} employ {\em randomized} polynomials, i.e.,
a distribution over polynomials where $p(\cdot)$ is small/large only
with a certain probability. Without going into further details, the
theorem below states the existence of such polynomials, which are used
to obtain $n^{2-\Omega(\eps^{1/3} /\log(1/\eps))}$ run-time for the
$(1+\eps, r)$-CP problem.

\begin{theorem}[\cite{alman2016polynomial}]
Fix $d\ge 1$, $\theta\ge1$, $s\ge1$, and $\eps>0$. There exists a
distribution over polynomials $P:\{0,1\}^d\to \R$ of degree
$O(\eps^{-1/3}\log s)$, such that we have the following for any $x\in
\{0,1\}^d$:
\begin{itemize}
\item
if $\sum_{i=1}^d x_i\le \theta$, then $|P(x)|\le 1$ with probability at
least $1-1/s$;
\item
if $\sum_{i=1}^d x_i\in (\theta,(1+\eps)\theta)$, then $|P(x)|>1$ with probability at
least $1-1/s$;
\item
if $\sum_{i=1}^d x_i>(1+\eps)\theta$, then $|P(x)|\ge s$ with probability at
least $1-1/s$.
\end{itemize}
\end{theorem}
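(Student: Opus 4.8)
The plan is to reduce the problem to constructing a random \emph{univariate} polynomial in the single quantity $m:=\sum_{i=1}^d x_i$, since every condition in the statement depends on $x$ only through $m$, and since $m$ is itself a degree-$1$ polynomial in $x$. Concretely, it suffices to exhibit a distribution over univariate real polynomials $q$ of degree $D=O(\eps^{-1/3}\log s)$ such that, for every fixed integer $m\in\{0,1,\dots,d\}$: $|q(m)|\le 1$ with probability $\ge 1-1/s$ when $m\le\theta$; $|q(m)|>1$ with probability $\ge 1-1/s$ when $\theta<m<(1+\eps)\theta$; and $|q(m)|\ge s$ with probability $\ge 1-1/s$ when $m>(1+\eps)\theta$. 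Then $P(x):=q\bigl(\sum_i x_i\bigr)$ has degree $D$ and inherits all three properties. Moreover, the target failure probability $1/s$ can be reached at the end by an error-boosting step --- combining $O(\log s)$ independent copies of a constant-error construction through a cheap ``combiner'' polynomial --- so one may first aim for constant per-point error and only a $\log s$ loss in degree.

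I would next recall the deterministic baseline given by Chebyshev polynomials, which already solves the problem with degree $O(\eps^{-1/2}\log s)$. Take $q(m)=T_k(m/\theta)$, where $T_k$ is the degree-$k$ Chebyshev polynomial of the first kind: on $[0,\theta]$ we have $|q|\le 1$; on the gap $(\theta,(1+\eps)\theta)$ monotonicity of $T_k$ on $[1,\infty)$ gives $q>1$; and $T_k(1+\eps)=\cosh\!\bigl(k\,\mathrm{arccosh}(1+\eps)\bigr)\ge\tfrac12 e^{k\sqrt{2\eps}}$, so $k=O(\eps^{-1/2}\log s)$ makes $q\ge s$ for $m\ge(1+\eps)\theta$. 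It is classical that the Chebyshev polynomial is extremal here, so no \emph{deterministic} choice beats the $\eps^{-1/2}$ exponent; the whole point of the theorem is that randomization does.

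The heart of the argument is therefore the randomized construction achieving exponent $\tfrac13$. The conceptual opening is that the required guarantee is \emph{per input, each with probability $1-1/s$}, which is far weaker than asking a single polynomial to be uniformly $\le 1$ on all of $[0,\theta]$, so the Chebyshev extremality bound does not apply to a random $q$. The construction exploits this by interposing a \emph{random sampling / restriction} step before the Chebyshev polynomial: one passes from $m$ to a lower-precision random estimator of it (for instance a sum of a few randomly chosen low-degree monomials, or the weight of a random sub-sample of the coordinates), so that the Chebyshev polynomial need only act on this estimator, and one then balances the degree spent on the sampling/amplification stage against the degree $(\eps^{\ast})^{-1/2}\log s$ of the Chebyshev stage for the effective gap $\eps^{\ast}$ that the sampling produces; optimizing this trade-off is what yields the exponent $\tfrac13$ rather than $\tfrac12$. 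It is convenient to split on the magnitude of $\theta$: when $\theta$ is small enough that $\sqrt{\theta\log s}\le O(\eps^{-1/3}\log s)$, the ``gap'' is essentially vacuous and one instead invokes a probabilistic polynomial for the \emph{exact} threshold $\mathrm{TH}_\theta$ --- a random restriction bringing the threshold near the middle, followed by the standard degree-$O(\sqrt{n\log(1/\delta)})$ probabilistic majority polynomial --- while the sampling-plus-Chebyshev construction handles large $\theta$.

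The main obstacle is precisely this last step: exhibiting the randomized polynomial that provably beats the Chebyshev barrier and pinning down the $\tfrac13$ exponent. This requires (a) choosing the right randomized amplification so that its degree cost and the gap $\eps^{\ast}$ it induces balance to $\eps^{-1/3}$, and (b) proving sharp concentration (and, at the low end, anti-concentration) bounds for the value of a random low-degree polynomial at a single fixed integer point, which is what upgrades ``correct on a random fraction of inputs'' to ``correct at each input with high probability.'' By comparison, the reduction to the univariate problem, the Chebyshev estimates, and the $O(\log s)$-fold error boosting are all routine.
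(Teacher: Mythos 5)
There is a genuine gap here, and it sits exactly where the theorem lives. Note first that the survey does not prove this statement at all --- it is quoted from \cite{alman2016polynomial} precisely ``without going into further details'' --- so your write-up has to stand on its own. Your univariate reduction $P(x)=q\bigl(\sum_i x_i\bigr)$, the Chebyshev baseline of degree $O(\eps^{-1/2}\log s)$, and the remark about Chebyshev extremality for deterministic polynomials are all fine, but they are the routine part. For the core step you only gesture at ``a random sampling/restriction step before the Chebyshev polynomial'' and a balance between ``the degree spent on sampling'' and the Chebyshev degree for an effective gap $\eps^{\ast}$, and you yourself concede that exhibiting this construction and proving its concentration properties is the main obstacle. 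That concession is the problem: the statement being proved \emph{is} that construction, so what you have is a plan, not a proof.

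Worse, the balance you sketch does not obviously produce the exponent $1/3$. Subsampling the coordinates (or taking the weight of a random sub-sample) is a degree-$1$ operation, so it costs essentially no degree --- but by the same token it does not improve the multiplicative gap either: if each coordinate survives with probability $p$, the sampled sum concentrates around $pm$ with fluctuations of order $\sqrt{p\theta\log s}$, so the gap between the two regimes is still a $(1+\Theta(\eps))$ multiplicative one (provided $p\theta\gtrsim \eps^{-2}\log s$, and it only degrades otherwise), and the Chebyshev stage still needs degree $\Theta(\eps^{-1/2}\log s)$. So ``sample, then Chebyshev, then optimize'' collapses back to the $\eps^{-1/2}$ baseline; there is no trade-off of the kind you describe to optimize. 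The actual construction in \cite{alman2016polynomial} gets below $\eps^{-1/2}$ by a more delicate combination of Chebyshev-type polynomials with \emph{probabilistic polynomials for exact threshold/majority} (the $O(\sqrt{n\log(1/\delta)})$-degree construction you mention only for the small-$\theta$ case), exploiting the discreteness of $\sum_i x_i$ and composing the two ingredients so that their degrees balance at $\eps^{-1/3}$; this composition and its error analysis are exactly what is missing from your argument. Your error-boosting step also needs care: a ``cheap combiner'' must preserve the three-tier output structure ($\le 1$, $>1$, $\ge s$) without multiplying degrees through composition, which is not automatic. As it stands, the proposal identifies the right difficulty but does not overcome it, and the one concrete mechanism it offers for the $1/3$ exponent does not work as described.
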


\section{Extensions}
\label{s:extensions}

In this section, we discuss several techniques that significantly extend the class of spaces which admit efficient ANN data structures.

\subsection{Metric embeddings}

So far, we have studied the ANN problem over the $\ell_1$, $\ell_2$ and $\ell_\infty$ distances.
A useful approach is to \emph{embed} a metric of interest into $\ell_1/\ell_2/\ell_\infty$
and use one of the data structures developed for the latter spaces.

\subsubsection{Deterministic embeddings}

\begin{definition}
\label{def_bilip}
  For metric spaces $\mathcal{M} = (X, \d_X)$, $\mathcal{N} = (Y, \d_Y)$ and for $D \geq 1$, we say that a map $f \colon X \to Y$
  is a \emph{bi-Lipschitz embedding} with distortion $D$ if there exists $\lambda > 0$ such that for every
  $x_1, x_2 \in X$ one has:
  $$
  \lambda d_X(x_1, x_2) \leq \d_Y(f(x_1), f(x_2)) \leq D \cdot \lambda \d_X(x_1, x_2).
  $$
\end{definition}

A bi-Lipschitz embedding of $\mathcal{M}$ into $\mathcal{N}$ with distortion $D$
together with a data structure for $(c, r)$-ANN
over $\mathcal{N}$ immediately implies a data structure for $(cD, r')$-ANN over $\mathcal{M}$,
where $r' = \frac{r}{\lambda D}$.
However, space and query time
of the resulting data structure depend crucially on the \emph{computational efficiency} of the embedding, since,
in particular, the query procedure requires evaluating the embedding on a query point.

As the following classic results show, any finite-dimensional normed or finite metric space can be embedded into finite-dimensional
$\ell_\infty$ with small distortion.

\begin{theorem}[Fr\'{e}chet--Kuratowski, \cite{frechet1906quelques,kuratowski1935quelques}]
  \label{metric_linfty}
  If $\mathcal{M}$ is a finite metric space, which consists of $N$ points, then
  $\mathcal{M}$ embeds into $(\Rbb^N, \ell_\infty)$ with distortion $D = 1$ (isometrically).
\end{theorem}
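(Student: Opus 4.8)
The plan is to exhibit the embedding explicitly; this is the classical \emph{Fr\'echet embedding}. Enumerate the points of $\mathcal{M}$ as $X = \{x_1, x_2, \ldots, x_N\}$ and define $f \colon X \to \Rbb^N$ by letting the $i$-th coordinate of $f(x)$ record the distance from $x$ to the $i$-th point, i.e.\ $f(x)_i = \d_X(x, x_i)$. I would then verify directly that $f$ is an isometry from $(X, \d_X)$ to $(\Rbb^N, \ell_\infty)$, which establishes the theorem with distortion $D = 1$ and scaling factor $\lambda = 1$ in the sense of Definition~\ref{def_bilip}.

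For the upper bound $\|f(x) - f(y)\|_\infty \le \d_X(x, y)$, fix any coordinate $i$ and apply the triangle inequality to the triple $x, y, x_i$: this gives $|\d_X(x, x_i) - \d_X(y, x_i)| \le \d_X(x, y)$, and taking the maximum over $i \in \{1, \ldots, N\}$ yields the claim. For the matching lower bound, let $i$ be the index with $x_i = x$ (such an index exists since $x \in X$); then the $i$-th coordinate of $f(x) - f(y)$ equals $\d_X(x, x) - \d_X(y, x) = -\d_X(x, y)$, whose absolute value is exactly $\d_X(x, y)$, so $\|f(x) - f(y)\|_\infty \ge \d_X(x, y)$. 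Combining the two inequalities gives $\|f(x) - f(y)\|_\infty = \d_X(x, y)$ for all $x, y \in X$, as desired.

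There is essentially no technical obstacle here: the only ``insight'' is guessing the right map, namely sending each point to its vector of distances to all $N$ points, after which both inequalities follow immediately from the triangle inequality and from $\d_X(x, x) = 0$. I would additionally remark in passing that the same construction embeds any separable metric space isometrically into $\ell_\infty$ (the bounded case via $f(x)_i = \d_X(x, x_i) - \d_X(x_0, x_i)$ for a fixed basepoint $x_0$ and a countable dense set $\{x_i\}$), and that the target dimension $N$ cannot be reduced in general — for instance, the $N$-point equilateral metric already requires it — but these refinements are not needed for the statement as given.
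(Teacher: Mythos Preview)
Your proof is correct and is exactly the classical Fr\'echet embedding argument; the paper does not supply its own proof of this theorem (it is simply stated with references), so there is nothing further to compare.

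One small correction to your closing aside: the $N$-point equilateral metric is \emph{not} a witness that dimension $N$ is necessary. In $(\Rbb^d, \ell_\infty)$ the $2^d$ vertices of $\{0,1\}^d$ are pairwise at distance $1$, so an $N$-point equilateral set embeds isometrically into $\ell_\infty^{\lceil \log_2 N\rceil}$. In fact every $N$-point metric embeds isometrically into $\ell_\infty^{N-1}$: in your map $f$, drop the coordinate indexed by $x_N$; the upper bound still holds by the triangle inequality, and for the lower bound on $\|f(x)-f(y)\|_\infty$ you can use the coordinate indexed by whichever of $x,y$ is not $x_N$. This does not affect the statement as given, which only claims dimension $N$.
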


\begin{theorem}[see, e.g., \cite{wojtaszczyk1996banach}]
  \label{norm_linfty}
  For every $\eps > 0$, every normed space $(\Rbb^d, \|\cdot\|)$
  embeds with distortion $1 + \eps$ into $(\Rbb^{d'}, \ell_\infty)$,
  where $d' = O(1 / \eps)^d$, via a linear map.
\end{theorem}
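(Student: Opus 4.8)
The plan is to turn the supremum that defines $\|\cdot\|$ into a maximum over finitely many linear functionals, chosen as a net on the dual unit sphere. Write $\|\phi\|_* = \sup_{\|x\|\le 1}\langle\phi,x\rangle$ for the dual norm on $\Rbb^d$, and let $S_* = \{\phi\in\Rbb^d : \|\phi\|_*=1\}$ be the dual unit sphere. Since we are in finite dimensions, biduality gives $\|x\| = \max_{\phi\in S_*}\langle\phi,x\rangle$ for every $x$, with the maximum attained by compactness of $S_*$; this is the only structural input about general norms that the argument uses.

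Fix a parameter $\delta=\delta(\eps)\in(0,1)$ to be pinned down below, and let $\{\phi_1,\dots,\phi_{d'}\}\subset S_*$ be a maximal $\delta$-separated subset of $S_*$ with respect to $\|\cdot\|_*$; by maximality it is automatically a $\delta$-net of $S_*$. Define the \emph{linear} map $f\colon\Rbb^d\to\Rbb^{d'}$ by $f(x)=(\langle\phi_1,x\rangle,\dots,\langle\phi_{d'},x\rangle)$. For the upper bound, $\|f(x)\|_\infty=\max_i|\langle\phi_i,x\rangle|\le\|x\|$ since $\|\phi_i\|_*=1$ for all $i$. For the lower bound, given $x$ pick $\phi\in S_*$ with $\langle\phi,x\rangle=\|x\|$, then pick $\phi_i$ with $\|\phi-\phi_i\|_*\le\delta$; then $\langle\phi_i,x\rangle\ge\langle\phi,x\rangle-\|\phi-\phi_i\|_*\,\|x\|\ge(1-\delta)\|x\|$, so $\|f(x)\|_\infty\ge(1-\delta)\|x\|$. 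Applying these to $x=x_1-x_2$ shows $f$ is a bi-Lipschitz embedding (Definition~\ref{def_bilip}) with $\lambda=1-\delta$ and distortion $\tfrac{1}{1-\delta}$; choosing $\delta=\tfrac{\eps}{1+\eps}$, so $\delta=\Theta(\eps)$, yields distortion at most $1+\eps$.

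It then remains to bound $d'$, which I would do via the standard volumetric packing estimate: writing $B_*$ for the dual unit ball, the sets $\phi_i+\tfrac{\delta}{2}B_*$ are pairwise disjoint (the $\phi_i$ are $\delta$-separated) and all contained in $(1+\tfrac{\delta}{2})B_*$, so comparing volumes gives $d'\le(1+2/\delta)^d=O(1/\eps)^d$. I do not expect a genuine obstacle here: the content lies entirely in (i) finite-dimensional duality, supplying the attained representation $\|x\|=\max_{\phi\in S_*}\langle\phi,x\rangle$, and (ii) the elementary net/volume bound. The only points that must be gotten right are that the net lives on the \emph{dual} sphere rather than the primal space, and that the distortion degrades as $\tfrac{1}{1-\delta}$, which forces $\delta\asymp\eps$ and hence the dimension $O(1/\eps)^d$.
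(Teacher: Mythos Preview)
Your argument is correct and is precisely the standard proof: take a $\delta$-net on the dual unit sphere, use the corresponding linear functionals as coordinates of the embedding, and bound the net size by a volume packing argument. The paper itself does not prove this theorem at all---it merely cites \cite{wojtaszczyk1996banach}---so there is nothing to compare against; your proof is the one a reader would find in that reference.
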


However, the utility of Theorems~\ref{metric_linfty} and~\ref{norm_linfty} in the context of the ANN problem is limited,
since the required dimension of the target $\ell_\infty$ space is very high (in particular, Theorem~\ref{norm_linfty} gives a data
structure with \emph{exponential} dependence on the dimension). Moreover, even if we allow the distortion
$D$ of an embedding to be a large constant, the target dimension can not be improved much.
As has been shown in~\cite{matousek1997embedding}, one needs at least $N^{\Omega(1 / D)}$-dimensional $\ell_\infty$ to ``host''
all the $N$-point metrics with distortion $D$. For $d$-dimensional norms, even as simple as $\ell_2$,
the required dimension is $2^{\Omega_D(d)}$~\cite{figiel1977dimension,ball1997elementary}.

More generally, (lower-dimensional) $\ell_\infty$ turns out to be not so useful of a target space, and only a handful
of efficient embeddings into $\ell_\infty$ are known (for instance, such an embedding has been constructed
in~\cite{farach1999approximate} for the Hausdorff distance).
Luckily, the situation drastically improves, if we allow \emph{randomized} embeddings, see Section~\ref{random_embeddings}
for the examples.

Instead of $\ell_\infty$, one can try to embed a metric of interest into $\ell_1$ or $\ell_2$. Let us list a few cases,
where such embeddings lead to efficient ANN data structures.
\begin{itemize}
\item Using the result from~\cite{johnson1982embedding}, one can embed $(\Rbb^d, \ell_p)$ for $1 < p \leq 2$ into $(\Rbb^{d'}, \ell_1)$ with distortion
  $1 + \eps$, where $d' = O(d / \eps^2)$. Moreover, the corresponding map is linear and hence efficient to store and apply.
  This reduction shows that the ANN problem over $\ell_p$ for $1 < p \leq 2$ is no harder than for the $\ell_1$ case.
  However, later in this section we will show how to get a better ANN algorithm for the $\ell_p$ case
  using a different embedding.
\item For the Wasserstein-$1$ distance (a.k.a.\ the Earth-Mover distance in the computer science literature) between probability measures
  defined on $\{1, 2, \ldots, d\}^k$, one can use the results from~\cite{charikar2002similarity,indyk2003fast,naor2007planar}, to embed it into $(\Rbb^{d^{O(k)}}, \ell_1)$ with distortion $O(k \log d)$.
\item The Levenshtein distance (a.k.a. edit distance) over the binary
  strings $\{0, 1\}^d$
  can be embedded into $(\Rbb^{d^{O(1)}}, \ell_1)$
  with distortion $2^{O(\sqrt{\log d \log \log d})}$~\cite{ostrovsky2007low}.
\end{itemize}

Let us note that there exist generic results concerned with embeddings into $\ell_1/\ell_2$ similar to Theorem~\ref{metric_linfty}
and Theorem~\ref{norm_linfty}.

\begin{theorem}[\cite{bourgain1985lipschitz,linial1995geometry}]
  \label{metric_l2}
  Any $N$-point metric embeds into $(\Rbb^{O(\log N)}, \ell_2)$ with distortion $O(\log N)$.
\end{theorem}

\begin{theorem}[\cite{john1948extremum, ball1997elementary}]
  \label{norm_l2}
  Any normed space $(\Rbb^d, \|\cdot\|)$ embeds into $(\Rbb^d, \ell_2)$ with distortion $\sqrt{d}$
  via a linear map.
\end{theorem}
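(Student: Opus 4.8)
The plan is to obtain the embedding from John's theorem on the maximal-volume inscribed ellipsoid. Let $K = \{x \in \Rbb^d : \|x\| \le 1\}$ be the unit ball of the norm; it is a compact, symmetric ($K = -K$) convex body with the origin in its interior. First I would use a compactness argument to produce a maximal-volume ellipsoid inside $K$: ellipsoids centered at the origin and contained in $K$ are parametrized by a compact set of positive definite matrices, and volume is a continuous function of this parameter, so the maximum is attained by some $\mathcal{E}$, which is nondegenerate because $K$ contains a small Euclidean ball.

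Next I would apply the invertible linear map $T$ that carries $\mathcal{E}$ onto the standard Euclidean ball $B_2^d = \{y : \|y\|_2 \le 1\}$. After this normalization we may assume $B_2^d \subseteq K$ and that $B_2^d$ is \emph{the} maximal-volume ellipsoid in $K$. The crux is then the claim $K \subseteq \sqrt{d}\, B_2^d$, which I would prove by contradiction. Suppose some $p \in K$ has $\|p\|_2 = t > \sqrt{d}$; by symmetry and convexity $K$ contains $\mathrm{conv}(B_2^d \cup \{\pm p\})$. Choosing coordinates with $p = t e_1$, consider the perturbed ellipsoids
\[
\mathcal{E}_{\alpha,\beta} = \Bigl\{ y \in \Rbb^d : \tfrac{y_1^2}{\alpha^2} + \tfrac{y_2^2 + \cdots + y_d^2}{\beta^2} \le 1 \Bigr\}, \qquad \alpha > 1 > \beta,
\]
whose volume is $\alpha\beta^{d-1}\,\mathrm{vol}(B_2^d)$. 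The key computation is to show that one can pick $\alpha,\beta$ (with $\beta \to 1$ as the perturbation vanishes) so that $\mathcal{E}_{\alpha,\beta} \subseteq \mathrm{conv}(B_2^d \cup \{\pm p\}) \subseteq K$ while $\alpha\beta^{d-1} > 1$ --- and elementary calculus shows this is possible precisely when $t > \sqrt{d}$ --- contradicting maximality of $B_2^d$. Hence $B_2^d \subseteq K \subseteq \sqrt{d}\, B_2^d$.

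Finally I would unwind the normalization. With $f = T$ we have $B_2^d \subseteq TK \subseteq \sqrt{d}\, B_2^d$, and by homogeneity of both norms this is equivalent to $\|x\| \le \|f(x)\|_2 \le \sqrt{d}\,\|x\|$ for all $x \in \Rbb^d$; since $f$ is linear, $\|f(x_1) - f(x_2)\|_2 = \|f(x_1 - x_2)\|_2$, so $f$ is a bi-Lipschitz embedding of $(\Rbb^d, \|\cdot\|)$ into $(\Rbb^d, \ell_2)$ with distortion $\sqrt{d}$ (taking $\lambda = 1$ in Definition~\ref{def_bilip}), and being a single $d \times d$ matrix it is trivial to store and evaluate. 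The main obstacle is the volume-perturbation step: getting the stretch parameter $\alpha$ and the squeeze parameter $\beta$ balanced so that $\mathcal{E}_{\alpha,\beta}$ stays inside the convex hull while its volume strictly exceeds that of $B_2^d$, with the threshold landing exactly at $t = \sqrt{d}$, is the one genuinely delicate computation; the remaining ingredients (existence of the John ellipsoid, the change of variables, homogeneity) are soft. The symmetry of $K$ is what gives $\sqrt{d}$ rather than $d$, and this bound is tight, e.g.\ for $\ell_\infty^d$, though tightness is not needed for the statement.
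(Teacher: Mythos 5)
Your proposal is correct and is essentially the proof the paper points to: the statement is cited to John's theorem (and Ball's exposition), and your argument --- maximal-volume inscribed ellipsoid, normalization so that $B_2^d \subseteq K$, and the perturbed-ellipsoid volume computation showing $K \subseteq \sqrt{d}\,B_2^d$ for symmetric $K$, with the threshold $t=\sqrt{d}$ coming out of the calculus --- is exactly that standard argument, correctly unwound into the bi-Lipschitz form of Definition~\ref{def_bilip}.
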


Theorem~\ref{metric_l2} does not give an embedding efficient enough for the ANN applications: computing it in one point
requires time $\Omega(N)$. At the same time, Theorem~\ref{norm_l2} \emph{is} efficient and, together with
$\ell_2$ data structures, gives an ANN data structure for a general $d$-dimensional norm with approximation $O(\sqrt{d})$.

Since the ANN problem is defined for two specific distance scales ($r$ and $cr$), we do not need the full power
of bi-Lipschitz embeddings and sometimes can get away with weaker notions of embeddability.
For example, the following theorem follows from the results of~\cite{schoenberg1937certain}.

In the theorem, $\ell_2(\mathbb{N})$ denotes the space of infinite sequences $(a_i)_{i=1}^\infty$ such that $\sum_i |a_i|^2 < +\infty$
and the norm of the sequence $\|a\|_2$ is equal to $\left(\sum_i |a_i|^2 \right)^{1/2}$.

\begin{theorem}
  For every $1 \leq p < 2$ and every $d \geq 1$, there exists a map $f \colon \Rbb^d \to \ell_2(\mathbb{N})$
  such that for every $x, y \in \Rbb^d$, one has:
  $$
  \|f(x) - f(y)\|_2^2 = \|x - y\|_p^p.
  $$
\end{theorem}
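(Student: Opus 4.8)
The plan is to construct the map $f$ explicitly and verify the isometry identity by a one-dimensional reduction. First I would recall the classical fact that for $0 < p < 2$ the function $t \mapsto e^{-|t|^p}$ is positive definite on $\mathbb{R}$ (this is exactly Schoenberg's theorem, and is the source cited). Equivalently, there is a symmetric (Lévy $p$-stable) probability measure $\mu_p$ on $\mathbb{R}$ whose characteristic function is $\widehat{\mu_p}(t) = e^{-|t|^p}$. From such a one-dimensional object one gets, for each coordinate, a positive-definite kernel, and the whole construction will be assembled coordinatewise.

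The key computation is the following. Fix $x, y \in \mathbb{R}^d$ and consider the real-valued function on the probability space $(\mathbb{R}^d, \mu_p^{\otimes d})$ given by $\omega \mapsto \cos\langle \omega, x\rangle$ and $\omega \mapsto \sin\langle\omega,x\rangle$. Using $\int e^{i\langle \omega, z\rangle}\, d\mu_p^{\otimes d}(\omega) = \prod_{j=1}^d e^{-|z_j|^p} = e^{-\|z\|_p^p}$, one gets
$$
\int_{\mathbb{R}^d} \bigl(\cos\langle\omega,x\rangle - \cos\langle\omega,y\rangle\bigr)^2 + \bigl(\sin\langle\omega,x\rangle - \sin\langle\omega,y\rangle\bigr)^2 \, d\mu_p^{\otimes d}(\omega) = 2 - 2 e^{-\|x-y\|_p^p}.
$$
So the map $x \mapsto \bigl(\cos\langle\cdot,x\rangle, \sin\langle\cdot,x\rangle\bigr) \in L^2(\mu_p^{\otimes d})$ realizes the metric $\sqrt{2 - 2e^{-\|x-y\|_p^p}}$, which is not quite what we want. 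To remove the exponential, I would precompose with a scaling: replace $x$ by $\lambda x$ for small $\lambda > 0$, giving squared distance $2 - 2e^{-\lambda^p \|x-y\|_p^p}$, and then rescale the ambient space by $\lambda^{-p/2}$, obtaining $\lambda^{-p}(2 - 2e^{-\lambda^p\|x-y\|_p^p}) \to 2\|x-y\|_p^p$ as $\lambda \to 0$. Taking a direct sum over a sequence $\lambda_n \to 0$ with suitable weights, or more cleanly integrating the scaling parameter against the measure $\lambda^{-1-p}\,d\lambda$ on $(0,\infty)$, collapses the family into a single map $f\colon \mathbb{R}^d \to \ell_2(\mathbb{N})$ (after choosing an orthonormal basis of the separable Hilbert space $L^2$) with $\|f(x)-f(y)\|_2^2$ equal to a fixed constant times $\|x-y\|_p^p$; absorbing the constant into $f$ finishes the identity. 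I expect the main obstacle to be the clean bookkeeping in this last limiting/integration step — one must check that the relevant integral $\int_0^\infty (2 - 2e^{-\lambda^p s})\lambda^{-1-p}\,d\lambda$ converges and is a finite positive multiple of $s$ for every $s = \|x-y\|_p^p \ge 0$, and that the resulting $L^2$ space is separable so that it can be identified with $\ell_2(\mathbb{N})$; both are routine but are the only places where anything can go wrong.

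An alternative, slightly slicker route avoids the scaling limit: one uses instead the kernel identity
$$
\|x-y\|_p^p = c_p \int_{\mathbb{R}^d} \frac{1 - \cos\langle\omega, x-y\rangle}{\|\omega\|^{?}}\cdots
$$
coming from the Lévy–Khintchine representation of the $p$-stable Lévy process, but this again reduces to the same positive-definiteness input, so I would present the direct-sum-of-stable-characteristic-functions argument above as the cleanest. Either way, the whole proof is: (i) quote Schoenberg for positive definiteness of $e^{-|t|^p}$, (ii) build the Gaussian-type Hilbert space feature map from it, (iii) do the one-line characteristic-function computation, and (iv) integrate out the scale to pass from $2 - 2e^{-\|x-y\|_p^p}$ to $\|x-y\|_p^p$.
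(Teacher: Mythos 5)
Your starting point (positive definiteness of $e^{-|t|^p}$ via the $p$-stable characteristic function, and the cosine/sine feature map giving squared $L^2$-distance $2-2e^{-\|x-y\|_p^p}$) is fine and is indeed the substance behind the paper's citation of Schoenberg. The gap is in step (iv), and it is not just bookkeeping: the integral you propose diverges, and in fact no weighting over scales can give the exact identity. Writing $s=\|x-y\|_p^p$ and substituting $u=\lambda^p$, your integral becomes $\tfrac{1}{p}\int_0^\infty (1-e^{-us})\,u^{-2}\,du$, whose integrand behaves like $s/u$ as $u\to 0$, so it is infinite for every $s>0$. More fundamentally, for any nonnegative measure $\mu$ on $(0,\infty)$ (discrete ``suitable weights'' included) the function $s\mapsto \int (1-e^{-\lambda^p s})\,d\mu(\lambda)$ is strictly concave unless $\mu=0$ (differentiate twice under the integral), so it can never equal $c\,s$ with $c>0$. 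The exponent $1$ sits exactly at the boundary where the subordination formula $s^{\alpha}=c_\alpha\int_0^\infty(1-e^{-us})u^{-1-\alpha}\,du$, valid for $0<\alpha<1$, breaks down; a linear function arises only as a ``drift'' term, never from the jump integral. So the route ``build embeddings at each scale and integrate the scale out'' cannot produce the exact equality $\|f(x)-f(y)\|_2^2=\|x-y\|_p^p$.

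There are two standard repairs, both close to what you already wrote. (a) Abstract route: your construction shows $e^{-t\|x-y\|_p^p}$ is positive definite for every $t>0$, hence $\|x-y\|_p^p=\lim_{t\to 0^+} t^{-1}\bigl(1-e^{-t\|x-y\|_p^p}\bigr)$ is a pointwise limit of negative definite kernels and is therefore negative definite; Schoenberg's theorem (a symmetric negative definite kernel vanishing on the diagonal is exactly a squared Hilbert distance) then gives $f$. The limit must be taken at the level of the matrix inequalities defining negative definiteness, not at the level of the maps themselves --- that is the bridge your write-up is missing. (b) Concrete route: the ``alternative'' you dismissed actually yields the exact identity directly and needs no limit. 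For $0<p<2$ one has $|a|^p=C_p\int_{\R}\frac{1-\cos(ta)}{|t|^{1+p}}\,dt$ (convergent at both ends since $1-\cos(ta)=O(t^2)$ near $0$), so defining, coordinatewise, $u\mapsto C_p^{1/2}\,\bigl(t\mapsto (e^{itu}-1)/|t|^{(1+p)/2}\bigr)\in L^2(\R;\mathbb{C})$ (the subtraction of $1$ makes the function square-integrable) and taking the orthogonal direct sum over the $d$ coordinates gives $\|f(x)-f(y)\|_2^2=\sum_j C_p\int_{\R}\frac{2-2\cos\bigl(t(x_j-y_j)\bigr)}{|t|^{1+p}}\,dt=\|x-y\|_p^p$ after fixing the constant; separability of $L^2$ identifies the target with $\ell_2(\mathbb{N})$. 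The paper itself gives no proof and simply cites Schoenberg's 1937 result, which is exactly statement (a)/(b) above; with either repair your argument becomes a correct self-contained proof of it.
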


This embedding allows to use an ANN data structure for $\ell_2$ with approximation $c$ to
get an ANN data structure for $\ell_p$ with approximation $c^{2 / p}$. However, for this
we need to make the embedding computationally efficient. In particular, the target
must be finite-dimensional. This can be done, see~\cite{Nguyen-thesis} for details. As a result,
for the $\ell_p$ distance for $1 \leq p < 2$, we are able to get the result
similar to the one given by Theorem~\ref{t:tradeoff}, where in~(\ref{tradeoff_formula}) $c^2$ is replaced with $c^p$ everywhere.

\subsubsection{Randomized embeddings}
\label{random_embeddings}

It would be highly desirable to utilize the fact that every metric embeds well into $\ell_\infty$ (Theorems~\ref{metric_linfty} and~\ref{norm_linfty})
together with the ANN data structure for $\ell_\infty$ from Section~\ref{sec_linfty}.
However, as discussed above, spaces as simple as $(\Rbb^d, \ell_1)$ or $(\Rbb^d, \ell_2)$
require the target $\ell_\infty$ to have $2^{\Omega(d)}$ dimensions
to be embedded with small distortion. It turns out, this can be remedied by allowing embeddings to be \emph{randomized}.
In what follows, we will consider the case of $(\Rbb^d, \ell_1)$, and then generalize the construction to other metrics.

The randomized embedding of $(\Rbb^d, \ell_1)$ into $(\Rbb^d, \ell_\infty)$ is defined as follows: we generate $d$ i.i.d.\ samples $u_1$, $u_2$, \ldots, $u_d$
from the exponential distribution with parameter $1$, and then the embedding $f$ maps
a vector $x \in \Rbb^d$ into
$$
\left(\frac{x_1}{u_1}, \frac{x_2}{u_2}, \ldots, \frac{x_d}{u_d}\right).
$$
Thus, the resulting embedding is linear. Besides that, it is extremely efficient to store ($d$ numbers) and apply ($O(d)$ time).

Let us now understand how $\|f(x)\|_\infty$ is related to $\|x\|_1$. The analysis uses (implicitly) the \emph{min-stability}
property of the exponential distribution. One has for every $t > 0$:

\[  \mathrm{Pr}_f[\|f(x)\|_\infty \leq t] = \prod_{i=1}^d \mathrm{Pr}\left[\frac{|x_i|}{u_i} \leq t\right] 
  = \prod_{i=1}^d \mathrm{Pr}\left[u_i \geq \frac{|x_i|}{t}\right]
  = \prod_{i=1}^d e^{-|x_i| / t}
  = e^{-\|x\|_1 / t}.
\]

The random variable $\|f(x)\|_\infty$ does not have a finite first moment, however its mode is in the point $t = \|x\|_1$,
which allows us to use $\|f(x)\|_\infty$ to estimate $\|x\|_1$.
It is immediate to show that for every $\delta > 0$, there exist $C_1, C_2 > 1$ with $C_1 = O(\log(1 / \delta))$
and $C_2 = O(1 / \delta)$
such that for every $x$, one has:
\begin{equation}
  \label{exp_lower_tail}
\mathrm{Pr}_f\left[\|f(x)\|_\infty \geq \frac{\|x\|_1}{C_1}\right] \geq 1 - \delta
\end{equation}
and
\begin{equation}
  \label{exp_upper_tail}
\mathrm{Pr}_f\left[\|f(x)\|_\infty \leq C_2 \cdot \|x\|_1\right] \geq 1 - \delta
\end{equation}

Thus, the map $f$ has distortion $O\left(\frac{\log(1 / \delta)}{\delta}\right)$ with probability $1 - \delta$.
However, unlike the deterministic case, the randomized guarantees~(\ref{exp_lower_tail}) and~(\ref{exp_upper_tail}) are not sufficient for the reduction between
ANN data structures (if $\delta \gg 1 / n$).
This is because the lower bound on $\|f(x)\|_\infty$ must apply simultaneously to all ``far'' points.
In order to obtain a desired reduction, we need to use slightly different parameters. Specifically,
for $0 < \eps < 1$ one has:
$$
\mathrm{Pr}_f\left[\|f(x)\|_\infty \geq \Omega\left(\frac{\|x\|_1}{\log n}\right)\right] \geq 1 - \frac{1}{10 n}
$$
and
$$
\mathrm{Pr}_f\left[\|f(x)\|_\infty \leq O\left(\frac{\|x\|_1}{\eps \cdot \log n}\right)\right] \geq n^{-\eps}.
$$
This allows us to reduce the $(c / \eps, r)$-ANN problem over $(\Rbb^d, \ell_1)$
to $n^{O(\eps)}$ instances of the $(c, r')$-ANN problem over $(\Rbb^d, \ell_\infty)$.
Indeed, we sample $n^{O(\eps)}$ i.i.d.\ maps $f_i$ as described above and solve the ANN problem over $\ell_\infty$
on the image of $f_i$. Far points remain being far with probability $1 - 1 / 10 n$ each.
Using the linearity of expectation and the Markov inequality,
we observe that, with probability at least $0.9$, \emph{no} far point come close enough to the query point.
At the same time, with probability at least $n^{-\eps}$, the near neighbor does not move too far away, so, with high probability,
at least one of the $n^{O(\eps)}$ data structures succeeds. This reduction is quite similar to the use of Locality-Sensitive Hashing in Section~\ref{ss:lsh}.

As a result, we get an ANN data structure for $(\Rbb^d, \ell_1)$ with approximation $O\left(\frac{\log \log d}{\eps^2}\right)$, query time
$O(d n^{\eps})$ and space $O(dn^{1 + \eps})$. This is worse than the best ANN data structure for $\ell_1$ based on
(data-dependent) space partitions. However, the technique we used is very versatile and generalizes easily to many other distances.
The $\ell_1$ embedding was first used in~\cite{andoni2009overcoming}. Later, it was generalized~\cite{andoni2009nearest} to $\ell_p$ spaces for $p \geq 1$.
To get such an embedding, one can divide every coordinate by the $(1 / p)$-th power of an exponential random variable.
Finally, in~\cite{andoni2017approximate} the same technique has been
shown to work for Orlicz norms and top-$k$ norms, which we define next.

\begin{definition}
  Let $\psi \colon [0; +\infty) \to [0; +\infty)$ be a non-negative monotone increasing convex function with $\psi(0) = 0$. Then,
      an \emph{Orlicz norm} $\|\cdot\|_{\psi}$ over $\Rbb^d$ is given by its unit ball $K_{\psi}$, defined as follows:
      $$
      K_{\psi} = \left\{x \in \Rbb^d \,\middle|\, \sum_{i=1}^d \psi(|x_i|) \leq 1\right\}.
      $$
\end{definition}

Clearly, $\ell_p$ norm for $p < \infty$ is Orlicz for $\psi(t) = t^p$.

\begin{definition}
  For $1 \leq k \leq d$, we define the \emph{top-$k$ norm} of a vector from $\Rbb^d$ as the sum of $k$ largest absolute values of the
  coordinates.
\end{definition}

The top-$1$ norm is simply $\ell_\infty$, while top-$d$ corresponds to $\ell_1$.

To embed an Orlicz norm $\|\cdot\|_\psi$ into $\ell_\infty$, we divide the coordinates using a random variable $X$
with the c.d.f.\ $F_X(t) = \mathrm{Pr}[X \leq t] = 1 - e^{-\psi(t)}$. To embed the top-$k$ norm, we use a truncated exponential distribution.
All of the above embeddings introduce only a constant distortion.

Let us note that for the $\ell_p$ norms one can achieve approximation $2^{O(p)}$~\cite{naor2006approximate,bartal2015approximate}, which is an improvement upon the above $O(\log \log d)$
bound if $p$ is sufficiently small.

\subsection{ANN for direct sums}
\label{direct_sums_ann}

In this section we describe a vast generalization of the ANN data structure for $\ell_\infty$ from Section~\ref{sec_linfty}.
Namely, we will be able to handle \emph{direct sums} of metric spaces.

\begin{definition}
  Let $M_1 = (X_1, \d_1)$, $M_2 = (X_2, \d_2)$, \ldots, $M_k = (X_k, \d_k)$ be metric spaces and let
  $\|\cdot\|$ be a norm over $\Rbb^k$.
  Then the \emph{$\|\cdot\|$-direct sum} of $M_1$, $M_2$, \ldots, $M_k$ denoted by $\left(\bigoplus_{i=1}^k M_i\right)_{\|\cdot\|}$ is a metric space defined as follows. The ground set is the Cartesian product $X_1 \times X_2 \times \ldots \times X_k$.
  The distance function $\d$ is given by the following formula.
  $$
  \d\left((x_1, x_2, \ldots, x_k), (y_1, y_2, \ldots, y_k)\right) = \left\|\left(\d_1(x_1, y_1), \d_2(x_2, y_2), \ldots, \d_k(x_k, y_k)\right)\right\|.
  $$
\end{definition}

It turns out that for many interesting norms $\|\cdot\|$ the following holds. If for metrics $M_1$, $M_2$, \ldots, $M_k$
there exist efficient ANN data structures, then the same holds for $\left(\bigoplus_{i=1}^k M_i\right)_{\|\cdot\|}$ (with a mild loss in the parameters).

The first result of this kind was shown in~\cite{indyk2002approximate}\footnote{In~\cite{indyk2002approximate}, a slightly weaker version of Theorem~\ref{linfty_product} has been stated.
First, it assumed \emph{deterministic} data structures for the spaces $M_i$. This is straightforward to address
by boosting the probability of success for data structures for $M_i$ using repetition.
Second, the resulting space bound~\cite{indyk2002approximate} was worse. An improvement
to the space bound has been described in Appendix~A of the arXiv version of~\cite{andoni2017approximate}.
Finally, the paper~\cite{indyk2002approximate} assumes ANN data structures for $M_i$ with a slightly stronger guarantee.
Namely, each point is assigned a \emph{priority} from $1$ to $n$, and if the near neighbor has priority $t$,
we must return a point with priority at most $t$. It is not hard to solve the version with priorities
using a standard ANN data structure (with $\log^{O(1)} n$ loss in space and query time).
A na\"{\i}ve reduction builds an ANN data structure for points with priority at most $t$ for every $t$.
Then, we can run a binary search over the resulting priority. However, this gives a \emph{linear in $n$} loss
in space. To rectify this, we use a standard data structure technique: the decomposition of an interval into $O(\log n)$
\emph{dyadic intervals}, i.e., intervals of the form $[2^k \cdot l + 1; 2^k \cdot (l + 1)]$ for integer $k, l$.. Thus, we build an ANN data structure for every dyadic interval of priorities.
This still gives $O(n)$ ANN data structures, however, each data point participates in at most $O(\log n)$ of them.
}
 for the case of $\ell_\infty$-direct sums.
In what follows we denote by $d$ the ``complexity'' of each metric $M_i$.
That is, we assume  it takes $O(d)$ time to compute the distance
between two points, and that a point requires $O(d)$ space to store.

\begin{theorem}
  \label{linfty_product}
  Let $c > 1$, $r > 0$ and $0 < \eps < 1$.
  Suppose that each $M_i$ admits a $(c, r)$-ANN data structure for $n$-point sets with space $n^{1 + \rho}$ (in addition to storing the dataset)
  for some $\rho \geq 0$ and
  query time $Q(n)$. Then, there exists a data structure
  for $(c', r)$-ANN over $\left(\bigoplus_{i=1}^k M_i\right)_{\infty}$, where
$c' = O\left(\frac{c \log \log n}{\eps}\right)$, 
the space is $O(n^{1 + \rho + \eps})$ (in addition to storing the dataset), and the
query time is $Q(n) \cdot \log^{O(1)} n + O(dk \log n)$.
\end{theorem}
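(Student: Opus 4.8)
The plan is to build a recursive decomposition of the dataset that mirrors the $\ell_\infty$ ANN data structure sketched in Section~\ref{sec_linfty}, but where the role of a single real coordinate is played by an entire metric factor $M_i$, and where the "threshold cut" on that coordinate is replaced by a call to the given $(c,r)$-ANN data structure for $M_i$. Concretely, I would first prove the analogue of the structural lemma from Section~\ref{sec_linfty}: for any $n$-point set $P$ in $\left(\bigoplus_{i=1}^k M_i\right)_\infty$ and any $0<\eps<1$, either (Case~1) there is a point $x$ and an index $i$ so that a ball of radius roughly $\frac{c\log\log n}{\eps}$ in the $i$-th coordinate, intersected with the already-fixed coordinates, captures an $\Omega(n)$-fraction of $P$, or (Case~2) there is a coordinate $i$ and a "split" of $P$ into $A,B,C$ according to proximity in $M_i$ — where $B$ is the "ambiguous" middle band of width $O(r)$ in the $M_i$ metric — satisfying the replication bound $\left(\frac{|A|+|B|}{n}\right)^{1+\eps}+\left(\frac{|B|+|C|}{n}\right)^{1+\eps}\le 1$ together with $|A|/n,|C|/n\ge\Omega(1/k)$. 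The proof of this lemma follows the $\ell_\infty$ argument almost verbatim, the only change being that "sorting by the $i$-th coordinate" is replaced by sorting by distance-in-$M_i$ to a suitably chosen center, which is where a factor of $c$ and a $\log\log n$ (rather than $\log\log d$) enters, since we must absorb the approximation of the inner data structures and we have $n$ — not $d$ — points to separate.

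Next I would describe the recursive construction. At each node we either (Case~1) store the center $x$, the index $i$, and a $(c,r)$-ANN data structure for $M_i$ restricted to the captured cluster, then recurse on $P$ minus the cluster; or (Case~2) recurse on $A\cup B$ and on $B\cup C$. We stop when $|P|=O(1)$. The query procedure at a Case~1 node queries the $M_i$-data structure on the $i$-th coordinate of $q$: if it reports a point within the enlarged radius we are essentially done (we return any point of the stored cluster, using the $\ell_\infty$ structure to argue all coordinates are then close), otherwise we recurse on the remainder; at a Case~2 node we query the $M_i$-data structure to decide whether $q$ is on the $A$-side or the $C$-side and recurse into the branch guaranteed to contain the near neighbor. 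Correctness of the approximation factor $c'=O\!\left(\frac{c\log\log n}{\eps}\right)$ is then a telescoping argument: each Case~1 step, when it "fires", certifies that the query and its near neighbor agree up to $O(c r)$ in the resolved coordinate and, over the $\le k$ coordinates and $O(\log n)$-deep recursion, the total blow-up is controlled by the radius in the structural lemma; Case~2 steps never increase the approximation, they only route. The space bound $O(n^{1+\rho+\eps})$ comes exactly as in Section~\ref{sec_linfty}: the replication bound~(\ref{replication_bound}) is precisely what guarantees that summing $n_v^{1+\rho}$ over all nodes $v$ (where $n_v$ is the subtree size and the $n_v^{1+\rho}$ accounts for the inner data structure stored there) telescopes to $O(n^{1+\rho+\eps})$, after noting $(1+\rho)(1+\eps)\le 1+\rho+O(\eps)$ for bounded $\rho$. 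The query time is $Q(n)\cdot(\text{tree depth})$ plus the $O(dk)$ cost of reading a point at each of $O(\log n)$ levels, giving $Q(n)\log^{O(1)} n + O(dk\log n)$ once one checks the recursion depth is $\log^{O(1)} n$ (each Case~1 step removes an $\Omega(1)$-fraction, each Case~2 step shrinks the "mass" by the replication inequality).

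The main obstacle, I expect, is the correctness argument for Case~1 — specifically verifying that when the query point genuinely has a near neighbor in $P$ but the Case~1 cluster-test on coordinate $i$ "misses" (the $M_i$ data structure reports nothing or something far), the near neighbor really does survive into the remainder $P\setminus B(x,R)$. This requires showing that if $q$'s near neighbor $p^*$ lay in the removed cluster, then $q$ itself would have been caught by the (slightly enlarged) test, using the triangle inequality in $M_i$ combined with the $\ell_\infty$ structure across coordinates; the subtlety is that $q\notin P$, so the structural lemma's guarantees (which were proven for point sets) do not directly apply to $q$, and one must argue by the same "averaging over coordinates" trick used in Section~\ref{sec:intro of the deterministic Hamming construction}, i.e., there is \emph{some} coordinate on which $q$ and $p^*$ are within $\d(q,p^*)$, and the recursion is set up so that this coordinate's test eventually fires correctly. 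Making this routing argument precise across the interaction of Case~1 and Case~2 nodes — and bounding the cumulative approximation loss by $O(c\log\log n/\eps)$ rather than something larger — is the crux; everything else (the structural lemma, the space telescoping, the depth bound) is a faithful adaptation of the $\ell_\infty$ case together with the bookkeeping from the footnote about priorities.
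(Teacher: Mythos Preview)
The paper does not prove this theorem; it is stated as a result from~\cite{indyk2002approximate} together with the modifications described in the accompanying footnote (priorities via dyadic intervals, boosting of the inner randomized data structures, and the improved space bound from the appendix of~\cite{andoni2017approximate}). So there is no ``paper's own proof'' to compare against beyond those pointers. Your high-level plan---lift the recursive $\ell_\infty$ decomposition of Section~\ref{sec_linfty} by letting each real coordinate become a metric factor $M_i$---is indeed the strategy of the cited work.

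That said, your concrete instantiation has a gap. You propose to recover the structural lemma by ``sorting by distance-in-$M_i$ to a suitably chosen center'' and then running the $\ell_\infty$ argument on those real numbers. But Case~1 of the $\ell_\infty$ lemma produces a small $\ell_\infty$-ball in the \emph{transformed} coordinates $f_i(p)=\d_i(p_i,c_i)$; pulling this back to the product space gives a product of \emph{annuli} in the $M_i$'s, not a product of balls, and a product of annuli has no useful diameter bound in $\bigl(\bigoplus_i M_i\bigr)_\infty$. So the ``return any point of the cluster'' step in your Case~1 does not certify closeness. Relatedly, your description never really uses the assumed $(c,r)$-ANN data structures for the $M_i$: in your Case~2, routing by a threshold on $\d_i(q_i,c_i)$ is a single distance computation, and in your Case~1 you likewise only need one distance to the stored center. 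If the inner data structures were genuinely unnecessary, the space bound would be $O(n^{1+\eps})$ with no $\rho$, contradicting the statement. In the actual construction, the inner $M_i$-data structures are used as the separation mechanism itself (this is also why the priority-augmented version in the footnote is needed), not merely as a post-hoc check; getting the replication bound to hold for that kind of separator, and controlling the $\log\log n$ blow-up across the recursion, is where the work lies.
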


Informally speaking, compared to data structures for $M_i$, the data structure for $(\bigoplus_i M_i)_{\infty}$
loses $\frac{\log \log n}{\eps}$ in approximation, $n^{\eps}$ in space, and $\log^{O(1)} n$ in query time.

Later, the result of~\cite{indyk2002approximate} was significantly extended~\cite{indyk2004approximate, andoni2009overcoming, andoni2009nearest, andoni2017approximate}, to support $\|\cdot\|$-direct sums where $\|\cdot\|$ is an $\ell_p$ norm, an Orlicz norm, or a top-$k$ norm.
The main insight is that we can use the randomized embeddings of various norms into $\ell_\infty$
developed in Section~\ref{random_embeddings}, to reduce the case of $\|\cdot\|$-direct sums to the case
of $\ell_\infty$-direct sums. Indeed, we described how to reduce
the ANN problem over several classes of norms to $n^{\eps}$ instances of
ANN over the $\ell_\infty$ distance at a cost of losing $O(1 / \eps)$ in the approximation. It is not hard to see that
the exact same approach can be used to reduce the ANN problem over $\left(\bigoplus_{i=1}^k M_i\right)_{\|\cdot\|}$
to $n^{\eps}$ instances of ANN over $\left(\bigoplus_{i=1}^k M_i\right)_{\infty}$ also at a cost of losing $O(1 / \eps)$
in approximation.

\subsection{Embeddings into direct sums}

As Section~\ref{direct_sums_ann} shows, for a large class of norms $\|\cdot\|$, we can get an efficient ANN data structure for any $\|\cdot\|$-direct sum
of metrics that admit efficient ANN data structures. This gives a natural approach to the ANN problem: embed a metric of interest into such a direct sum.

This approach has been successful in several settings.
In~\cite{indyk2002approximate}, the Fr\'{e}chet distance between two sequences of points in a metric space
is embedded into an $\ell_\infty$-direct sums of Fr\'{e}chet distances between shorter sequences.
Together with Theorem~\ref{linfty_product}, this was used to obtain an ANN data structure for the Fr\'{e}chet distance.
In~\cite{andoni2009overcoming}, it is shown how to embed the Ulam
metric (which is the edit distance between permutations of
length $d$)
into $\left(\bigoplus^{d} \left(\bigoplus^{O(\log d)} (\Rbb^d, \ell_1)\right)_{\ell_\infty}\right)_{\ell_2^2}$ with a constant distortion which gives an ANN data structure with doubly-logarithmic approximation.
At the same time, the Ulam distance requires distortion $\Omega\left(\frac{\log d}{\log \log d}\right)$
to embed into $\ell_1$~\cite{andoni2010computational}. This shows that (lower-dimensional) direct sums form a strictly more ``powerful'' class of spaces than $\ell_1$ or $\ell_2$. Finally, in~\cite{andoni2017approximate}, it is shown that \emph{any} symmetric norm over $\Rbb^d$
is embeddable into $\left(\bigoplus_{i=1}^{d^{O(1)}} \left(\bigoplus_{j=1}^d X_{ij}\right)_1\right)_\infty$
with constant distortion, where $X_{ij}$ is $\Rbb^d$ equipped with the top-$j$ norm.
Together with the results from Section~\ref{random_embeddings} and Section~\ref{direct_sums_ann}, this gives an ANN algorithm with approximation $(\log \log n)^{O(1)}$ for general \emph{symmetric}\footnote{Under permutations and negations
of the coordinates.} norms.

\subsection{ANN for general norms}

For \emph{general} $d$-dimensional norms, the best known ANN data
structure is obtained by combining Theorem~\ref{norm_l2}
with an efficient ANN data structure for $\ell_2$ (for example, the
one given by Theorem~\ref{t:optimal}). This approach gives
approximation $O(\sqrt{d / \eps})$ for space $d^{O(1)} \cdot n^{1 +
  \eps}$ and query time $d^{O(1)} \cdot n^{\eps}$ for every constant
$0 < \eps < 1$.  Very recently, the approximation $O(\sqrt{d / \eps})$
has been improved to $O\left(\frac{\log
  d}{\eps^2}\right)$~\cite{andoni2017data} for the same space and time
bounds if one is willing to relax the model of computation to the
\emph{cell-probe model}, where the query procedure is charged for
\emph{memory accesses}, but any computation is free.
This ANN data structure
heavily builds on a recent geometric result
from~\cite{naor2017spectral}: a bi-Lipschitz embedding (see
Definition~\ref{def_bilip}) of the shortest-path metric of \emph{any}
$N$-node expander graph~\cite{hoory2006expander} into an
\emph{arbitrary} $d$-dimensional normed space must have distortion at
least $\Omega\left(\log_d N\right)$.  At a very high level, this
non-embeddability result is used to claim that any large
bounded-degree graph, which \emph{does} embed into a normed space, can
not be an expander, and hence it must have a sparse cut. The existence
of the sparse cut is then used, via a duality argument, to build a
(data-dependent) random space partition family for a general
$d$-dimensional normed space. The latter family is used to
obtain the final data structure.

This approach can be further extended for several norms of interest to
obtain proper, \emph{time-efficient} ANN data structures, with even
better approximations. For instance, \cite{andoni2017data} show how to
get ANN with approximation $O(p)$ for the $\ell_p$ norms, improving
upon the bound $2^{O(p)}$
from~\cite{naor2006approximate,bartal2015approximate}.  Finally, for
the Schatten-$p$ norms of matrices, defined as the $\ell_p$ norm of
the vector of singular values, one obtains approximation $O(p)$ as
well, while the previous best approximation was polynomial in the
matrix size (by relating the Schatten-$p$ norm to the Frobenius norm).

\paragraph{Acknowledgements} The authors would like to thank Assaf Naor, Tal Wagner,  Erik Waingarten  and Fan Wei for many helpful comments. This research was supported by NSF and Simons Foundation. 
\bibliographystyle{alpha}
\bibliography{bibfile}

\newcommand{\etalchar}[1]{$^{#1}$}
\begin{thebibliography}{ANN{\etalchar{+}}17b}

\bibitem[AC09]{AC-fastJL}
Nir Ailon and Bernard Chazelle.
\newblock The fast {Johnson}--{Lindenstrauss} transform and approximate nearest
  neighbors.
\newblock {\em SIAM J. Comput.}, 39(1):302--322, 2009.

\bibitem[ACP08]{andoni2008hardness}
Alexandr Andoni, Dorian Croitoru, and Mihai Patrascu.
\newblock Hardness of nearest neighbor under l-infinity.
\newblock In {\em Proceedings of the 49th Annual IEEE Symposium on Foundations
  of Computer Science}, pages 424--433. IEEE, 2008.

\bibitem[ACW16]{alman2016polynomial}
Josh Alman, Timothy~M Chan, and Ryan Williams.
\newblock Polynomial representations of threshold functions and algorithmic
  applications.
\newblock In {\em Proceedings of the 57th Annual IEEE Symposium on Foundations
  of Computer Science}, pages 467--476. IEEE, 2016.

\bibitem[Ahl17]{ahle2017optimal}
Thomas~Dybdahl Ahle.
\newblock Optimal las vegas locality sensitive data structures.
\newblock In {\em Proceedings of the 58th Annual IEEE Symposium on Foundations
  of Computer Science}. IEEE, 2017.

\bibitem[AI06]{andoni2006near}
Alexandr Andoni and Piotr Indyk.
\newblock Near-optimal hashing algorithms for approximate nearest neighbor in
  high dimensions.
\newblock In {\em Proceedings of the 47th Annual IEEE Symposium on Foundations
  of Computer Science}, pages 459--468. IEEE, 2006.

\bibitem[AI08]{andoni2008near}
Alexandr Andoni and Piotr Indyk.
\newblock Near-optimal hashing algorithms for approximate nearest neighbor in
  high dimensions.
\newblock {\em Communications of the ACM}, 51(1):117, 2008.

\bibitem[AIK09]{andoni2009overcoming}
Alexandr Andoni, Piotr Indyk, and Robert Krauthgamer.
\newblock Overcoming the $\ell_1$ non-embeddability barrier: algorithms for
  product metrics.
\newblock In {\em Proceedings of the twentieth Annual ACM-SIAM Symposium on
  Discrete Algorithms}, pages 865--874. SIAM, 2009.

\bibitem[AINR14]{andoni2014beyond}
Alexandr Andoni, Piotr Indyk, Huy~L Nguy{\^{e}}n, and Ilya Razenshteyn.
\newblock Beyond locality-sensitive hashing.
\newblock In {\em Proceedings of the Twenty-Fifth Annual ACM-SIAM Symposium on
  Discrete Algorithms}, pages 1018--1028. SIAM, 2014.

\bibitem[AK10]{andoni2010computational}
Alexandr Andoni and Robert Krauthgamer.
\newblock The computational hardness of estimating edit distance.
\newblock {\em SIAM Journal on Computing}, 39(6):2398--2429, 2010.

\bibitem[AL13]{AL13}
Nir Ailon and Edo Liberty.
\newblock An almost optimal unrestricted fast johnson-lindenstrauss transform.
\newblock {\em {ACM} Transactions on Algorithms}, 9(3):21, 2013.

\bibitem[ALRW17]{andoni2017optimal}
Alexandr Andoni, Thijs Laarhoven, Ilya Razenshteyn, and Erik Waingarten.
\newblock Optimal hashing-based time-space trade-offs for approximate near
  neighbors.
\newblock In {\em Proceedings of the Twenty-Eighth Annual ACM-SIAM Symposium on
  Discrete Algorithms}, pages 47--66. SIAM, 2017.

\bibitem[AM93]{arya1993approximate}
Sunil Arya and David~M Mount.
\newblock Approximate nearest neighbor queries in fixed dimensions.
\newblock In {\em Proceedings of the Fourth Annual ACM-SIAM Symposium on
  Discrete Algorithms}, volume~93, pages 271--280, 1993.

\bibitem[AMN{\etalchar{+}}98]{arya1998optimal}
Sunil Arya, David~M Mount, Nathan~S Netanyahu, Ruth Silverman, and Angela~Y Wu.
\newblock An optimal algorithm for approximate nearest neighbor searching fixed
  dimensions.
\newblock {\em Journal of the ACM (JACM)}, 45(6):891--923, 1998.

\bibitem[And09]{andoni2009nearest}
Alexandr Andoni.
\newblock {\em Nearest neighbor search: the old, the new, and the impossible}.
\newblock PhD thesis, Massachusetts Institute of Technology, 2009.

\bibitem[ANN{\etalchar{+}}17a]{andoni2017data}
Alexandr Andoni, Assaf Naor, Aleksandar Nikolov, Ilya Razenshteyn, and Erik
  Waingarten.
\newblock Data-dependent hashing via nonlinear spectral gaps.
\newblock {\em Manuscript}, 2017.

\bibitem[ANN{\etalchar{+}}17b]{andoni2017approximate}
Alexandr Andoni, Huy~L Nguy{\^{e}}n, Aleksandar Nikolov, Ilya Razenshteyn, and
  Erik Waingarten.
\newblock Approximate near neighbors for general symmetric norms.
\newblock In {\em Proceedings of the 49th Annual ACM SIGACT Symposium on Theory
  of Computing}, pages 902--913. ACM, 2017.

\bibitem[APRS16]{ahle2016complexity}
Thomas~D Ahle, Rasmus Pagh, Ilya Razenshteyn, and Francesco Silvestri.
\newblock On the complexity of maximum inner product search.
\newblock In {\em Proc. 8th 35th ACM Symposium on Principles of Database
  Systems (PODS)}, 2016.

\bibitem[AR15]{andoni2015optimal}
Alexandr Andoni and Ilya Razenshteyn.
\newblock Optimal data-dependent hashing for approximate near neighbors.
\newblock In {\em Proceedings of the forty-seventh annual ACM symposium on
  Theory of computing}, pages 793--801. ACM, 2015.

\bibitem[AR16]{andoni16tight}
Alexandr Andoni and Ilya~P. Razenshteyn.
\newblock Tight lower bounds for data-dependent locality-sensitive hashing.
\newblock In {\em Proceedings of the 32nd International Symposium on
  Computational Geometry, SoCG 2016, June 14-18, 2016, Boston, MA, {USA}},
  pages 9:1--9:11, 2016.

\bibitem[Bal97]{ball1997elementary}
Keith Ball.
\newblock An elementary introduction to modern convex geometry.
\newblock {\em Flavors of geometry}, 31:1--58, 1997.

\bibitem[BC05]{brinkman2005impossibility}
Bo~Brinkman and Moses Charikar.
\newblock On the impossibility of dimension reduction in ${\ell_1}$.
\newblock {\em Journal of the ACM (JACM)}, 52(5):766--788, 2005.

\bibitem[Ber93]{bern1993approximate}
Marshall Bern.
\newblock Approximate closest-point queries in high dimensions.
\newblock {\em Information Processing Letters}, 45(2):95--99, 1993.

\bibitem[BG15]{bartal2015approximate}
Yair Bartal and Lee-Ad Gottlieb.
\newblock Approximate nearest neighbor search for $\ell_p$-spaces ($2 < p <
  \infty$) via embeddings.
\newblock Available as arXiv:1512.01775, 2015.

\bibitem[BGMZ97]{broder1997syntactic}
Andrei~Z Broder, Steven~C Glassman, Mark~S Manasse, and Geoffrey Zweig.
\newblock Syntactic clustering of the web.
\newblock {\em Computer Networks and ISDN Systems}, 29(8-13):1157--1166, 1997.

\bibitem[Bou85]{bourgain1985lipschitz}
Jean Bourgain.
\newblock On lipschitz embedding of finite metric spaces in hilbert space.
\newblock {\em Israel Journal of Mathematics}, 52(1):46--52, 1985.

\bibitem[Bro97]{broder1997resemblance}
Andrei~Z Broder.
\newblock On the resemblance and containment of documents.
\newblock In {\em Compression and Complexity of Sequences 1997. Proceedings},
  pages 21--29. IEEE, 1997.

\bibitem[Car11]{caratheodory1911variabilitatsbereich}
Constantin Carath{\'e}odory.
\newblock {\"U}ber den variabilit{\"a}tsbereich der fourierÕschen konstanten
  von positiven harmonischen funktionen.
\newblock {\em Rendiconti Del Circolo Matematico di Palermo (1884-1940)},
  32(1):193--217, 1911.

\bibitem[Cha98]{chan1998approximate}
Timothy~M Chan.
\newblock Approximate nearest neighbor queries revisited.
\newblock {\em Discrete \& Computational Geometry}, 20(3):359--373, 1998.

\bibitem[Cha02]{charikar2002similarity}
Moses~S Charikar.
\newblock Similarity estimation techniques from rounding algorithms.
\newblock In {\em Proceedings of the thirty-fourth annual ACM symposium on
  Theory of computing}, pages 380--388. ACM, 2002.

\bibitem[Cla88]{clarkson1988randomized}
Kenneth~L Clarkson.
\newblock A randomized algorithm for closest-point queries.
\newblock {\em SIAM Journal on Computing}, 17(4):830--847, 1988.

\bibitem[Cla94]{clarkson1994algorithm}
Kenneth~L Clarkson.
\newblock An algorithm for approximate closest-point queries.
\newblock In {\em Proceedings of the tenth annual symposium on Computational
  geometry}, pages 160--164. ACM, 1994.

\bibitem[Cla06]{clarkson2006nearest}
Kenneth~L Clarkson.
\newblock Nearest-neighbor searching and metric space dimensions.
\newblock {\em Nearest-neighbor methods for learning and vision: theory and
  practice}, pages 15--59, 2006.

\bibitem[CLRS01]{CLRS}
Thomas~H. Cormen, Charles~E. Leiserson, Ronald~L. Rivest, and Clifford Stein.
\newblock {\em Introduction to Algorithms}.
\newblock MIT Press, 2nd edition, 2001.

\bibitem[Cop82]{coppersmith1982rapid}
Don Coppersmith.
\newblock Rapid multiplication of rectangular matrices.
\newblock {\em SIAM Journal on Computing}, 11(3):467--471, 1982.

\bibitem[DIIM04]{datar2004locality}
Mayur Datar, Nicole Immorlica, Piotr Indyk, and Vahab~S Mirrokni.
\newblock Locality-sensitive hashing scheme based on p-stable distributions.
\newblock In {\em Proceedings of the twentieth annual symposium on
  Computational geometry}, pages 253--262. ACM, 2004.

\bibitem[DKS10]{DKS10}
Anirban Dasgupta, Ravi Kumar, and Tam{\'a}s Sarl{\'o}s.
\newblock A sparse johnson: Lindenstrauss transform.
\newblock In {\em Proceedings of the forty-second ACM symposium on Theory of
  computing}, pages 341--350. ACM, 2010.

\bibitem[FCI99]{farach1999approximate}
Martin Farach-Colton and Piotr Indyk.
\newblock Approximate nearest neighbor algorithms for hausdorff metrics via
  embeddings.
\newblock In {\em Proceedings of the 40th Annual IEEE Symposium on Foundations
  of Computer Science}, pages 171--179. IEEE, 1999.

\bibitem[FK09]{fefferman2009fitting}
Charles Fefferman and Bo'az Klartag.
\newblock Fitting a {$C^m$}-smooth function to data {I}.
\newblock {\em Annals of Mathematics}, pages 315--346, 2009.

\bibitem[FLM77]{figiel1977dimension}
Tadeusz Figiel, Joram Lindenstrauss, and Vitali~D Milman.
\newblock The dimension of almost spherical sections of convex bodies.
\newblock {\em Acta Mathematica}, 139(1):53--94, 1977.

\bibitem[Fr{\'e}06]{frechet1906quelques}
M~Maurice Fr{\'e}chet.
\newblock Sur quelques points du calcul fonctionnel.
\newblock {\em Rendiconti del Circolo Matematico di Palermo (1884-1940)},
  22(1):1--72, 1906.

\bibitem[GRS14]{guruswamiessential}
Venkatesan Guruswami, Atri Rudra, and Madhu Sudan.
\newblock Essential coding theory.
\newblock {\em Draft available at http://www. cse. buffalo.
  edu/atri/courses/coding-theory/book/index. html}, 2014.

\bibitem[GUV09]{guruswami2009unbalanced}
Venkatesan Guruswami, Christopher Umans, and Salil Vadhan.
\newblock Unbalanced expanders and randomness extractors from parvaresh--vardy
  codes.
\newblock {\em Journal of the ACM (JACM)}, 56(4):20, 2009.

\bibitem[HLW06]{hoory2006expander}
Shlomo Hoory, Nathan Linial, and Avi Wigderson.
\newblock Expander graphs and their applications.
\newblock {\em Bulletin of the American Mathematical Society}, 43(4):439--561,
  2006.

\bibitem[HPIM12]{har2012approximate}
Sariel Har-Peled, Piotr Indyk, and Rajeev Motwani.
\newblock Approximate nearest neighbor: Towards removing the curse of
  dimensionality.
\newblock {\em Theory of computing}, 8(1):321--350, 2012.

\bibitem[IM98]{indyk1998approximate}
Piotr Indyk and Rajeev Motwani.
\newblock Approximate nearest neighbors: towards removing the curse of
  dimensionality.
\newblock In {\em Proceedings of the thirtieth annual ACM symposium on Theory
  of computing}, pages 604--613. ACM, 1998.

\bibitem[Ind00a]{indyk2000dimensionality}
Piotr Indyk.
\newblock Dimensionality reduction techniques for proximity problems.
\newblock In {\em Proceedings of the eleventh annual ACM-SIAM symposium on
  Discrete algorithms}, pages 371--378. SIAM, 2000.

\bibitem[Ind00b]{indyk2000high}
Piotr Indyk.
\newblock {\em High-dimensional computational geometry}.
\newblock PhD thesis, stanford university, 2000.

\bibitem[Ind01]{indyk2001approximate}
Piotr Indyk.
\newblock On approximate nearest neighbors under $\ell_\infty$ norm.
\newblock {\em Journal of Computer and System Sciences}, 63(4):627--638, 2001.

\bibitem[Ind02]{indyk2002approximate}
Piotr Indyk.
\newblock {Approximate nearest neighbor algorithms for Fr{\'e}chet distance via
  product metrics}.
\newblock In {\em Proceedings of the eighteenth annual symposium on
  Computational geometry}, pages 102--106. ACM, 2002.

\bibitem[Ind04]{indyk2004approximate}
Piotr Indyk.
\newblock Approximate nearest neighbor under edit distance via product metrics.
\newblock In {\em Proceedings of the fifteenth annual ACM-SIAM symposium on
  Discrete algorithms}, pages 646--650. SIAM, 2004.

\bibitem[IT03]{indyk2003fast}
Piotr Indyk and Nitin Thaper.
\newblock Fast image retrieval via embeddings.
\newblock {\em Workshop on Statistical and Computational Theories of Vision},
  2003.

\bibitem[JL84]{johnson1984extensions}
William~B Johnson and Joram Lindenstrauss.
\newblock Extensions of lipschitz mappings into a hilbert space.
\newblock {\em Contemporary mathematics}, 26(189-206):1, 1984.

\bibitem[JN09]{johnson2009johnson}
William~B Johnson and Assaf Naor.
\newblock The johnson-lindenstrauss lemma almost characterizes hilbert space,
  but not quite.
\newblock In {\em Proceedings of the twentieth Annual ACM-SIAM Symposium on
  Discrete Algorithms}, pages 885--891. SIAM, 2009.

\bibitem[Joh48]{john1948extremum}
Fritz John.
\newblock Extremum problems with inequalities as subsidiary conditions.
\newblock In {\em Studies and {E}ssays {P}resented to {R}. {C}ourant on his
  60th {B}irthday, {J}anuary 8, 1948}, pages 187--204. Interscience Publishers,
  Inc., New York, N. Y., 1948.

\bibitem[JS82]{johnson1982embedding}
William~B Johnson and Gideon Schechtman.
\newblock Embedding $\ell_p^m$ into $\ell_1^n$.
\newblock {\em Acta Mathematica}, 149(1):71--85, 1982.

\bibitem[JW13]{jayram2013optimal}
Thathachar~S Jayram and David~P Woodruff.
\newblock Optimal bounds for johnson-lindenstrauss transforms and streaming
  problems with subconstant error.
\newblock {\em ACM Transactions on Algorithms (TALG)}, 9(3):26, 2013.

\bibitem[Kap15]{kapralov2015smooth}
Michael Kapralov.
\newblock Smooth tradeoffs between insert and query complexity in nearest
  neighbor search.
\newblock In {\em Proceedings of the 34th ACM SIGMOD-SIGACT-SIGAI Symposium on
  Principles of Database Systems}, pages 329--342. ACM, 2015.

\bibitem[KKK16]{karppa2016faster}
Matti Karppa, Petteri Kaski, and Jukka Kohonen.
\newblock A faster subquadratic algorithm for finding outlier correlations.
\newblock In {\em Proceedings of the Twenty-Seventh Annual ACM-SIAM Symposium
  on Discrete Algorithms}, pages 1288--1305. SIAM, 2016.

\bibitem[Kle97]{kleinberg1997two}
Jon~M Kleinberg.
\newblock Two algorithms for nearest-neighbor search in high dimensions.
\newblock In {\em Proceedings of the twenty-ninth annual ACM symposium on
  Theory of computing}, pages 599--608. ACM, 1997.

\bibitem[KN14]{kn14-sparseJL}
Daniel~M. Kane and Jelani Nelson.
\newblock Sparser johnson-lindenstrauss transforms.
\newblock {\em J. {ACM}}, 61(1):4, 2014.

\bibitem[KOR00]{kushilevitz2000efficient}
Eyal Kushilevitz, Rafail Ostrovsky, and Yuval Rabani.
\newblock Efficient search for approximate nearest neighbor in high dimensional
  spaces.
\newblock {\em SIAM Journal on Computing}, 30(2):457--474, 2000.

\bibitem[KP12]{kapralov2012nns}
Michael Kapralov and Rina Panigrahy.
\newblock Nns lower bounds via metric expansion for l∞ and emd.
\newblock In {\em International Colloquium on Automata, Languages, and
  Programming}, pages 545--556. Springer, 2012.

\bibitem[Kur35]{kuratowski1935quelques}
Casimir Kuratowski.
\newblock Quelques probl{\`e}mes concernant les espaces m{\'e}triques
  non-s{\'e}parables.
\newblock {\em Fundamenta Mathematicae}, 25(1):534--545, 1935.

\bibitem[KW11]{kw11-rip}
Felix Krahmer and Rachel Ward.
\newblock New and improved johnson-lindenstrauss embeddings via the restricted
  isometry property.
\newblock {\em {SIAM} J. Math. Analysis}, 43(3):1269--1281, 2011.

\bibitem[LLR95]{linial1995geometry}
Nathan Linial, Eran London, and Yuri Rabinovich.
\newblock The geometry of graphs and some of its algorithmic applications.
\newblock {\em Combinatorica}, 15(2):215--245, 1995.

\bibitem[LT80]{lipton1980applications}
Richard~J Lipton and Robert~E Tarjan.
\newblock Applications of a planar separator theorem.
\newblock {\em SIAM Journal on Computing}, 9(3):615--627, 1980.

\bibitem[Mah14]{mahabadi2014approximate}
Sepideh Mahabadi.
\newblock Approximate nearest line search in high dimensions.
\newblock In {\em Proceedings of the Twenty-Sixth Annual ACM-SIAM Symposium on
  Discrete Algorithms}, pages 337--354. SIAM, 2014.

\bibitem[Mat97]{matousek1997embedding}
Ji{\v{r}}{\'\i} Matou{\v{s}}ek.
\newblock On embedding expanders into $\ell_p$ spaces.
\newblock {\em Israel Journal of Mathematics}, 102(1):189--197, 1997.

\bibitem[Mei93]{meiser1993point}
Stefan Meiser.
\newblock Point location in arrangements of hyperplanes.
\newblock {\em Information and Computation}, 106(2):286--303, 1993.

\bibitem[Mil99]{miltersen1999cell}
Peter~Bro Miltersen.
\newblock Cell probe complexity-a survey.
\newblock In {\em Proceedings of the 19th Conference on the Foundations of
  Software Technology and Theoretical Computer Science, Advances in Data
  Structures Workshop}, page~2, 1999.

\bibitem[MNP07]{motwani2007lower}
Rajeev Motwani, Assaf Naor, and Rina Panigrahy.
\newblock Lower bounds on locality sensitive hashing.
\newblock {\em SIAM Journal on Discrete Mathematics}, 21(4):930--935, 2007.

\bibitem[MP69]{minsky1969perceptrons}
Marvin Minsky and Seymour~A Papert.
\newblock {\em Perceptrons: An introduction to computational geometry}.
\newblock MIT press, 1969.

\bibitem[Nao17]{naor2017spectral}
Assaf Naor.
\newblock A spectral gap precludes low-dimensional embeddings.
\newblock In {\em Proceedings of the 33rd International Symposium on
  Computational Geometry ({SoCG}~'2017)}, 2017.

\bibitem[Ngu14]{Nguyen-thesis}
{Huy L.} Nguy{\^{e}}n.
\newblock {\em Algorithms for High Dimensional Data}.
\newblock PhD thesis, Princeton University, 2014.
\newblock Available at
  \url{http://arks.princeton.edu/ark:/88435/dsp01b8515q61f}.

\bibitem[NPW14]{npw14-rip}
Jelani Nelson, Eric Price, and Mary Wootters.
\newblock New constructions of {RIP} matrices with fast multiplication and
  fewer rows.
\newblock In {\em Proceedings of the Twenty-Fifth Annual {ACM-SIAM} Symposium
  on Discrete Algorithms, {SODA} 2014, Portland, Oregon, USA, January 5-7,
  2014}, pages 1515--1528, 2014.

\bibitem[NR06]{naor2006approximate}
Assaf Naor and Yuval Rabani.
\newblock On approximate nearest neighbor search in $\ell_p$, $p > 2$.
\newblock Manuscript, 2006.

\bibitem[NS07]{naor2007planar}
Assaf Naor and Gideon Schechtman.
\newblock Planar earthmover is not in $l_1$.
\newblock {\em SIAM Journal on Computing}, 37(3):804--826, 2007.

\bibitem[NSS95]{naor1995splitters}
Moni Naor, Leonard~J Schulman, and Aravind Srinivasan.
\newblock Splitters and near-optimal derandomization.
\newblock In {\em Proceedings of the 36th Annual IEEE Symposium on Foundations
  of Computer Science}, pages 182--191. IEEE, 1995.

\bibitem[OR07]{ostrovsky2007low}
Rafail Ostrovsky and Yuval Rabani.
\newblock Low distortion embeddings for edit distance.
\newblock {\em Journal of the ACM (JACM)}, 54(5):23, 2007.

\bibitem[OWZ14]{o2014optimal}
Ryan O'Donnell, Yi~Wu, and Yuan Zhou.
\newblock Optimal lower bounds for locality-sensitive hashing (except when q is
  tiny).
\newblock {\em ACM Transactions on Computation Theory (TOCT)}, 6(1):5, 2014.

\bibitem[Pag16]{pagh2016locality}
Rasmus Pagh.
\newblock Locality-sensitive hashing without false negatives.
\newblock In {\em Proceedings of the Twenty-Seventh Annual ACM-SIAM Symposium
  on Discrete Algorithms}, pages 1--9. SIAM, 2016.

\bibitem[Pan06]{panigrahy2006entropy}
Rina Panigrahy.
\newblock Entropy based nearest neighbor search in high dimensions.
\newblock In {\em Proceedings of the seventeenth annual ACM-SIAM symposium on
  Discrete algorithm}, pages 1186--1195. SIAM, 2006.

\bibitem[PS85]{preparata1985introduction}
Franco~P Preparata and Michael~Ian Shamos.
\newblock Introduction.
\newblock In {\em Computational Geometry}, pages 1--35. Springer, 1985.

\bibitem[Raz17]{razenshteyn2017high}
Ilya Razenshteyn.
\newblock {\em High-Dimensional Similarity Search and Sketching: Algorithms and
  Hardness}.
\newblock PhD thesis, Massachusetts Institute of Technology, 2017.

\bibitem[Sch37]{schoenberg1937certain}
Isaac~J Schoenberg.
\newblock On certain metric spaces arising from euclidean spaces by a change of
  metric and their imbedding in hilbert space.
\newblock {\em Annals of mathematics}, pages 787--793, 1937.

\bibitem[Sch42]{schoenberg1942positive}
I.~J. Schoenberg.
\newblock Positive definite functions on spheres.
\newblock {\em Duke Math. J.}, 9(1):96--108, 03 1942.

\bibitem[SDI06]{shakhnarovich2006nearest}
Gregory Shakhnarovich, Trevor Darrell, and Piotr Indyk.
\newblock {\em Nearest-Neighbor Methods in Learning and Vision: Theory and
  Practice (Neural Information Processing)}.
\newblock The MIT press, 2006.

\bibitem[Str69]{strassen1969gaussian}
Volker Strassen.
\newblock Gaussian elimination is not optimal.
\newblock {\em Numerische mathematik}, 13(4):354--356, 1969.

\bibitem[STS{\etalchar{+}}13]{sundaram2013streaming}
Narayanan Sundaram, Aizana Turmukhametova, Nadathur Satish, Todd Mostak, Piotr
  Indyk, Samuel Madden, and Pradeep Dubey.
\newblock Streaming similarity search over one billion tweets using parallel
  locality-sensitive hashing.
\newblock {\em Proceedings of the VLDB Endowment}, 6(14):1930--1941, 2013.

\bibitem[Val15]{valiant2015finding}
Gregory Valiant.
\newblock Finding correlations in subquadratic time, with applications to
  learning parities and the closest pair problem.
\newblock {\em Journal of the ACM (JACM)}, 62(2):13, 2015.

\bibitem[Wil05]{williams2005new}
Ryan Williams.
\newblock A new algorithm for optimal 2-constraint satisfaction and its
  implications.
\newblock {\em Theoretical Computer Science}, 348(2-3):357--365, 2005.

\bibitem[Wil12]{williams2012multiplying}
Virginia~Vassilevska Williams.
\newblock Multiplying matrices faster than coppersmith-winograd.
\newblock In {\em Proceedings of the forty-fourth annual ACM symposium on
  Theory of computing}, pages 887--898. ACM, 2012.

\bibitem[Wil18]{virgi}
V.~Williams.
\newblock On some fine-grained questions in algorithms and complexity.
\newblock {\em These proceedings}, 2018.

\bibitem[WLKC16]{wang2016learning}
Jun Wang, Wei Liu, Sanjiv Kumar, and Shih-Fu Chang.
\newblock Learning to hash for indexing big data: a survey.
\newblock {\em Proceedings of the IEEE}, 104(1):34--57, 2016.

\bibitem[Woj96]{wojtaszczyk1996banach}
Przemyslaw Wojtaszczyk.
\newblock {\em Banach spaces for analysts}, volume~25.
\newblock Cambridge University Press, 1996.

\bibitem[WSSJ14]{wang2014hashing}
Jingdong Wang, Heng~Tao Shen, Jingkuan Song, and Jianqiu Ji.
\newblock Hashing for similarity search: A survey.
\newblock {\em arXiv preprint arXiv:1408.2927}, 2014.

\end{thebibliography}
\end{document}